\documentclass[reprint,amsmath,amssymb,aps]{revtex4-2}

\usepackage{graphicx}
\usepackage{dcolumn}
\usepackage{bm}
\usepackage{array}
\usepackage{tabularx}
\usepackage{multirow}

\usepackage{hyperref}
\hypersetup{
  colorlinks=true,
  linkcolor=blue,
  citecolor=blue,
  urlcolor=blue,
  breaklinks=true
}

\usepackage[normalem]{ulem}
\usepackage{comment}
\usepackage{placeins}
\usepackage{enumitem}
\usepackage{relsize}
\usepackage{overpic}
\usepackage{ragged2e}

\usepackage{tikz}
\usetikzlibrary{positioning, arrows.meta}

\usepackage{subcaption}
\usepackage{amsbsy}

\newtheorem{lemma}{Lemma}
\newtheorem{theorem}{Theorem}
\providecommand{\qedsymbol}{\rule{0.7em}{0.7em}}
\newenvironment{proof}{\par\noindent\textit{Proof.}\ }{\hfill\qedsymbol\par}

\usepackage{ragged2e}
\usepackage{algorithm}
\usepackage{algpseudocode}

\begin{document}

\preprint{APS/123-QED}

\title{Mind the Gap: \\Where Analog Ising Machines Cease to Minimize the Ising Hamiltonian}

\author{E.M.~Hasantha~Ekanayake\textsuperscript{1*},
Arvind~R.~Venkatakrishnan\textsuperscript{2*},
Francesco~Bullo\textsuperscript{2},
Nikhil~Shukla\textsuperscript{1\#}}

\affiliation{%
\textnormal{\textsuperscript{}}University of Virginia, Charlottesville, VA, USA\\
\textnormal{\textsuperscript{2}}University of California Santa Barbara, Santa Barbara, CA, USA\\
\textnormal{\textsuperscript{*}Equal Contribution};\;
\textnormal{\textsuperscript{\#}Email: ns6pf@virginia.edu}
}

\begin{abstract}
The design of nonlinear dynamical systems whose gradient flows minimize the Ising Hamiltonian has emerged as a compelling paradigm for realizing Ising machines, forming the foundation of architectures including coherent Ising machines, simulated bifurcation machines, oscillator-based Ising machines, and dynamical Ising machines. Here, we identify a fundamental structural feature shared by these systems—a functional parameter gap defined by the separation between the destabilization of the trivial state and the stabilization of Ising-encoded states. We demonstrate that this separation creates a finite parameter interval in which convergence to an Ising-encoded solution is no longer functionally guaranteed, and the resulting evolution is dictated by the spectral structure of the Jacobian at bifurcation. Subsequently, by introducing a hybrid dynamical framework that reshapes the bifurcation topology, we establish a principled pathway for modulating this parameter gap. The parameter gap thus emerges as a unifying structural principle for the analysis, design and optimization of analog Ising machines.
\end{abstract}

\maketitle

\textit{Introduction}---The pursuit of efficient solutions to combinatorial optimization problems (COPs) remains a central challenge in computing. Ising machines—computational platforms that minimize the Ising Hamiltonian—have emerged as a promising paradigm. The expressive universality of the Ising model enables diverse combinatorial optimization problems to be encoded into a common energy-minimization framework, with solutions identified as ground-state configurations \cite{FBarahona_1982,PhysRevE.86.011116, 10.3389/fphy.2014.00005,10011425}.

A diverse range of Ising machines have been realized across physical domains and computational strategies. Of particular interest are those that harness the analog dynamics of physical systems, offering intrinsic parallelism and energy efficiency. Representative examples include Coherent Ising Machines (CIMs) \cite{PhysRevA.88.063853, articleMarandi,PhysRevLett.126.143901, 10.1063/5.0016140, 2023Bifurcation}, Simulated Bifurcation Machines (SBMs)~\cite{doi:10.1126/sciadv.aav2372}, Oscillator-based Ising Machines (OIMs)~\cite{Wang2021,Chou2019, Vaidya2022}, and the recently proposed Dynamical Ising Machines (DIMs)~\cite{cdc9-y234}. These platforms are distinguished by the nonlinear mechanisms governing their dynamics, ranging from Kuramoto-type phase coupling in OIMs and conjugate coupling in DIMs to effective $\chi^{(2)}$ and Kerr $\chi^{(3)}$ nonlinearities in CIMs and SBMs. The unifying principle across these architectures is the construction of a dynamical manifold whose ground state encodes the optimal Ising configuration.

Although widely referred to as ``Ising machines'', we show in this Letter that these systems do not strictly operate as such across all dynamical regimes. Strikingly, we uncover a previously overlooked regime—where the dynamics deviate from exact minimization of the Ising Hamiltonian—that emerges pervasively across a broad class of regularized Ising gradient flows encompassing DIMs, OIMs, CIMs, and SBMs. We refer to this regime as the \emph{parameter gap}. Most importantly, the system must traverse this parameter gap during operation, which can fundamentally undermine its intended role as a faithful Ising solver. To address this challenge, we introduce a hybrid framework designed to mitigate the effects of the non-Ising dynamical regime.


\textit{Mind the Gap}---To understand the notion of the parameter gap, we consider a general class of regularized phase-encoded Ising gradient flows of the form 
\begin{equation}
    \dot{\phi}_i=
    \sum_jW_{ij}f_{\mathrm{coup}}(\phi_i\pm\phi_j)
    +
    f_{\mathrm{reg}}(\phi_i,\eta)=-\nabla_{\phi_{i}} E(\phi,\eta;W), \label{eq:unified_form}
\end{equation}
where $f_{\mathrm{coup}}$ encodes a continuous relaxation of the spin-spin interaction. The sign in \(\phi_i\pm\phi_j\) is fixed by the model, with \(+\) denoting phase-sum coupling and \(-\) denoting phase-difference coupling. The term $f_{\mathrm{reg}}$ encodes the regularization. In this work, we restrict the phase-flow analysis to the class of models in which the regularization is implemented by a local second-harmonic term, \(f_{\mathrm{reg}}(\phi_i;\eta)=-\eta\sin(2\phi_i)\), which favors the binary phase manifold \(\phi_i\in\{0,\pi\}\). $W \in \mathbb{R}^{N\times N}$ is the weight matrix and \(\eta\) is the regularization parameter. Although the analysis accommodates nonuniform and mixed-sign couplings, as discussed in Supplementary Notes~1, 2, the simulations presented in the main text focus on antiferromagnetic spin interactions. For these simulations, $W_{ij} = -G_{ij}$, where $G$ is the signed adjacency matrix of the underlying graph, defined such that $G_{ij} = -1$ if an edge exists between nodes $i$ and $j$, and $G_{ij} = 0$ otherwise. This setting is relevant to a broad class of problems including MaxCut, graph coloring, and Maximum Independent Set~\cite{10.3389/fphy.2014.00005,Glover2022}. By construction, the minimizers of $E$ coincide with the minimizers of the Ising Hamiltonian. Noise is typically included in practice but is omitted here for clarity.  
\begin{figure*}[t]
  \centering
  {\renewcommand{\arraystretch}{1.5}%
  \setlength\arrayrulewidth{1.5pt}
  \begin{tabular}{|c|w{c}{4.0in}|c|}\hline
 \textbf{{\large Ising}} & 
    \multirow{2}{*}{\textbf{{\large Data}}}  &
     \multirow{2}{*}{\textbf{{\large Simulation}}} \\
    \textbf{{\large machine}} & & \\  \hline
    \multirow{1}{1em}{ \huge  \rotatebox[origin=c]{90}{{DIM}}}& 
     \hspace{-0.35in}\multirow{1}{*}{$\begin{aligned}
        &  \boldsymbol{\eta}: & &\;\; K_{s} \\
        &\boldsymbol{E(\phi, \eta, W)}:& &\;\; \frac{K}{2}\sum\limits_{i=1}^{N}\sum\limits_{j=1}^{N}W_{ij}\cos(\phi_{i} +\! \phi_{j}) -\!\frac{K_{s}}{2}\sum\limits_{i=1}^{N}\cos(2\phi_{i}) \\
        &\boldsymbol{J_f}(\phi^{\ast}, K_s):
        &&\;\;
        K J_{\text{DIM}}(\phi^{\ast})
        -\!
        2K_{s}\operatorname{diag}
         \big(
        \cos(2\phi^{\ast}_{1}),
        \ldots,
        \cos(2\phi_{N}^{\ast})
        \big)
        \\
        &\boldsymbol{\Delta_{\textbf{DIM}}(\mathcal{G})}:& &\;\; \dfrac{K}{2}\lambda_{\max}\big(J_{\text{DIM}}^{\{0,\pi\}}(\phi^{\ast})\big)+\! \frac{K}{2}\lambda_{\max}(-\!{D} -\! {W})
        \end{aligned}$} & 
         \raisebox{-1in}{\includegraphics[height=1.1in]{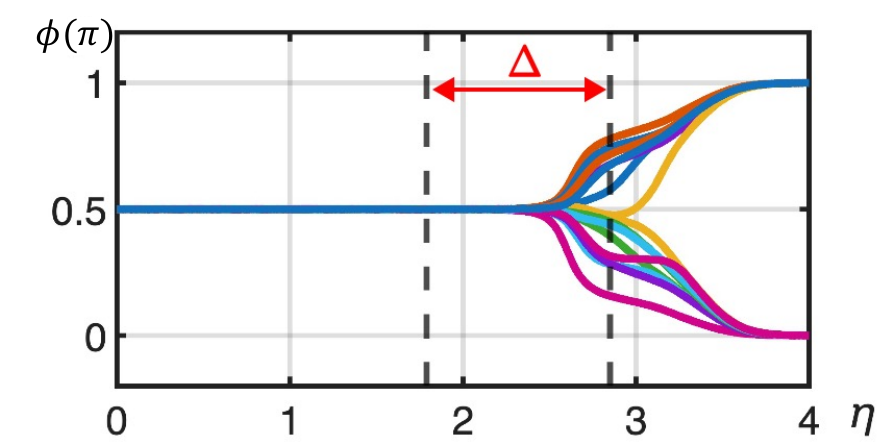}}\\ \hline
        \multirow{1}{1em}{ \huge \rotatebox[origin=c]{90}{{OIM}}}& 
        \hspace{-0.38in} \multirow{1}{*}{$\begin{aligned}
        &\hspace{-0.6mm}\boldsymbol{\eta}: & &\;\; K_{s} \\
        &\hspace{-0.6mm}\boldsymbol{E(\phi, \eta, W)}:& &\;\;  \frac{K}{2}\sum\limits_{i=1}^{N}\sum\limits_{j=1}^{N}W_{ij}\cos(\phi_{i} -\! \phi_{j}) -\! \frac{K_{s}}{2}\sum\limits_{i=1}^{N}\cos(2\phi_{i}) \\ 
        &\hspace{-0.6mm}\boldsymbol{J_f}(\phi^{\ast}, K_s):
         &&\;\;
         K J_{\text{OIM}}(\phi^{\ast})
          -\!
        2K_{s}\operatorname{diag}
        \big(
        \cos(2\phi^{\ast}_{1}),
        \ldots,
        \cos(2\phi_{N}^{\ast})
        \big)
        \\
        &\hspace{-0.6mm}\boldsymbol{\Delta_{\textbf{OIM}}(\mathcal{G})}:& &\;\; \dfrac{K}{2}\lambda_{\max}\big(J_{\text{OIM}}^{\{0,\pi\}}(\phi^{\ast})\big)+\! \frac{K}{2}\lambda_{\max}({D} -\! {W})
        \end{aligned}$}& 
        \raisebox{-1in}{\includegraphics[height=1.1in]{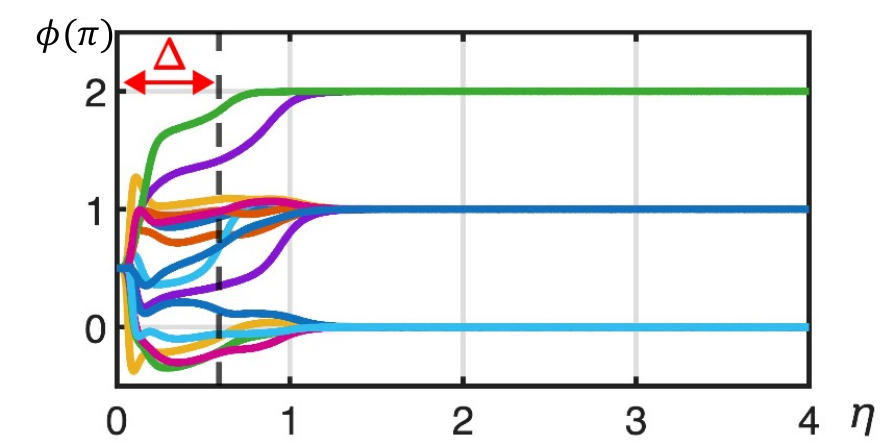}}\\ \hline
        \multirow{1}{1em}{ \huge \rotatebox[origin=c]{90}{{CIM}}}& 
        \hspace{-0.25in}\multirow{1}{*}{$\begin{aligned}
        &\:\boldsymbol{\eta}: & &\;\; p \\
        &\:\boldsymbol{E(x, \eta, W)}:& &\;\; \sum\limits_{i=1}^{N}\left[\left(1 -\! p\right)\dfrac{x_i^{2}}{2} +\! \dfrac{x_{i}^{4}}{4}\right] 
         +\! \dfrac{\xi}{2} \sum\limits_{j=1}^N W_{ij}\,x_j \\ 
         &\:\boldsymbol{J_f}(x^\ast,p):
         &&\;\;
         \big[
        (-\!1+p)I
        -\! 3\,\mathrm{diag}(x_1^{*2}, \ldots, x_N^{*2})
        - \xi W
        \big]
        \\
        &\:\boldsymbol{\Delta_{\textbf{CIM}}(\mathcal{G})}:& &\;\; -\! \!\lambda_{\max}\big(-\xi W -\! 3\mathrm{diag}(x_1^{*2}, \ldots, x_N^{*2}) \big) +\! \lambda_{\max}(-\!\xi W)
        \end{aligned}$}& 
        \raisebox{-1in}{\includegraphics[height=1.1in]{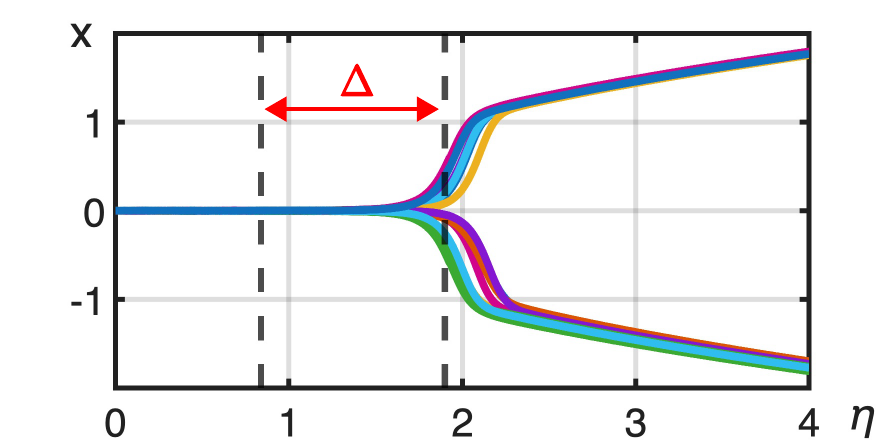}}\\ \hline
        
        \multirow{1}{1em}{ \huge \rotatebox[origin=c]{90}{{SBM}}}& \hspace{-0.005in}
        \multirow{1}{*}{$\begin{aligned}
        & \boldsymbol{\eta}: & &\;\; p \\
        &\boldsymbol{E(x, \eta, W)}:& &\;\; \sum\limits_{i=1}^{N}\left[\left(\Delta_i^{\textnormal{SBM}} -\! p\right)\dfrac{x_i^{2}}{2} +\! K_e\,\dfrac{x_i^3}{4}\right] 
        +\! \dfrac{\xi_0}{2} \sum\limits_{j=1}^N W_{ij}\,x_j \\ 
    
        &\boldsymbol{J_f}(x^\ast,p):
        &&\;\;
        \big[
        (-\!\Delta_i^{\textnormal{SBM}}+\!p)I
        -\! 3 K_e\,\mathrm{diag}(x_1^{*2},\ldots,x_N^{*2})
        -\! \xi_0 W
        \big]
        \\

        &\boldsymbol{\Delta_{\textbf{SBM}}(\mathcal{G})}:& &\;\; -\! \lambda_{\max}\big(-\!\xi_0 W -\! 3 K_e\mathrm{diag}(x_1^{*2}, \ldots, x_N^{*2}) \big) +\! \lambda_{\max}(-\!\xi_0 W)
        \end{aligned}$}& 
      \raisebox{-1 in}{\includegraphics[height=1.1in]{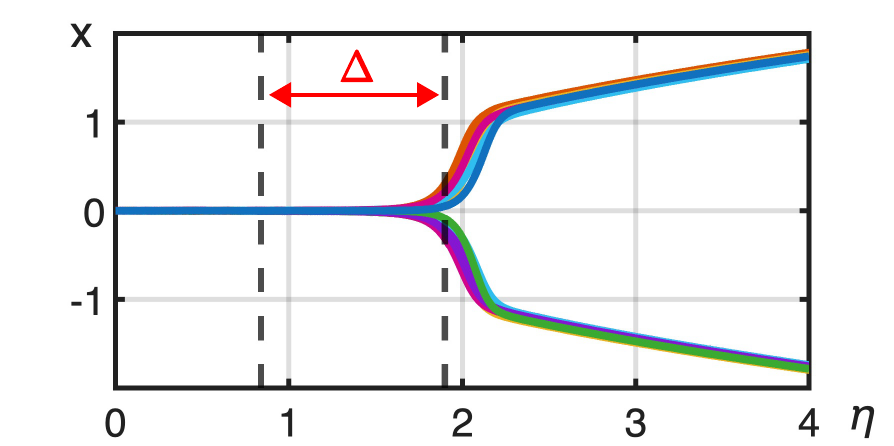}}\\ \hline
\end{tabular}}
  \caption{\justifying Comparison of dynamical system formulations for minimizing the Ising Hamiltonian and their key characteristics. Across all models, a finite parameter gap ($\Delta$) emerges in which the system evolution does not correspond to minimization of the Ising ground state in the general case. DIM: Dynamical Ising Machine; OIM: Oscillator Ising Machine; CIM: Coherent Ising Machine; SBM: Simulated Bifurcation Machine. $\phi$ represents the state variable in DIM and OIM, and $x$ denotes the state variable in SBM and CIM. The parameter $K$ denotes the coupling strength in DIM and OIM, whereas $\xi$ and $\xi_0 $ represent the coupling strength in CIM and SBM, respectively. In SBM, $\Delta_i^{\textnormal{SBM}}$ represents the positive detuning frequency, and $K_e$ denotes the positive Kerr coefficient. \(J_f\) denotes the full flow Jacobian evaluated at the indicated fixed point.} 
  \label{fig:ising_systems}
\end{figure*}

A striking observation emerges within this class: \emph{In the general case, there exists a finite interval of the regularization parameter $\eta$, with an intractable upper bound, over which the system dynamics fail to faithfully minimize the Ising Hamiltonian}. We refer to this interval as the \emph{parameter gap},
\begin{equation}
    \Delta=\eta_{\max}-\eta_{\min},
    \label{equation2}
\end{equation}
where \(\eta_{\min}\) is the threshold at which the trivial state loses
stability, and \(\eta_{\max}\) is the threshold beyond which at least one
Ising ground-state configuration becomes stable. These thresholds are
determined by the largest eigenvalues of the Jacobians evaluated at the
trivial state and at Ising ground-state fixed points, respectively.
Determining \(\eta_{\max}\) requires knowledge of the Ising ground-state
set and is therefore generally intractable
(see Supplementary Notes~2 and~3).
For the class of regularized phase-encoded Ising gradient flows considered
here, Supplementary Note~2 establishes
the ordering
\(
    \eta_{\max}\geq\eta_{\min},
\)
and hence
\(
    \Delta\geq0;
\)
that is, the parameter gap is non-negative.

The four models considered in this Letter---DIM, OIM, CIM, and SBM---are specific realizations of this framework, with CIMs and SBMs admitting a reduced phase-model description near the bifurcation, as detailed in Supplementary Notes~2 and~3. Figure~\ref{fig:ising_systems} summarizes their defining characteristics and illustrates their dynamics as the regularization parameter \(\eta\) is varied. Simulation parameters and experimental settings are provided in Supplementary Note~10.

Starting from \(\eta=0\), none of the considered systems is guaranteed to evolve to a valid Ising state. In CIMs, SBMs, and DIMs, the dynamics preferentially collapse into a trivial non-Ising state: \(x=0\) for CIMs and SBMs, and \(\phi=(\pi/2)\mathbf{1}\) for DIMs. In OIMs, the corresponding trivial state \(\phi=(\pi/2)\mathbf{1}\) has a threshold \(\eta_{\min}<0\), so it is already unstable over the physically relevant regime \(\eta\geq0\). Ising ground-state configurations become stable only beyond \(\eta_{\max}\). 

The existence of a parameter gap highlights an intrinsic limitation of such dynamical systems, with direct consequences for its functional capability as Ising machines. During practical operation, the regularization parameter is ramped up, forcing the system to traverse this parameter gap. Within this regime, the trivial state is unstable, and even small perturbations, unavoidable in the presence of noise, can deflect the dynamics toward non-optimal Ising solutions, as explored in the following section. Importantly, the failure to minimize the Ising Hamiltonian within the parameter gap does not stem from breakdown of the gradient-flow dynamics (with $\dot{E} \leq 0$ remaining valid), but rather from the fact that the ground state itself does not correspond to an Ising solution until $\eta \geq \eta_{\max}$.

\textit{Fate of the Dynamics in the Gap}---Since the system must traverse the parameter gap to reach the Ising solutions, we now examine the dynamics within this regime. The destabilization of the trivial state occurs when the largest eigenvalue (of the corresponding Jacobian)  crosses zero from negative to positive, rendering the corresponding eigenmode unstable.  In the ideal quasistatic, weak-noise limit, the initial post-bifurcation displacement is governed by \( \boldsymbol{v}_{\max}\), the dominant eigenvector of the Jacobian evaluated at the trivial state at
\(\eta=\eta_{\min}\), with the associated spin configuration \( \boldsymbol{\sigma}_{\max}=\mathrm{sign}(\boldsymbol{v}_{\max}) \)\cite{Kalinin2022,2023Bifurcation}.

When the parameter gap vanishes (\(\Delta = 0\)), the destabilization of the trivial state coincides with the stabilization of an optimal Ising configuration. In this case, the largest eigenvalue of the Jacobian evaluated about both the trivial state and the optimal configuration crosses zero at the same value of the control parameter ($\eta_{\Delta=0}$). Crucially, since the optimal Ising configuration now constitutes the system’s ground state, the dominant linearized mode—associated with the leading eigenvector of the Jacobian—aligns with this configuration. This alignment can be quantified via the normalized projection 
\(
\frac{\boldsymbol{\sigma}^{\ast}\cdot \boldsymbol{v}_{\max}}
{\|\boldsymbol{\sigma}^{\ast}\|\,\|\boldsymbol{v}_{\max}\|},
\)
where \(\boldsymbol{\sigma}^{\ast}\) denotes the ground-state Ising configuration. Importantly, this quantity equals one only when \(\boldsymbol{v}_{\max}\) is a positive scalar multiple of \(\boldsymbol{\sigma}^{\ast}\).

Furthermore, we will argue (in Supplementary Note~4) that when $\frac{\boldsymbol{\sigma}_{\max}\cdot \boldsymbol{v}_{\max}}{\|\boldsymbol{\sigma}_{\max}\|\,\|\boldsymbol{v}_{\max}\|} = 1$, the eigenvector corresponding to the largest eigenmode will represent an optimal Ising configuration whose energy is given by, $H=-\tfrac{N}{2}{\lambda_{\rm{max}}} - \# \mathcal{E}(\mathcal{G})$, where $\lambda_{\max}$ denotes the largest eigenvalue of the Jacobian corresponding to the trivial state for $\eta_{\Delta=0}$ and $\#\mathcal{E}(\mathcal{G})$ represents the number of edges in the graph $\mathcal{G}$. Such behavior, where $\boldsymbol{v}_{\max}$ perfectly aligns with an Ising configuration arises only in special cases, including bipartite graphs (see Supplementary Note~4 for a detailed derivation). From a functional standpoint, this implies that the dynamics converge to the ground state with high probability. We verify this behavior through numerical simulations in Supplementary Note~4, demonstrating that  DIMs, OIMs, CIMs, and SBMs consistently converge to the ground state for bipartite graphs.

For generic graphs, however, the parameter gap is nonzero. In this regime, destabilization of the trivial state does not typically coincide with stabilization of an optimal Ising configuration, and the leading eigenvector \(\boldsymbol{v}_{\max}\) of the Jacobian need not align with any ground-state Ising configuration.  More precisely, solution quality is governed by the weighted excess subdominant-eigenspace projection of the bifurcation-selected spin configuration relative to an optimal configuration, as shown in Supplementary Note~4.

The operational dynamics during traversal of the gap can be viewed as a stochastic mode-selection process governed by the interplay among noise, the temporal profile of the regularization parameter $\eta$, and the eigenspectrum at bifurcation. While a detailed stochastic analysis is presented in Supplementary Note~5, qualitatively, the role of these quantities can be understood as follows: noise seeds perturbations along the eigendirections, the ramp rate of \(\eta\) controls how long these perturbations are amplified before \(\eta_{\max}\) is reached, and the local eigengap \(\gamma_{N-1}\) controls the susceptibility of the
mode-selection process to noise-induced mixing with competing modes. 

While noise characteristics and the temporal protocol are important, here we focus on spectral features intrinsic to the bifurcation, namely, the parameter-gap width and the alignment of \(\boldsymbol{v}_{\max}\) with low-energy Ising configurations. The former controls the synchronization threshold for certification, whereas the latter determines the bifurcation-selected spin configuration.  Below, we introduce a
hybrid framework through which these properties can be  modulated.

\textit{Bridging the Gap}--- We now explore a pathway to improve the alignment between  \(\boldsymbol{v}_{\max}\) and a optimal spin configuration ($\boldsymbol{\sigma}^{\ast}$). To understand the foundation of our approach, we consider---using the approach proposed by Wang \emph{et al.},~\cite{2023Bifurcation} the degree of synchronization,
\begin{align}
    S(\boldsymbol{v}_{\max}) = \dfrac{1}{N}(\boldsymbol{v}_{\max}^{\top}\boldsymbol{\sigma}_{\mathrm{max}})^{2}, 
\end{align}
where, $\boldsymbol{\sigma}_\mathrm{max}=\operatorname{sign}(\boldsymbol{v}_{\mathrm{max}})$.
The significance of this metric, as quantified further on, is that it helps define a lower bound for the dominant eigenvector, $\boldsymbol{v}_{\mathrm{max}}$, to point to a state with an energy below a given threshold. Since $\boldsymbol{v}_{\mathrm{max}}$ is determined by the Jacobian at the trivial solution at the point of bifurcation, improving $S(\boldsymbol{v}_{\max})$ requires reshaping the eigenbasis of this Jacobian.

To this end, we consider a convex combination of existing dynamics designed to reshape the Jacobian spectrum and selectively enhance the projection of $\boldsymbol{v}_{\max}$ onto an optimal spin configuration while suppressing competing eigendirections. Specifically, we introduce the Hybrid Ising Machine (HyIM) defined by,
\begin{align}
    \dot{\phi}_{i} &= K\sum_{j=1}^{N}W_{ij}\bigl[\alpha \sin(\phi_{i} + \phi_{j}) +(1-\alpha)\sin(\phi_{i} - \phi_{j})\bigr] \nonumber \\ 
    & \;\;\;\;\; - K_{s}\sin(2\phi_{i}).
    \label{eqn: alpha-OIM-DIM}
\end{align}
where $\alpha \in [0,1]$ interpolates between the OIM and DIM limits. 

From an implementation perspective, this interpolation can be interpreted as an oscillator network with two coupling paths. The phase-difference term is produced by a direct coupling path, while the phase-sum term can be produced by coupling to a phase-conjugated copy of the neighboring oscillator signal~\cite{2024IMWTL..34..671I,article,inproceedings}, since
\(
\sin(\phi_i+\phi_j)=\sin(\phi_i-(-\phi_j)).
\)
The hyperparameter \(\alpha\) therefore represents the relative coupling strength assigned to the direct and phase-conjugate paths.

As in the pure cases, the HyIM admits an energy function $E_{\text{HyIM}}$ whose minimizers correspond to the minimizers of the Ising Hamiltonian (see Supplementary Note 6). Additionally, $\boldsymbol{v}_{\mathrm{max}}$ can be selectively destabilized while all other eigenmodes remain stable, which is critical to ensuring that the initial post-bifurcation evolution is governed by the dominant eigenmode. This selective destabilization occurs only for $\alpha \in [\alpha_c,1]$ (see Supplementary Note 7). Consequently, $\alpha_c$ determines the lower bound on $\alpha$ above which the initial HyIM evolution is governed by the dominant eigenmode considered in the analysis below. Computing $\alpha_c$ does not require prior knowledge of the ground state.

The gap in the dynamics can be defined as, 
\begin{align}
\Delta_{\mathrm{HyIM}}(\alpha) &= K^{\{0,\pi\}}_{s,\mathrm{HyIM}}(\alpha) - K_{s,\mathrm{HyIM}}^{\{\frac{\pi}{2}\}}(\alpha),\label{eqn: dynamic_gap_defined}
\end{align} and it can be shown that $\Delta_{\text{HyIM}} \geq 0$ and convex in $\alpha\in[0,1]$ (see Supplementary  Note~7). The convexity implies that if there exists an $\alpha_0 \in [\alpha_c,1]$ such that $\Delta_{\text{HyIM}}(\alpha_0)<\Delta_{\text{HyIM}}(\alpha=1)$, then \(\Delta_{\text{HyIM}}(\alpha) \le \Delta_{\text{HyIM}}(\alpha=1) \quad \text{for all } \alpha \in [\alpha_0, 1]\). While the value of $\alpha$ that gives rise to the minimum parameter gap is not known \emph{a priori}, $\alpha$ provides an additional \emph{design knob} to optimize the parameter gap; in practice, we observe that the minimum parameter gap does not usually occur at the end point ($\alpha=1)$ implying that the standalone DIM dynamics do not provide the minimum gap. Scaling of the minimum-gap in HyIM across graph sizes and graph classes, together with the associated solution-quality certificates, is reported in Supplementary Note~8. In the following, we will show that the gap sets the synchronization threshold required to certify the bifurcation-selected solution.

From a functional standpoint, for \(\alpha\in[0,\alpha_m^k]\), the dominant bifurcation mode certifies a solution below the \(k^{\mathrm{th}}\) excited-state energy level whenever
\begin{align}
    S(\boldsymbol{v}_{\max})
    \geq
    S_{\mathrm{crit},k}(\boldsymbol{v}_{\max})
    =
    1-\frac{\Delta H^k}{N\Delta_{\mathrm{HyIM}}(\alpha)} .
    \label{eqn:doscondition}
\end{align}
Here, \(S_{\mathrm{crit},k}(\boldsymbol{v}_{\max})\) is the critical synchronization threshold required to certify that the spin configuration induced by the dominant mode,
\(\boldsymbol{\sigma}_{\max}=\operatorname{sign}(\boldsymbol{v}_{\max})\), satisfies
\(
    H(\boldsymbol{\sigma}_{\max})\leq H^k,
\)
where, \(H^k\) denotes the \(k^{\mathrm{th}}\) excited-state energy level, \(H_0\) denotes the ground-state energy, and \(\Delta H^k=H^k-H_0\). The quantity \(\alpha_m^k\) is defined as the supremum of the set of \(\alpha\) values for which Eq.~\eqref{eqn:doscondition} is satisfied; see Supplementary Note~9.


Equation~\eqref{eqn:doscondition} provides a quantitative link between synchronization and solution quality: if the dominant eigendirection satisfies
the inequality, the resulting spin configuration is guaranteed to lie below the \(k^{\mathrm{th}}\) excited state. Crucially, the critical threshold
\(S_{\mathrm{crit},k}\) decreases with the parameter gap \(\Delta_{\mathrm{HyIM}}(\alpha)\). Thus, tuning \(\alpha\) to reduce the gap lowers the degree of synchronization required to certify a low-energy solution. This structural role of \(\Delta_{\mathrm{HyIM}}\) is distinct from that of the temporal protocol \(\eta(t)\), which governs how the trajectory traverses the gap. Consequently, for \(k\) such that \(\alpha_m^{k}>\alpha_c\), if \(\boldsymbol{v}_{\max}\) satisfies Eq.~\eqref{eqn:doscondition}, then it identifies a spin configuration \(\boldsymbol{\sigma}_{\max}\) satisfying
\(
    H(\boldsymbol{\sigma}_{\max})\leq H^{k}.
\) 
\begin{figure}[!h]
    \centering
    \includegraphics[scale = 0.4]{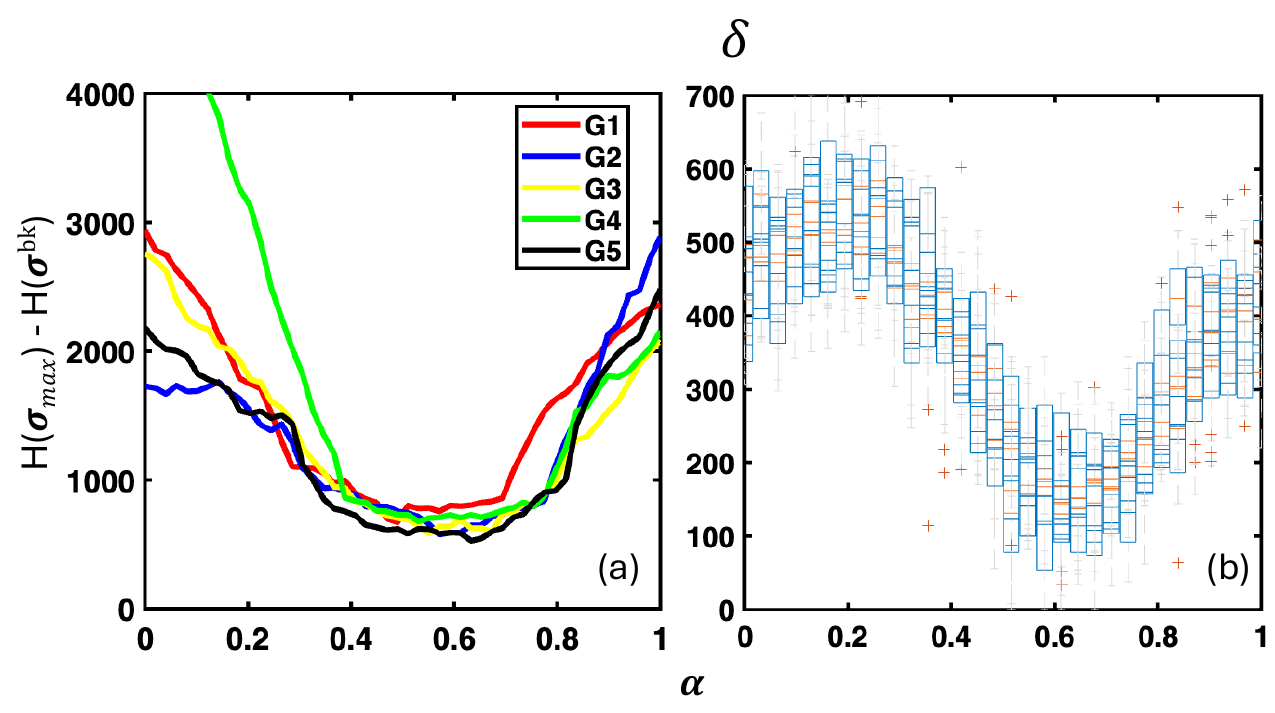}
    \caption{\justifying (a) Energy offset of the dominant-eigenvector configuration,
\(H(\boldsymbol{\sigma}_{\max})-H(\boldsymbol{\sigma}^{\mathrm{bk}})\), versus \(\alpha\) for the first five G-set instances, with
\(\boldsymbol{\sigma}_{\max}=\operatorname{sign}(\boldsymbol{v}_{\max})\). \(\boldsymbol{\sigma}^{\mathrm{bk}}\) denotes the best-known reference spin configuration for the corresponding G-set instance
(b)Box plots of the energy deviation
\(\delta_r(\alpha)=H_r(\alpha)-H_{\min}(\alpha)\), where
\(H_r(\alpha)\) is the trial-\(r\) Ising energy and
\(H_{\min}(\alpha)=\min_r H_r(\alpha)\). Each graph is simulated for 10 independent trials.}
    \label{fig:gap_vs_alpha}
\end{figure}

To illustrate the effect of the hybrid dynamics, we test five graphs, G1--G5, from the G-set benchmark \cite{Gset} each consisting of 800 nodes and 19,176 unweighted edges. Fig.~\ref{fig:gap_vs_alpha}(a) shows the evolution of
\(
H(\boldsymbol{\sigma}_{\max})-H(\boldsymbol{\sigma}^{\mathrm{bk}})
\)
as a function of \(\alpha\), where
\(\boldsymbol{\sigma}_{\max}=\mathrm{sign}(\boldsymbol{v}_{\max})\) is the
spin configuration obtained from the dominant bifurcation eigenvector and
\(\boldsymbol{\sigma}^{\mathrm{bk}}\) denotes the best-known reference spin
configuration for the corresponding G-set instance. Since a certified ground
state is not available for these 800-node instances, this quantity should be
interpreted as a lower bound on the deviation from the true ground-state
energy,
\(
H(\boldsymbol{\sigma}_{\max})-H(\boldsymbol{\sigma}^{\ast})
\). Subsequently, we also solve each graph for varying values of $\alpha$. Each graph is evaluated over 10 independent trials. 
Fig.~\ref{fig:gap_vs_alpha}(b) shows box plots of
\(\delta_r(\alpha)=H_r(\alpha)-H_{\min}(\alpha)\), where
\(H_r(\alpha)\) is the Ising energy from trial \(r\) and
\(H_{\min}(\alpha)=\min_r H_r(\alpha)\), over 10 trials at each \(\alpha\).


It can be observed that the Ising energy of the binary state, estimated from the sign of the leading eigenvector at the bifurcation point—attains a minimum for $\alpha \approx 0.5$--$0.75$, indicating that the hybrid dynamics can help point the system dynamics evolve towards a lower energy state. More importantly, this improvement translates into improved functional performance (i.e., lower Ising energy) as observed in Fig.~\ref{fig:gap_vs_alpha}(b).

\textit{Conclusion}--- In summary, we identify the parameter gap as a fundamental structural constraint intrinsic to a broad class of analog Ising machines. Although these systems evolve as gradient flows, destabilization of the trivial state and stabilization of Ising-encoded solutions occur at distinct thresholds. This separation creates a regime in which optimal spin configurations are not yet stabilized, rendering solution quality dependent on the spectral structure of the dynamics at bifurcation. 

Beyond the pairwise Ising interactions analyzed in this Letter, the extension to regularized gradient flows with higher-order Ising interactions, presented in Supplementary Note~11, suggests that parameter-gap analysis can serve as a broader design principle for such platforms. The parameter gap thereby provides a structural lens for analyzing and designing physical dynamical systems for discrete optimization, while the hybrid framework suggests new directions for controlling bifurcation structure and spectral alignment to improve the performance of analog computing architectures.\\

\textit{Acknowledgments}---This material is based upon work supported by ARO award W911NF-24-1-0228.
\\[0.5em]
E.M.~Hasantha~Ekanayake and Arvind~R.~Venkatakrishnan contributed equally to
this work.
\\[0.5em]
\textit{Data availability}--- The data that support the findings of this study are available
from the corresponding author upon reasonable request.\\[-1.4em]

\def\bibsection{\section*{References}}  
\bibliography{ArXiv}

\newpage
\appendix
\onecolumngrid

\newpage
\appendix
\onecolumngrid

\section*{Supplemental Material}

\subsection*{Definition of Symbols}
The following symbols are used throughout this Supplementary Material:
\begin{itemize}
    \item $G$: Signed adjacency matrix of the underlying graph.
    \item $W$: Weight matrix, defined by the relation $W_{ij} = -G_{ij}$.
    \item $\#\mathcal{E}(\mathcal{G})$: The number of edges in the graph $\mathcal{G}$
    \item $D$: Degree matrix, defined as the diagonal matrix with entries $D_{ii} = \sum_{j} W_{ij}$ (the weighted degree of node $i$), and $D_{ij} = 0$ for $i \neq j$.
    \item $E(.)$: Energy function.
    \item $J_f$: Jacobian matrix of the dynamical system.
    \item $S(.)$: Synchronization.
    \item $\boldsymbol{\sigma}^{\ast}$: An Ising ground state.
    \item $\boldsymbol{\sigma}_{\max}$: The sign of the dominant eigenvector at the bifurcation point, $\boldsymbol{\sigma}_{\max} = \mathrm{sign}(\boldsymbol{v}_{\max})$. 
    \item $f_{\mathrm{coup}}(.)$ :  Coupling function.
    \item $f_{\mathrm{reg}}(.)$ :  Regularization function.
    \item $\gamma_i$: Eigenvalue gap between the dominant eigenvalue and the $i^{\mathrm{th}}$ eigenvalue, defined as $\gamma_i={\lambda}_{\max}-{\lambda}_i$.    
    \item $\Delta_{(\cdot)}$: Parameter gap, where the subscript specifies the corresponding dynamical model.
    item $\alpha$: HyIM hyperparameter, defined as the mixing coefficient between OIM and DIM, where $\alpha\in[0,1]$.  
    \item $\Delta H^{k}$: Certified energy gap, defined as $\Delta H^{k}=H^{k}-H_{0}$. 
    Where $H^{k}$ is the certified Ising energy at the synchronization threshold, and $H_{0}$ is the ground-state Ising energy.
    \item $CR_{k}$: Certifiable ratio, defined as $CR_{k}=C_{\mathrm{cert}}/C_{\mathrm{ref}}$, where $C_{\mathrm{cert}}$ is the certified cut value obtained from the HyIM certification procedure, and $C_{\mathrm{ref}}$ is the reference cut value, equal to the certified optimal cut or the best-known cut depending on the graph class.
\end{itemize}
\clearpage

\section{Supplementary Note 1}
\subsection*{{Framework for Using Dynamical Systems to Minimize the Ising Hamiltonian}}

\noindent Here, we provide a framework for describing the dynamics of the Ising machines considered in this work. The problem of minimizing the Ising Hamiltonian can be mathematically expressed as, 
\begin{align}
    \boldsymbol{\sigma^{\ast}}(\mathcal{G}) = \underset{\sigma\in\{-1,1\}^{N}}{\operatorname{argmin}} - \frac{1}{2}\boldsymbol{\sigma}^{\top}G\boldsymbol{\sigma},
\end{align}
where $\mathcal{G}$ denotes the underlying graph, $G$ is the adjacency matrix, $N$ is the number of nodes, and $\boldsymbol{\sigma}_i \in \{\pm1\}^N$ is the Ising spin vector. Computing $\boldsymbol{\sigma^*}$ is NP-hard in the general case. The systems (Dynamical Ising Machine (DIM), Oscillator Ising Machine (OIM), Coherent Ising Machine (CIM) and Simulated Bifurcation Machine (SBM)) implement a continuous relaxation of the Ising Hamiltonian by defining an energy function $E(\phi)$ whose minimizers coincide with those of the Ising Hamiltonian. In many cases, a readout map $h(x)$ is additionally defined such that the optimal Ising configuration is recovered as 
$\boldsymbol{\sigma^*}=h(\boldsymbol{\phi}^\ast)$, where $\boldsymbol{\phi}^*$ denotes a minimizer of $E(\phi)$. The associated gradient-flow dynamics, in the presence of noise, can be expressed as,
\begin{align}
    \mathrm{d}\phi(t) = -\nabla_{\phi}E(\phi)\;\mathrm{d}t + \mathrm{d}B_t\label{eqn: IM-SDE}
\end{align}
where $\mathrm{d}B$ is Brownian noise. For the systems considered here,  the energy function $E(\phi,\eta\;; W)$ is parameterized by $\eta \in\mathcal{T}\subseteq \mathbb{R}$ where the mapping to the Ising Hamiltonian is valid only for some range of $\eta \in \mathcal{T}_{\text{compatible}}\subseteq \mathcal{T}$. The compatibility condition is given by 
\begin{align}
    \boldsymbol{\sigma^{\ast}}(\mathcal{G}) = \{z\in\mathbb{R}^{n}\;|\;z = h(\phi^{\ast}), \phi^{\ast} \in \underset{\phi}{\operatorname{argmin}}\;E(\phi,\eta,W)\}
\end{align}
for all $\eta\in\mathcal{T}_{\text{compatible}}$, where $z$ denotes the discrete Ising spin vector obtained via the readout map $h(\cdot)$.

\newpage 
\section{Supplementary Note 2}
\subsection*{Stability Threshold Ordering in Regularized Phase Encoded Ising Gradient Flows}

\noindent Here, we establish the stability threshold ordering in the class of regularized bifurcation-based Ising gradient flows of the form
\begin{equation}
    \dot{\phi}_i=
    \sum_jW_{ij}f_{\mathrm{coup}}(\phi_i\pm\phi_j)
    +
    f_{\mathrm{reg}}(\phi_i,\eta)=-\nabla_{\phi_{i}} E(\phi,\eta;W), \label{eq:unified_form}
\end{equation}
where $f_{\mathrm{coup}}$ encodes the interaction between spins; the sign in \(\phi_i\pm\phi_j\) is fixed by the model, with \(+\) denoting phase-sum coupling and \(-\) denoting phase-difference coupling. The term $f_{\mathrm{reg}}$ encodes the regularization. For the phase-encoded Ising flows considered here, this regularization is implemented through the standard second-harmonic injection term
\(
f_{\mathrm{reg}}(\phi_i;\eta)=-\eta\sin(2\phi_i),
\)
which stabilizes the binary phase manifold \(\phi_i\in\{0,\pi\}\). Furthermore, $\boldsymbol{W} \in \mathbb{R}^{N\times N}$ represents the quadratic interactions between the Ising spins. . \\[0.8em]
Within this generalized formulation, we show that the threshold, $\eta_{\mathrm{min}}$, at which the trivial state loses stability necessarily precedes, or coincides with, the threshold, $\eta_{\mathrm{max}}$, at which an Ising-encoded ground state becomes stable i.e., $\eta_{\mathrm{max}}\geq \eta_{\mathrm{min}}$, implying the existence of a non-negative parameter gap $\Delta \geq 0$.\\[0.8em]
\noindent To formulate the stability threshold ordering for the general class of  regularized phase-encoded Ising gradient flows, we proceed as follows:
\begin{enumerate}
\item \textbf{Unified Phase-Encoded Form of the Ising Dynamics.} We show that the admissible interaction functions in Eq.~\eqref{eq:unified_form} take the Fourier form
\begin{equation}
\dot{\phi}_i
=
K\sum_{\substack{j=1,j\neq i}}^{N}W_{ij}
\left(
\sum_{\substack{n \geq 1\\ n\ \mathrm{odd}}}a_n
\sin\big(n(\phi_i \pm \phi_j)\big)
\right)
-
\eta\sin(2\phi_i).
\label{eq:general_fourier_summary_odd}
\end{equation}
Here, the \(+\) sign corresponds to sum-phase interactions $(\phi_i+\phi_j)$, and the \(-\) sign corresponds to difference-phase interactions $(\phi_i-\phi_j)$. $K$ and $\eta$ represent the strength of the coupling between the oscillators and the SHI, respectively. $a_n$ represents the coefficient of the $n^{\text{th}}$ harmonic. \\[0.8em]
Thus, DIM and OIM are direct phase-flow realizations of Eq.~\eqref{eq:general_fourier_summary_odd}. Furthermore, we show that, under the  minimal-amplitude-heterogeneity conditions required for faithful implementation of the target Ising Hamiltonian, the CIM and the SBM dynamics reduce to the same Kuramoto-style phase-flow description.
\item \textbf{Stability Threshold Ordering in the Unified Formulation.} We then show that the ordering of the stability thresholds holds for the generalized dynamics presented in Eq.~\eqref{eq:general_fourier_summary_odd}, establishing the presence of a non-negative gap ($\Delta\geq0$). 
\end{enumerate}
\clearpage
\subsection{Unified Phase-Encoded Form of the Ising Dynamics}
\noindent \textbf{(a) Phase difference interactions [$f_{\text{coup}}(\phi_i-\phi_j)$]}\\[0.8em]
Consider the phase difference interaction of the form,
\begin{equation}
\dot{\phi}_i
=
\sum_{\substack{j=1,j\neq i}}^N
W_{ij}
f_{\text{coup}}(\phi_i-\phi_j),
\label{eq:OIM_phase_appendix}
\end{equation}
where
\(f_{\text{coup}}(\cdot)\)
denotes the interaction function in the weakly coupled regime and
\(W_{ij}=W_{ji}\)
represents symmetric real coupling. For the dynamics to admit a gradient-flow representation through a scalar energy function \(E\), with
\(\dot{\phi}_i=-\partial E/\partial \phi_i\), the induced vector field must be
curl-free:
\begin{equation}
\frac{\partial \dot{\phi}_i}{\partial \phi_j}
=
\frac{\partial \dot{\phi}_j}{\partial \phi_i},
\qquad
\forall i,j.
\label{eq:curl_condition_appendix_OIM}
\end{equation}
This condition follows from Clairaut's theorem: if a scalar potential
\(E(\boldsymbol{\phi})\) exists, then the mixed partial derivatives
\(\partial^2 E/\partial \phi_j\partial \phi_i\) and
\(\partial^2 E/\partial \phi_i\partial \phi_j\) must be equal. In the Ising-machine setting, this scalar energy is constructed so that, on the binary phase manifold, its minimizers correspond to minimizers of the target Ising Hamiltonian. Evaluating the two sides in  Eq.~\eqref{eq:curl_condition_appendix_OIM} gives
\begin{equation}
\frac{\partial \dot{\phi}_i}{\partial \phi_j}
=
-W_{ij}f_{\text{coup}}'(\phi_i-\phi_j),
\end{equation}
and
\begin{equation}
\frac{\partial \dot{\phi}_j}{\partial \phi_i}
=
-W_{ji}f_{\text{coup}}'(\phi_j-\phi_i).
\end{equation}
Under symmetric coupling,
\(W_{ij}=W_{ji}\),
the curl-free condition in Eq.~\eqref{eq:curl_condition_appendix_OIM}
therefore requires 
\begin{equation}
f_{\text{coup}}'(\phi_i-\phi_j)=f_{\text{coup}}'(-(\phi_i-\phi_j)),
\end{equation}
Equivalently, up to an additive constant,
\begin{equation}
f_{\text{coup}}(\phi_i-\phi_j)=-f_{\text{coup}}(-(\phi_i-\phi_j)).
\end{equation}
Thus, phase difference-based interactions admit a gradient-flow representation only when the nonconstant part of the interaction function is odd~\cite{hasan2026breakdowngradientflowdynamicsoscillator}. Since the interaction \(f_{\text{coup}}\) must be odd in its phase argument, its Fourier expansion contains only $\sin(.)$ terms. Consequently, 
\begin{equation}
f_{\text{coup}}(\phi_i-\phi_j)= \sum_{n\geq1} a_n \sin(n(\phi_i-\phi_j))
\end{equation}
Furthermore, the interaction function must preserve the Ising spin-flip symmetry. For phase difference-based encoding, $\phi_i-\phi_j=0
\quad \leftrightarrow \quad
\sigma_i\sigma_j=+1,$ and $\phi_i-\phi_j=\pm\pi
\quad \leftrightarrow \quad
\sigma_i\sigma_j=-1$. A spin flip corresponds to $\phi_i\rightarrow\phi_i+\pi$ which transforms the phase difference as $(\phi_i-\phi_j)
\rightarrow
(\phi_i-\phi_j)+\pi$. To remain consistent with the Ising pairwise interaction
\begin{equation}
H_{\mathrm{Ising}}
=
-\sum_{i<j}G_{ij}\sigma_i\sigma_j,
\qquad
\sigma_i\in\{-1,+1\},
\end{equation}
the interaction must reverse sign under a spin flip. Hence,
\begin{equation}
f_{\text{coup}}(\phi_i-\phi_j+\pi)=-f_{\text{coup}}(\phi_i-\phi_j).
\label{eq:half_wave_OIM}
\end{equation}
Considering the Fourier interaction of the $\pi$ shifted function
\begin{equation}
f_{\text{coup}}(\phi_i-\phi_j+\pi)= \sum_{n\geq1} a_n (-1)^n\sin(n(\phi_i-\phi_j)) \label{eq:shifted_Fourier}
\end{equation}
For Eq.~\eqref{eq:shifted_Fourier} to equal $-f_{\text{coup}}(\phi_i-\phi_j)$ as required by Eq.~\eqref{eq:half_wave_OIM}, 
\[(-1)^n=-1\]
for every harmonic with nonzero coefficient \(a_n\). Hence, only odd harmonics $\sin(.)$ are permitted, implying $a_n = 0$ for all even $n$.  Therefore, phase-difference based interaction functions that preserve both gradient-flow structure and Ising spin-flip symmetry can be expressed in the form,
\begin{equation}
\dot{\phi}_i
=
K\sum_{\substack{j=1,j\neq i}}^{N}W_{ij}
\left(
\sum_{\substack{n \geq 1\\ n\ \mathrm{odd}}}a_n
\sin\big(n(\phi_i - \phi_j)\big)
\right)
-
\eta\sin(2\phi_i).
\label{eq:general_fourier_OIM}
\end{equation}\\
\textbf{(b) Additive Phase Interactions ($f_{\text{coup}}(\phi_i+\phi_j)$)}\\[0.8em]
\noindent Consider interaction functions with additive phase dynamics,
\begin{equation}
\dot{\phi}_i
=
\sum_{\substack{j=1,j\neq i}}^N
W_{ij}
f_{\text{coup}}(\phi_i+\phi_j),
\label{eq:DIM_phase_appendix}
\end{equation}
where
\(f_{\text{coup}}(.)\)
denotes the interaction function and
\(W_{ij}=W_{ji}\)
represents symmetric real coupling. Using the same approach as illustrated above in the case of phase difference-based interaction functions, a gradient-flow interpretation of the dynamics entails that,
\begin{equation}
\frac{\partial \dot{\phi}_i}{\partial \phi_j}
=
\frac{\partial \dot{\phi}_j}{\partial \phi_i},
\qquad
\forall\; i,j \;\;
\Rightarrow\,\,\, 
W_{ij}f_{\text{coup}}'(\phi_i+\phi_j)=W_{ji}f_{\text{coup}}'(\phi_j+\phi_i)
\label{eq:curl_condition_appendix_DIM}
\end{equation}
Since \(\phi_i+\phi_j=\phi_j+\phi_i\), the curl-free condition in
Eq.~\eqref{eq:curl_condition_appendix_DIM} is satisfied identically under
symmetric coupling. Thus, additive phase interactions admit a gradient-flow
representation for any differentiable interaction function
\(f_{\mathrm{coup}}(\phi_i+\phi_j)\).\\[0.8em]
However, gradient-flow structure alone does not ensure compatibility with the Ising encoding. For \(\phi_i,\,\phi_j\in\{0,\pi\}\), the additive phase argument satisfies \(\phi_i+\phi_j\in\{0,\pi,2\pi\}\). $\sin(.)$ harmonics vanish at all of these points, whereas cosine harmonics take values \(\pm1\). Therefore, cosine components in the coupling force do not vanish on the binary phase manifold and would shift the intended Ising equilibria. Requiring the binary phase manifold to remain an equilibrium set therefore restricts the coupling force to sine harmonics,
\begin{equation}
f_{\mathrm{coup}}(\phi_i+\phi_j)
=
\sum_{n\geq 1} a_n
\sin\big(n(\phi_i+\phi_j)\big).
\end{equation}
The corresponding dynamics are
\begin{equation}
\dot{\phi}_i
=
K\sum_{\substack{j=1,j\neq i}}^{N}W_{ij}
\left(
\sum_{n \geq 1}a_n
\sin\big(n(\phi_i + \phi_j)\big)
\right)
-
\eta\sin(2\phi_i).
\label{eq:general_fourier_summary_sine}
\end{equation}
Finally, imposing the Ising spin-flip symmetry
\(f_{\mathrm{coup}}(\phi+\pi)=-f_{\mathrm{coup}}(\phi)\)
eliminates the even harmonics as shown earlier, leaving
\begin{equation}
\dot{\phi}_i
=
K\sum_{\substack{j=1,j\neq i}}^{N}W_{ij}
\left(
\sum_{\substack{n \geq 1\\ n\ \mathrm{odd}}}a_n
\sin\big(n(\phi_i + \phi_j)\big)
\right)
-
\eta\sin(2\phi_i).
\label{eq:general_fourier_DIM}
\end{equation}\\
\noindent\textbf{(c) Phase-Encoded Form of CIM and SBM Dynamics}\\[0.8em]
\noindent
Building on the phase-reduction analysis of parametric-oscillator Ising
machines presented in~\cite{61zx-gs91}, we recast the CIM and SBM dynamics
in amplitude--phase form. Under the approximate amplitude-homogeneity
condition required for faithful implementation of the target Ising
Hamiltonian, their phase dynamics reduce to the same phase-encoded structure derived above.\\[0.8em]
\noindent To establish this correspondence, we consider a network of weakly coupled
Stuart--Landau oscillators, which provide the normal form for weakly nonlinear
oscillations near a Hopf bifurcation, subject to parametric pumping near
\(2\omega\):
\begin{equation}
\dot{\nu}_i
= (\mu_i-\zeta_i |\nu_i|^2)\nu_i
+\kappa_i e^{i\phi_p}\nu_i^{*} \label{eq:slowflow}
+\xi\sum_{\substack{j=1,j\neq i}}\left(A_{ij}\nu_j+B_{ij}\nu_j^{*}\right),
\end{equation}
where, \(\nu_i=r_i e^{i\phi_i}\). Here \(A_{ij}\) and \(B_{ij}\) denote the normal and conjugate coupling pathways, respectively. $\boldsymbol{A}$ and $\boldsymbol{B}$ are real, symmetric and constructed as scaled version of $\boldsymbol{G}\, (=-\boldsymbol{W})$ i.e., $\boldsymbol{A}=\Xi\boldsymbol{G}$ and $\boldsymbol{B}=(1-\Xi)\boldsymbol{G}$, where $\Xi \in [0,1]$. $\mu_i$ is the net linear gain–loss term, $\zeta_i (>0, \text{here})$ is the nonlinear saturation, $\kappa_i e^{i\phi_p}$ represents the parametric pump at $2\omega$ with phase $\phi_p$. This form provides a reduced description of parametric oscillator networks in the near-threshold, single-mode regime, including CIM-type DOPO (degenerate optical parametric oscillator) implementations (also analyzed in \cite{61zx-gs91}). The corresponding complex energy function 
\begin{equation}
\begin{aligned}
E(\boldsymbol{\nu},\boldsymbol {\nu^{\!*}})
&=\; \sum_{i=1}^N
\Big[\,
-\mu_i\,|\nu_i|^2
\;+\;\frac{\zeta_i}{2}\,|\nu_i|^4
\;-\;\frac{\kappa_i}{2}\,\big(e^{i\phi_p} \nu_i^{\!*2}+e^{-i\phi_p} \nu_i^{2}\big)
\Big] \\
&-\;\frac{\xi}{2}\sum_{\substack{i,j=1\\ i\neq j}}^N
\Big[\,
A_{ij}\,\big(\nu_i \nu_j^{\!*}+\nu_i^{\!*}\nu_j\big)
\;+\; B_{ij}\,\big(\nu_i \nu_j+\nu_i^{\!*}\nu_j^{\!*}\big)
\Big]
\end{aligned}
\label{eq:E}
\end{equation}
satisfies \(\dot{E}\leq 0\). Expressing \(\nu_i=r_i e^{i\phi_i}\) and separating amplitude and phase gives
\begin{equation}
\begin{split}
\dot{r}_i
&=(\mu_i-\zeta_i r_i^2)r_i
+\kappa_i r_i\cos(2\phi_i)
+\xi\sum_{j\neq i}
\left[
A_{ij}r_j\cos(\phi_j-\phi_i)
+
B_{ij}r_j\cos(\phi_i+\phi_j)
\right],
\\
r_i\dot{\phi}_i
&=
-\kappa_i r_i\sin(2\phi_i)
+\xi\sum_{j\neq i}
\left[
A_{ij}r_j\sin(\phi_j-\phi_i)
-
B_{ij}r_j\sin(\phi_i+\phi_j)
\right].
\end{split}
\label{eq:amp_phase_CIM_SBM}
\end{equation}
At binary phase states \(\phi_i'\in\{0,\pi\}\), both
\(\cos(\phi_i-\phi_j)\) and \(\cos(\phi_i+\phi_j)\) reduce to the same Ising product \(\sigma_i \sigma_j\), with \(\sigma_i=\cos\phi_i'\in\{-1,+1\}\). Since the normal and conjugate coupling matrices, $\boldsymbol{A}$ and $\boldsymbol{B}$ respectively, are scaled versions of the Ising weight matrix,
the interaction energy reduces to
\begin{equation}
E(\boldsymbol{\nu},\boldsymbol {\nu^{\!*}})
=
C-\xi\sum_{i\neq j} G_{ij} r_i r_j \sigma_i \sigma_j \equiv C+\xi\sum_{i\neq j} W_{ij} r_i r_j \sigma_i \sigma_j,
\label{eq:CIM_SBM_weighted_ising}
\end{equation}
up to terms independent of the spin configuration. Therefore, amplitude heterogeneity rescales the programmed couplings as
\[
\bar{W}_{ij}=r_i^\star r_j^\star W_{ij}.
\]
The intended Ising Hamiltonian is recovered most directly \emph{when the saturated amplitudes are approximately homogeneous, \(r_i^\star\simeq r^\star\)}. Under this phase-dominated limit, the dynamics reduce to
\begin{equation}
\dot{\phi}_i
=
-K_s\sin(2\phi_i)
-
K\sum_{j\neq i}
\left[
A_{ij}\sin(\phi_i-\phi_j)
+
B_{ij}\sin(\phi_i+\phi_j)
\right].
\label{eq:phase_only_CIM_SBM}
\end{equation}

Thus, with respect to the Ising functionality of the dynamics, CIM and SBM-type parametric oscillator systems reduce to the same phase-encoded structure: normal coupling generates difference-phase interactions, while conjugate coupling generates additive phase interactions. This shows that the unified formulation used for the stability-threshold ordering also captures the phase-reduced Ising-compatible limit of parametric oscillator machines. \\[0.8em]
\textit{Summary:} Phase-encoded regularized Ising gradient flows with phase-sum and phase-difference interactions can be written in the unified form
\begin{equation}
\dot{\phi}_i
=
K\sum_{j=1}^{N}W_{ij}
\left(
\sum_{\substack{n \geq 1\\ n\ \mathrm{odd}}}a_n
\sin\big(n(\phi_i \pm \phi_j)\big)
\right)
-
\eta\sin(2\phi_i).
\label{eq:general_fourier_summary_odd1}
\end{equation}
The \(+\) and \(-\) signs correspond, respectively, to phase-sum and phase-difference interactions. DIM and OIM are recovered as the first-harmonic limits of these two cases. Furthermore, in the regime required for faithful Ising implementation of the target Ising Hamiltonian, CIM and SBM-type parametric-oscillator dynamics reduce to the same phase-encoded structure. Therefore, Eq.~\eqref{eq:general_fourier_summary_odd1} provides a unified phase-encoded framework for the regularized bifurcation-based Ising gradient flows considered here. \\[0.8em]
Next, we use this formulation to establish the stability-threshold ordering.
\subsection{Stability Threshold Ordering in the Unified Formulation.}
\noindent To establish the stability threshold ordering in the unified form established above~(Eq.~\eqref{eq:general_fourier_summary_odd1}), we define:
\begin{equation}
    \Gamma_0 = f_{\mathrm{coup}}'(0)
    =
    \sum_{\substack{n\ge 1\\ n\;\mathrm{odd}}} n a_n,
    \qquad
    \Gamma_\pi = f_{\mathrm{coup}}'(\pi)
    =
    \sum_{\substack{n\ge 1\\ n\;\mathrm{odd}}} n a_n(-1)^n .
    \label{eq:gamma_definitions}
\end{equation}
Since only odd harmonics are present, 
\begin{equation}
\Gamma_{\pi}=-\Gamma_{0},
\qquad
\Gamma_{0}+\Gamma_{\pi}=0
\end{equation}
Here, we exclude the \(\Gamma_0=0\), for which the interaction Jacobian vanishes at the binary and trivial phase arguments. \\[0.8em]
The Jacobian for the unified form evaluated at an equilibrium point
\(
\boldsymbol{\phi}^{\ast}
\)
is given by
\begin{align}
J_f(\boldsymbol{\phi}^{\ast},\eta)
=
KJ_{\mathrm{U}}(\boldsymbol{\phi}^{\ast})
-
2\eta
\operatorname{diag}
\left(
\cos(2\phi_1^\ast),
\ldots,
\cos(2\phi_N^\ast)
\right),
\end{align}
where 
\begin{equation}
J_{\mathrm{U}}(\phi^\ast)
=
\begin{bmatrix}
\displaystyle
\sum_{j=1,j\neq1}^{N}
W_{1j}F(\phi_1^\ast \pm \phi_j^\ast)
&
\pm W_{12}F(\phi_1^\ast \pm \phi_2^\ast)
&
\cdots
&
\pm W_{1N}F(\phi_1^\ast \pm \phi_N^\ast)
\\[10pt]

\pm W_{21}F(\phi_2^\ast \pm \phi_1^\ast)
&
\displaystyle
\sum_{j=1,j\neq2}^{N}
W_{2j}F(\phi_2^\ast \pm \phi_j^\ast)
&
\cdots
&
\pm W_{2N}F(\phi_2^\ast \pm \phi_N^\ast)
\\[10pt]

\vdots
&
\vdots
&
\ddots
&
\vdots
\\[10pt]

\pm W_{N1}F(\phi_N^\ast \pm \phi_1^\ast)
&
\pm W_{N2}F(\phi_N^\ast \pm \phi_2^\ast)
&
\cdots
&
\displaystyle
\sum_{j=1,j\neq N}^{N}
W_{Nj}F(\phi_N^\ast \pm \phi_j^\ast)
\end{bmatrix},
\end{equation}
with\\
\begin{equation}
F(\phi_i^\ast \pm \phi_j^\ast)
=
\sum_{\substack{n\geq1\\ n\ \mathrm{odd}}}
na_n
\cos\big(n(\phi_i^\ast \pm \phi_j^\ast)\big).
\end{equation}
We denote the trivial fixed point by $\boldsymbol{\phi}_{\rm triv}^{\ast}=\frac{\pi}{2}\mathbf{1}$. Let \(\mathcal{G}_{\phi}\subseteq\{0,\pi\}^N\) denote the set of binary phase configurations corresponding to Ising ground states. For any \(\boldsymbol{\Phi}^{\ast}\in\mathcal{G}_{\phi}\), define $\sigma_i=\cos\Phi_i^\ast\in\{\pm1\}$, then 
the trivial-state threshold and the ground-state stabilization threshold are
\begin{equation}
    \eta_{\min}^{\mathrm{U}}
    =
    -\frac{K}{2}
    \lambda_{\max}\!\left(J_{\mathrm{U}}^{\{\frac{\pi}{2}\}}\right),
    \qquad
    \eta_{\max}^{\mathrm{U}}
    =
    \min_{\boldsymbol{\Phi}^{\ast}\in\mathcal{G}_{\phi}}
    \frac{K}{2}
    \lambda_{\max}\!\left(J_{\mathrm{U}}^{\Phi^\ast}\right).
\end{equation}
For a fixed ground-state configuration \(\boldsymbol{\Phi}^{\ast}\), define
\begin{equation}
    \Delta_\mathrm{U}(\boldsymbol{\Phi}^{\ast})
    =
    \eta_{\max}^{\mathrm{U}}(\boldsymbol{\Phi}^{\ast})-\eta_{\min}^{\mathrm{U}}
    =
    \frac{K}{2}
    \left[
    \lambda_{\max}\!\left(J_{\mathrm{U}}^{\Phi^\ast}\right)
    +
    \lambda_{\max}\!\left(J_{\mathrm{U}}^{\{\frac{\pi}{2}\}}\right)
    \right], \label{eq:gap_general}
\end{equation}
where $\eta_{\max}^{\mathrm{U}}(\boldsymbol{\Phi}^{\ast})
    =
    \frac{K}{2}
    \lambda_{\max}\!\left(J_{\mathrm{U}}^{\Phi^\ast}\right)$. Next, we establish that $\Delta_{\mathrm{U}} \geq 0$. It is therefore sufficient to show that
\(\Delta_\mathrm{U}(\boldsymbol{\Phi}^{\ast})\ge0\) for every
\(\boldsymbol{\Phi}^{\ast}\in\mathcal{G}_{\phi}\), since
\begin{equation}
    \Delta_\mathrm{U}
    =
    \eta_{\max}^{\mathrm{U}}-\eta_{\min}^{\mathrm{U}}
    =
    \min_{\boldsymbol{\Phi}^{\ast}\in\mathcal{G}_{\phi}}
    \Delta_\mathrm{U}(\boldsymbol{\Phi}^{\ast}).
\end{equation}Next, define
\begin{equation}
    \Delta J_{\mathrm{U}} = J_{\mathrm{U}}^{\Phi^\ast}+J_{\mathrm{U}}^{\{\frac{\pi}{2}\}}.\label{eq:deltaJ_general}
\end{equation}We first consider phase-sum interactions. Let $\sigma_i=\cos\Phi_i^\ast\in\{\pm 1\}$. At an Ising ground state, \(f_{\mathrm{coup}}'(\Phi_i^\ast+\Phi_j^\ast)=\Gamma_0 \sigma_i \sigma_j\), while at the trivial state \(f_{\mathrm{coup}}'(\pi)=\Gamma_\pi=-\Gamma_0\). Therefore,
\begin{equation}
    [\Delta J_{\mathrm{U},+}]_{ii}
    =
    \Gamma_0\sum_j W_{ij}(\sigma_i \sigma_j-1),
    \qquad
    [\Delta J_{\mathrm{U},+}]_{ij}
    =
    \Gamma_0 W_{ij}(\sigma_i \sigma_j-1)
    \quad (i\ne j).
    \label{eq:deltaJ_dim_entries}
\end{equation}
Acting on the spin vector \(\boldsymbol{\sigma}\) gives
\begin{equation}
    [\Delta J_{\mathrm{U},+}\boldsymbol{\sigma}]_i
    =
    \Gamma_0\sum_j W_{ij}(\sigma_i \sigma_j-1)(\sigma_i+\sigma_j)=0.
    \label{eq:dim_null_vector}
\end{equation}
Indeed, if \(\sigma_i=\sigma_j\), then \(\sigma_i \sigma_j-1=0\), whereas if \(\sigma_i=-\sigma_j\), then \(\sigma_i+\sigma_j=0\). Thus \(\Delta J_{\mathrm{U},+}\) has a zero eigenvalue. For phase-difference interactions, at an Ising ground state \(f_{\mathrm{coup}}'(\Phi_i^\ast-\Phi_j^\ast)=\Gamma_0 \sigma_i \sigma_j\), while at the trivial state \(f_{\mathrm{coup}}'(0)=\Gamma_0\). In this case,
\begin{equation}
    [\Delta J_{\mathrm{U},-}]_{ii}
    =
    \Gamma_0\sum_j W_{ij}(\sigma_i \sigma_j+1),
    \qquad
    [\Delta J_{\mathrm{U},-}]_{ij}
    =
    -\Gamma_0 W_{ij}(\sigma_i \sigma_j+1)
    \quad (i\ne j).
    \label{eq:deltaJ_oim_entries}
\end{equation}
Consequently, every row of \(\Delta J_{\mathrm{U},-}\) sums to zero,
\begin{equation}
    \Delta J_{\mathrm{U},-}\mathbf 1=0,  \label{eq:oim_null_vector}
\end{equation}
implying that \(\Delta J_{\mathrm{U},-}\) also has a zero eigenvalue. In both cases \(\Delta J_{\mathrm{U}}\) is real and symmetric. Therefore, the existence of a zero eigenvalue implies
\begin{equation}
    \lambda_{\max}(\Delta J_{\mathrm{U}})\ge 0.   \label{eq:lmax_deltaJ_nonnegative}
\end{equation}
Applying Weyl's inequality to \(J_{\mathrm{U}}^{\Phi^\ast}+J_{\mathrm{U}}^{\{\frac{\pi}{2}\}}\) gives
\begin{equation}
    \lambda_{\max}(J_{\mathrm{U}}^{\Phi^\ast})
    +
    \lambda_{\max}(J_{\mathrm{U}}^{\{\frac{\pi}{2}\}})
    \ge
    \lambda_{\max}(J_{\mathrm{U}}^{\Phi^\ast}+J_{\mathrm{U}}^{\{\frac{\pi}{2}\}})
    =
    \lambda_{\max}(\Delta J_{\mathrm{U}})
    \ge 0.
    \label{eq:weyl_gap_ordering_general}
\end{equation}
Multiplying by \(K/2>0\) and using Eq.~\eqref{eq:gap_general}, we obtain
\begin{equation}
    \Delta_{\mathrm{U}}(\boldsymbol{\Phi}^{\ast})
    =
    \frac{K}{2}
    \left[
    \lambda_{\max}(J_{\mathrm{U}}^{\Phi^\ast})+\lambda_{\max}(J_{\mathrm{U}}^{\{\frac{\pi}{2}\}})
    \right] \geq 0.    
\end{equation}
Since this holds for every ground-state configuration
\(\boldsymbol{\Phi}^{\ast}\in\mathcal{G}_{\phi}\), taking the minimum over
ground-state configurations gives
\(
    \Delta_{\mathrm{U}}\ge0 .
\)\\[0.8em]
\noindent \textit{Summary:} The unified formulation reveals the common Jacobian structure of phase-encoded regularized bifurcation-based Ising gradient flows and shows that the non-negative parameter gap follows from this shared stability structure. The four models---DIM, OIM, CIM, and SBM, can be considered as specific realizations within this framework.

\subsection*{Illustrative Example}
\noindent As an illustrative example, we consider different coupling functions, $f_{\mathrm{coup}}(.)$, beyond the four systems considered in the main text. Starting with the phase-sum dynamics of DIM, we progressively incorporate higher-order odd $\sin(.)$ harmonics in the coupling function. Figures~\ref{fig:Harmonic}(a-c) show the corresponding phase dynamics realized by ramping $\eta$, the magnitude of the SHI term. The simulations were performed on the same \(N=15\)-node graph considered in Fig.~1 of the main manuscript.\\
\begin{figure}[!h]
    \centering
    \includegraphics[width=1\linewidth]{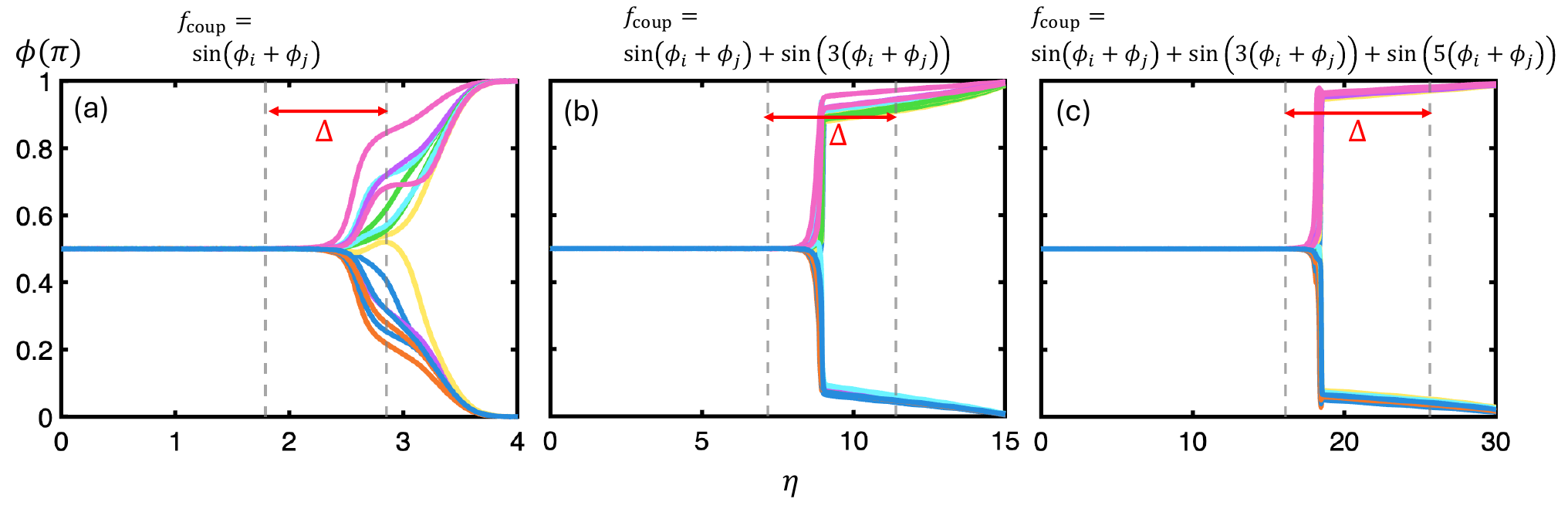}
    \caption{\justifying \smaller
Phase evolution for three different coupling functions, $f_{\mathrm{coup}}(.)$. 
The three cases considered are:
(a) \(f_{\mathrm{coup}}(\phi_i+\phi_j)=\sin(\phi_i+\phi_j)\);
(b) \(f_{\mathrm{coup}}(\phi_i+\phi_j)=\sin(\phi_i+\phi_j)+\sin\big(3(\phi_i+\phi_j)\big)\);
and (c) \(f_{\mathrm{coup}}(\phi_i+\phi_j)=\sin(\phi_i+\phi_j)+\sin\big(3(\phi_i+\phi_j)\big)+\sin \big(5(\phi_i+\phi_j)\big)\).
The corresponding parameter gaps are \(\Delta=1.06,4.23,\) and \(9.53\), respectively. The simulations were performed on the same \(N=15\)-node graph considered in Fig.~1 of the main manuscript.}
    \label{fig:Harmonic}
\end{figure}

\clearpage

\section{Supplementary Note 3}
\subsection*{Parameter Gap Analysis of Native CIM and SBM Dynamics}

\noindent Here, we analyze the stability thresholds in the native CIM and SBM dynamics (including amplitude dynamics) and show the presence of a non-negative parameter gap.

\subsubsection{\textbf{Coherent Ising Machine (CIM)}}
\noindent Following the formulation of Wang \textit{et al.} \cite{2023Bifurcation}, the dynamics of a CIM can be expressed as,
\begin{align}
\frac{d x_i}{d t} &= \left(-1 + p(t)\right)x_i - x_i^3 - \xi \sum_{j=1}^{N} W_{ij} x_j,
\end{align}
where \(x_i\in\mathbb{R}\) denotes the amplitude of the $i^{th}$ oscillator and the parameters \(p\) and \(\xi\) represent the control parameter ($p \equiv \eta$), and the coupling strength, respectively. At a fixed value of $p$, the stability of the equilibrium point $x^*$ is determined by the Jacobian matrix,
\begin{align}
J_f(x^*,p) = \big[\,(-1+p)I \;-\; 3\,\mathrm{diag}(x_1^{*2}, \ldots, x_N^{*2}) \;- \; \xi W \,\big]
\end{align}
It can be observed that the trivial state $x^\star = \mathbf{0}_N$ is always a \emph{unique} fixed point of the CIM dynamics, until it looses stability when $p$ exceeds a critical value given by $\eta_{min}\equiv p_{\{0\}}=1 - \lambda_{\text{max}}(-\xi W)$. Similarly, the critical value of $p$ at which at least one optimal Ising configuration, $\{x^*\}^N$, becomes stable can be expressed as, $\eta_{max}\equiv p_{\{\text{Ising}\}}=1- \lambda_{\text{max}}\big(- \xi W - 3\mathrm{diag}(x_1^{*2}, \ldots, x_N^{*2}) \big)$. Consequently, the parameter gap in the CIM dynamics can be expressed as,
\begin{equation}
\Delta _{\text{CIM}}=p_{\{\text{Ising}\}}-p_{\{0\}}=- \lambda_{\text{max}}\big(- \xi W - 3\mathrm{diag}(x_1^{*2}, \ldots, x_N^{*2}) \big) +  \lambda_{\text{max}}(-\xi W).  
\end{equation}
Furthermore, using Weyl's inequality, we obtain the following relationship, 
\begin{align*}
    -\lambda_{\max}(- \xi W - 3\operatorname{diag}(x_{1}^{*2},\hdots,x_{N}^{*2})) +  \lambda_{\max}(-\xi W) &= \lambda_{\min}(\xi W + 3\operatorname{diag}(x_{1}^{*2},\hdots,x_{N}^{*2})) +  \lambda_{\max}(-\xi W)\\
    & \geq \lambda_{\min}(3\operatorname{diag}(x_{1}^{*2}, \hdots, x_{N}^{*2}))= 3\min_{1\leq i\leq n}x_{i}^{*2}\geq 0.
\end{align*}
which shows that the parameter gap is always non-negative.

\label{appendix:D}

\subsubsection{\textbf{Simulated Bifurcation Machine (SBM)}}
\noindent The dynamics (simplified form) of the SBM can be expressed as~\cite{doi:10.1126/sciadv.aav2372},
\begin{equation}
\frac{d x_i}{d t} \;=\; \bigl(-\Delta_i^{\textnormal{SBM}} + p(t)\bigl)\,x_i 
\;-\;K_e\,x_i^3 
\;-\;\xi_0 \sum_{j=1}^N W_{ij}\,x_j \label{appendix:SBM_final}
\end{equation}
where $\Delta_i^{\textnormal{SBM}}$ is the positive detuning frequency between the resonance frequency of the $i^{th}$ oscillator and half the pump frequency, $p(t)$ is the time‐dependent two‐photon pumping amplitude which serves as the control parameter $\eta$, $K_e$ is the positive Kerr coefficient, and $\xi_0$ is the coupling strength. For a fixed $p$, the stability of the equilibrium point $x^*$ is determined by the Jacobian matrix,
\begin{align}
J_f(x^*,p) = \big[\,(-\Delta_i^{\textnormal{SBM}}+p)I \;-\; 3 K_e\,\mathrm{diag}(x_1^{*2},, \ldots, x_N^{*2}) \;-\; \xi_0 W \,\big]
\end{align}
When the detuning is assumed to be identical across all oscillators, the SBM dynamics reduce to those of the CIM. Consequently, the parameter gap in the SBM dynamics can be expressed as,
\begin{equation}
\Delta _{\text{SBM}}=p_{\{\text{Ising}\}}-p_{\{0\}}=- \lambda_{\text{max}}\big(- \xi_0 W - 3 K_e \mathrm{diag}(x_1^{*2}, \ldots, x_N^{*2}) \big) +  \lambda_{\text{max}}(-\xi_0 W).  
\end{equation}
Furthermore, using Weyl's inequality, we obtain the following relationship, 
\begin{align*}
    -\lambda_{\max}(- \xi_0 W - 3 K_e\operatorname{diag}(x_{1}^{*2},\hdots,x_{N}^{*2})) +  \lambda_{\max}(-\xi_0 W) &= \lambda_{\min}(\xi_0 W + 3 K_e \operatorname{diag}(x_{1}^{*2},\hdots,x_{N}^{*2})) +  \lambda_{\max}(-\xi_0 W)\\
    & \geq \lambda_{\min}(3 K_e \operatorname{diag}(x_{1}^{*2}, \hdots, x_{N}^{*2}))= 3 K_e \min_{1\leq i\leq n}x_{i}^{*2}\geq 0.
\end{align*}
which shows that the parameter gap is always non-negative.

\newpage 
\section{Supplementary Note 4}
\subsection{Sufficient condition for \boldmath{$v_{\max} \propto \sigma^{\ast}$}}\noindent 
In this section, we establish a sufficient condition under which 
$\boldsymbol{v}_{\max}$---the dominant eigenvector at the bifurcation point---aligns with an Ising ground-state configuration 
$\boldsymbol{\sigma^{\ast}} $, i.e., 
$\boldsymbol{v}_{\max} \propto \boldsymbol{\sigma^{\ast}}$. To establish this sufficient condition, we begin with the Ising Hamiltonian,
\begin{align}
H(\boldsymbol{\sigma}) &= -\frac{1}{2} \boldsymbol{\sigma^T} G \boldsymbol{\sigma}
\end{align}
where, $\boldsymbol{\sigma} \in \{-1, 1\}^N$. As detailed in the main text, we consider antiferromagnetic coupling i.e., $W_{ij}=-G_{ij}$, for which $H(\boldsymbol{\sigma}) = -\frac{1}{2}\boldsymbol{\sigma^{\top}}G\boldsymbol{\sigma} = \frac{1}{2}\boldsymbol{\sigma^{\top}}W\boldsymbol{\sigma}$. Using the relationship, $J_{\text{ts}}^{\ast}\, =\, -D -W \Rightarrow W= -D -J_{\text{ts}}^{\ast}$, $H(\boldsymbol{\sigma})$ can be reformulated as,

\begin{align}
H(\boldsymbol{\sigma}) &= -\frac{1}{2} \boldsymbol{\sigma}^{\top} (J^{\ast}_{\mathrm{ts}}) \boldsymbol{\sigma}-\frac{1}{2} \boldsymbol{\sigma}^{\top}  (D) \boldsymbol{\sigma}.
\end{align}

\noindent Furthermore, using the identity
\(
\boldsymbol{\sigma}^{\top} D \boldsymbol{\sigma}
= \sum_{i=1}^{N} D_i \boldsymbol{\sigma_i^{2}}
= 2\,\#\mathcal{E}(\mathcal{G}),
\)
where \(\#\mathcal{E}(\mathcal{G})\) denotes the number of edges of the graph
\(\mathcal{G}\), we define,
\begin{equation}
H'(\boldsymbol{\sigma}) = H(\boldsymbol{\sigma}) + \#\mathcal{E}(\mathcal{G}),
\end{equation}
which corresponds to an offset version of the original Ising Hamiltonian. Since the matrix \(J^{\ast}_{\mathrm{ts}}\) is symmetric, it admits an orthonormal
set of eigenpairs \(\{(\lambda_i,\boldsymbol{v}_i)\}_{i=1}^{N}\), ordered as
\(\lambda_{\min}=\lambda_1 \le \lambda_2 \le \cdots \le \lambda_N=\lambda_{\max}\).
Accordingly, \(J^{\ast}_{\mathrm{ts}}\) admits the spectral decomposition
\(
J^{\ast}_{\mathrm{ts}} = \sum_{i=1}^{N} \lambda_i \boldsymbol{v}_i \boldsymbol{v}_i^{\top}.
\)
Substituting this decomposition into the definition of \(H'(\boldsymbol{\sigma})\)
and simplifying yields,
\begin{align}
H'(\boldsymbol{\sigma})
&= -\frac{1}{2}\,
\boldsymbol{\sigma}^{\top}
\left(\sum_{i=1}^{N} \lambda_i \boldsymbol{v}_i \boldsymbol{v}_i^{\top}\right)
\boldsymbol{\sigma}
\nonumber \\
&= -\frac{1}{2}\sum_{i=1}^{N} \lambda_i
\left(\boldsymbol{\sigma}^{\top}\boldsymbol{v}_i\right)^2
\nonumber \\
&= -\frac{1}{2}
\left(
\lambda_{\max}\left(\boldsymbol{\sigma}^{\top}\boldsymbol{v}_{\max}\right)^2
+
\sum_{i=1}^{N-1} \lambda_i
\left(\boldsymbol{\sigma}^{\top}\boldsymbol{v}_i\right)^2
\right),
\label{eqn:Hprime_definition}
\end{align}
To simplify further, note that
\begin{equation}
\sum_{i=1}^{N} (\boldsymbol{\sigma}^{\top}\boldsymbol{v}_i)^2
= \|\boldsymbol{\sigma}\|^2 = N,
\end{equation}
since \(\{\boldsymbol{v}_i\}_{i=1}^{N}\) forms an orthonormal basis and
\(\boldsymbol{\sigma} \in \{-1,1\}^{N}\).
Using this identity, Eq.~\eqref{eqn:Hprime_definition} reduces to
\begin{align}
H'(\boldsymbol{\sigma})
&= -\frac{1}{2}
\left(
\lambda_{\max}
\left(
N - \sum_{i=1}^{N-1}
(\boldsymbol{\sigma}^{\top}\boldsymbol{v}_i)^2
\right)
+ \sum_{i=1}^{N-1}
\lambda_i (\boldsymbol{\sigma}^{\top}\boldsymbol{v}_i)^2
\right)
\nonumber \\
&= -\frac{N}{2}\lambda_{\max}
+ \frac{1}{2}\sum_{i=1}^{N-1}
\gamma_i(\boldsymbol{\sigma}^{\top}\boldsymbol{v}_i)^2,
\label{eq:Hising}
\end{align}
where \(\gamma_i = \lambda_{\max} - \lambda_i\) and
\(\gamma_1 \ge \gamma_2 \ge \cdots \ge \gamma_{N-1} \ge 0\). Identifying a spin configuration \(\boldsymbol{\sigma}\) that minimizes
\(H'(\boldsymbol{\sigma})\) is equivalent to minimizing the Ising Hamiltonian
\(H\), and is therefore NP-hard in general.

\noindent Nevertheless, if a spin configuration
\(\boldsymbol{\sigma}^{\ast} \in \{-1,1\}^{N}\) satisfies
\begin{equation}
(\boldsymbol{\sigma}^{\ast\top}\boldsymbol{v}_i)^2 = 0,
\qquad i = 1,2,\ldots,N-1,
\label{eqn:sufficiency_condition}
\end{equation}
then \(H'(\boldsymbol{\sigma}^{\ast}) = -\tfrac{N}{2}\lambda_{\max}
\le H'(\boldsymbol{\sigma})\) for all other \(\boldsymbol{\sigma}\),
and \(\boldsymbol{\sigma}^{\ast}\) is an Ising ground state.
Consequently, for
\(\boldsymbol{\sigma}_{\max} = \mathrm{sign}(\boldsymbol{v}_{\max})\)
to be a ground-state configuration, a sufficient condition is
\begin{equation}
(\boldsymbol{\sigma}_{\max}^{\top}\boldsymbol{v}_i)^2 = 0,
\qquad i = 1,2,\ldots,N-1.
\label{eq:sufficient_condition}
\end{equation}
We show that this condition Eq. \ref{eq:sufficient_condition} is satisfied for bipartite graphs.\\

\noindent 
\textbf{Bipartite graphs.}  
Let $\mathcal{G} = (\mathcal{V}, \mathcal{E})$ be a connected bipartite graph, and let $G \in \mathbb{R}^{N \times N}$ denote its symmetric adjacency matrix. 
\begin{lemma}
    Let $(\boldsymbol{v}_i)_{i=1}^N$ be the eigenvectors of the Jacobian matrix $J_{ts}^{*}$. We establish the following properties:
\begin{itemize}
  \item[(a)] The eigenvectors $(\boldsymbol{v}_i)_{i=1}^N$ of $J_{ts}^{*}$ form an orthonormal basis of $\mathbb{R}^N$.
  \item[(b)] The sign vector $\boldsymbol{\sigma}_{\max}$ coincides (up to normalization) with the dominant eigenvector $\boldsymbol{v}_{\max}$ of $J_{ts}^{*}$.
\end{itemize}
\label{lemma: sufficient_conditions_at_trivial_equilibrium}
\end{lemma}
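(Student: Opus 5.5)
Part (a) follows from symmetry, and part (b) from conjugating the trivial-state Jacobian by the bipartition sign matrix, which turns it into the negative graph Laplacian.

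\emph{Identifying the matrix.} In the antiferromagnetic setting $G_{ij}=-1$ on edges, so $W=-G$ is the $0/1$ adjacency matrix $A$ of $\mathcal{G}$. The degree matrix is $D=\operatorname{diag}(\deg_i)$. Hence the trivial-state Jacobian used above is
\[
J^{\ast}_{\mathrm{ts}}=-D-W=-(D+A)=-Q,
\]
where $Q$ is the signless Laplacian. This matrix is real and symmetric.

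\emph{Part (a).} Since $J^{\ast}_{\mathrm{ts}}$ is real and symmetric, the spectral theorem gives an orthonormal eigenbasis $(\boldsymbol{v}_i)_{i=1}^N$ of $\mathbb{R}^N$. This is the decomposition already used in Eq.~\eqref{eqn:Hprime_definition}, so I expect (a) to take one line.

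\emph{Part (b).} Because $\mathcal{G}$ is bipartite with parts $\mathcal{V}_1,\mathcal{V}_2$, I will define $\boldsymbol{\sigma}^{\mathrm{b}}\in\{\pm1\}^N$ by $\sigma^{\mathrm{b}}_i=+1$ on $\mathcal{V}_1$ and $-1$ on $\mathcal{V}_2$, and set $S=\operatorname{diag}(\boldsymbol{\sigma}^{\mathrm{b}})$. Then $S=S^{-1}$, and every edge joins opposite parts, so $SAS=-A$ while $SDS=D$. Therefore
\[
SQS=D-A=L,
\]
the ordinary graph Laplacian. It follows that $J^{\ast}_{\mathrm{ts}}$ is orthogonally similar to $-L$, with eigenvectors mapped by $S$.

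The Laplacian $L$ is positive semidefinite, with smallest eigenvalue $0$ and eigenvector $\mathbf{1}$. Since $\mathcal{G}$ is connected, this eigenvalue is simple, because the multiplicity of $0$ in $L$ equals the number of connected components. Consequently $\lambda_{\max}(J^{\ast}_{\mathrm{ts}})=0$ is simple, and its eigenvector is
\[
\boldsymbol{v}_{\max}=\pm S\mathbf{1}/\sqrt{N}=\pm\boldsymbol{\sigma}^{\mathrm{b}}/\sqrt{N}.
\]
Thus $\boldsymbol{\sigma}_{\max}=\operatorname{sign}(\boldsymbol{v}_{\max})=\pm\boldsymbol{\sigma}^{\mathrm{b}}=\sqrt{N}\,\boldsymbol{v}_{\max}$. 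This is (b), up to the global spin flip, which leaves the Ising energy unchanged.

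\emph{Main obstacle.} The step needing most care is this last one. Simplicity of the top eigenvalue, which comes from connectivity, is what makes $\boldsymbol{v}_{\max}$ well defined. It also guarantees that $\operatorname{sign}$ is applied to a vector with no zero entries.

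\emph{Consequence.} As a corollary, orthonormality gives $\boldsymbol{\sigma}_{\max}^{\top}\boldsymbol{v}_i=0$ for $i\le N-1$. This is exactly the sufficient condition Eq.~\eqref{eq:sufficient_condition}, so $\boldsymbol{\sigma}_{\max}$ is a ground state with $H=-\tfrac N2\lambda_{\max}-\#\mathcal{E}(\mathcal{G})=-\#\mathcal{E}(\mathcal{G})$, i.e.\ every edge is cut.
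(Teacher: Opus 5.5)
Your proof is correct. Part (a) matches the paper, but for part (b) you take a different route.

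The paper verifies $J_{ts}^{\ast}\boldsymbol{\sigma}=0$ by a direct block computation, using $D_1\mathbf{1}_m=B\mathbf{1}_n$ and $B^{\top}\mathbf{1}_m=D_2\mathbf{1}_n$. It then concludes that $0$ is the top eigenvalue because $J_{ts}^{\ast}$ is the negative of the positive semidefinite signless Laplacian. You instead conjugate by $S=\operatorname{diag}(\boldsymbol{\sigma}^{\mathrm{b}})$ to write $J_{ts}^{\ast}=-SLS$, and then import the spectral facts about the ordinary Laplacian.

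The paper's computation is shorter, but it never uses the connectivity hypothesis. It therefore does not show that the eigenvalue $0$ is simple, so it does not pin down ``the'' dominant eigenvector. For a bipartite graph with several components, the top eigenspace is spanned by one sign vector per component, and a generic element of it is not proportional to its own sign pattern.

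Your similarity argument gets simplicity directly from the fact that $\dim\ker L$ equals the number of connected components. It also gives more: the whole spectrum of $J_{ts}^{\ast}$ is $-\mathrm{spec}(L)$. For instance, the gap between the two largest eigenvalues of $J_{ts}^{\ast}$ (the quantity $\gamma_{N-1}$ of Supplementary Note~5, up to the coupling scale $K$) is exactly the algebraic connectivity of $\mathcal{G}$.

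The paper's route could be repaired just as cheaply. Note that $\boldsymbol{x}^{\top}(D+W)\boldsymbol{x}=\sum_{(i,j)\in\mathcal{E}}(x_i+x_j)^2$. On a connected bipartite graph this vanishes only for $\boldsymbol{x}\propto\boldsymbol{\sigma}$, which again gives simplicity.
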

An immediate consequence of  Lemma~\ref{lemma: sufficient_conditions_at_trivial_equilibrium}, it follows that
\[
\bigl(\boldsymbol{\sigma}_{\max} \cdot \boldsymbol{v}_i\bigr)^2 =
\begin{cases}
0, & i = 1, 2, \ldots, N-1, \\[6pt]
N, & i = N,
\end{cases}
\]
which implies that the sufficient condition stated in Eq.~\eqref{eq:sufficient_condition} is satisfied. As a result the leading eigenvector at the point of bifurcation can be used to obtain the Ising ground state in bipartite graphs. 

\begin{proof}
\mbox{}\\
\begin{enumerate}
    \item[\textbf{(a)}] Since $G$ is undirected, its weight matrix $W$ is symmetric, i.e., $W =W^{\top}$. Consequently, $J_{ts}^{*} = -D - W$ is also symmetric. By the spectral theorem for real symmetric matrices, $J_{ts}^{*}$ is diagonalizable and admits an an orthonormal basis of eigenvectors in $\mathbb{R}^n$. Eigenvectors associated with distinct eigenvalues are orthogonal, and for repeated eigenvalues, the eigen-space admits an orthonormal bases. Therefore, the eigenvectors of $J_{ts}^{*}$ can be chosen to form an orthonormal basis of $\mathbb{R}^n$, establishing property (a).

\item[\textbf{(b)}]
For a bipartite graph $G=(V_1 \cup V_2,E)$, the matrix $J_{ts}^{\ast}$ admits the block decomposition (refer to the critical points in the bifurcation dynamics of the DIM in the Supplementary Note~2)
\[
J_{ts}^{\ast}
= \begin{pmatrix}
-D_1 & -W \\
-W^{\top} & -D_2
\end{pmatrix}=
\begin{pmatrix}
-D_1 & -B \\
-B^{\top} & -D_2
\end{pmatrix},
\]
where $B$ denotes the biadjacency matrix and $D_1$, $D_2$ are diagonal degree matrices
associated with the vertex sets $V_1$ and $V_2$, respectively.

Define the sign vector $\boldsymbol{\sigma} \in \mathbb{R}^N$ by
\[
\boldsymbol{\sigma_i} =
\begin{cases}
+1, & i \in V_1,\\
-1, & i \in V_2.
\end{cases}
\]

Since $G$ is bipartite, all the neighbors of any vertex $i \in V_1$ lie in $V_2$, and vice versa.
As a direct consequence of this bipartite structure, the action of $J_{ts}^{\ast}$ on $\boldsymbol{z}$ satisfies
\[
J_{ts}^{\ast} \boldsymbol{\sigma}
=
-\begin{pmatrix}
D_1 & B \\
B^{\top} & D_2
\end{pmatrix}
\begin{pmatrix}
\mathbf{1}_m \\
-\mathbf{1}_n
\end{pmatrix}
=-\begin{pmatrix}
D_1 \mathbf{1}_m - B\mathbf{1}_n \\
B^{\top}\mathbf{1}_m - D_2 \mathbf{1}_n
\end{pmatrix}=
0.
\]

Hence, $\boldsymbol{\sigma}$ is an eigenvector of $J_{ts}^{\ast}$ associated with the eigenvalue $0$.
Since $J_{ts}^{\ast}$ is negative semidefinite—being the negative of the signless Laplacian—this eigenvalue is the largest eigenvalue of the matrix. Therefore, $\boldsymbol{\sigma}$ coincides with the dominant eigenvector 
$\boldsymbol{v}_{\max}$ of $J_{ts}^{\ast}$ (up to normalization), implying 
$\boldsymbol{\sigma} = \boldsymbol{\sigma}_{\max}$.
This completes the proof of property~(b).
\end{enumerate}
\end{proof}

\textbf{Simulations}: From a functional standpoint, when the dominant eigenvector aligns with an optimal Ising ground-state configuration, the dynamics are expected to converge to the ground state with high probability. To corroborate this, we analyzed a representative randomly-generated bipartite graph with $100$ nodes. The graph was solved using each of the four models—OIM, DIM, CIM, and SBM—over $50$ independent trials using the same noise amplitude ($A_n$). In every case, all models consistently found the ground state (\text{Ising-Energy} $= -764$). The corresponding results and the simulation parameters have been summarized in Table \ref{tb:bipartite}.

\begin{table}[h]
\centering
\setlength{\tabcolsep}{12pt}        
\renewcommand{\arraystretch}{1.15}  
\caption{\justifying Results for a 100-node bipartite graph. Across 50 independent trials, DIM, OIM, CIM, and SBM consistently converged to the optimal Ising-energy (=$-$764).}
\label{tab:model_results}
\begin{tabular}{c c c c c}
\hline
\textbf{Model} & \textbf{Ising-Energy} & \textbf{Count} & 
\textbf{Parameters} & 
\textbf{Control Parameter ($\eta(t)$)} \\ 
\hline

DIM & -764 & 50 &
\begin{tabular}[c]{@{}c@{}}
$K = 1$ \\[2pt]
$t_{\text{stop}} = 40$ \\[2pt]
$A_n = 10^{-3}$
\end{tabular}
& $K_s(t) = \dfrac{4t}{t_{\text{stop}}}$ \\ 
\hline

OIM & -764 & 50 &
\begin{tabular}[c]{@{}c@{}}
$K = 1$ \\[2pt]
$t_{\text{stop}} = 40$ \\[2pt]
$A_n = 10^{-3}$
\end{tabular}
& $K_s(t) = \dfrac{4t}{t_{\text{stop}}}$ \\ 
\hline 

CIM & -764 & 50 &
\begin{tabular}[c]{@{}c@{}}
$\xi = 0.05$ \\[2pt]
$t_{\text{stop}} = 40$ \\[2pt]
$A_n = 10^{-3}$
\end{tabular}
& $p(t) = \dfrac{4t}{t_{\text{stop}}}$ \\ 
\hline

SBM & -764 & 50 &
\begin{tabular}[c]{@{}c@{}}
$K_e = 1.0$ \\[2pt]
$\xi_0 = 0.05$ \\[2pt]
$\Delta_i^{\textnormal{SBM}} = 1, \forall i$ \\[2pt]
$t_{\text{stop}} = 40$ \\[2pt]
$A_n = 10^{-3}$
\end{tabular}
& $p(t) = \dfrac{4t}{t_{\text{stop}}}$ \\ 
\hline\\

\end{tabular}
\label{tb:bipartite}
\end{table}

\clearpage

\noindent \textbf{Impact of Subdominant Eigen Modes} We also analyze how the weighted projection of the spin configuration selected by the dynamics onto the subdominant eigenspaces impacts the computational performance. We begin by considering the the spectral decomposition of the Ising energy. As discussed earlier, up to an additive
    constant, the Ising energy can be expressed as
    \begin{align}
    H'(\boldsymbol{\sigma})
    &=
    -\frac{N}{2}\lambda_{\max}
    +
    \frac{1}{2}
    \sum_{i=1}^{N-1}
    \gamma_i
    (\boldsymbol{\sigma}^{\top}\boldsymbol{v}_i)^2,
    \label{eq:Hising}
    \end{align}
    where $\gamma_i=\lambda_{\max}-\lambda_i\geq 0$
    and $\{\boldsymbol{v}_i\}$ are the eigenvectors of the Jacobian at the
    bifurcation point. Here, $\boldsymbol{\sigma}\in\{\pm1\}^N$ denotes an arbitrary
    Ising spin configuration. In the context of the dynamics, it may be identified
    with the binarized state selected by the trajectory. Equation~\eqref{eq:Hising} shows that since the first term is independent of \(\boldsymbol{\sigma}\), minimizing the Ising energy is equivalent to minimizing the weighted projection of \(\boldsymbol{\sigma}\) onto the subdominant eigenspaces,
\[
    \sum_{i=1}^{N-1}
    \gamma_i
    (\boldsymbol{\sigma}^{\top}\boldsymbol{v}_i)^2 .
\]
Moreover, for an optimal Ising configuration $\boldsymbol{\sigma^\ast}$, $    \sum_{i=1}^{N-1}
    \gamma_i
    (\boldsymbol{\sigma}^{\ast\top}\boldsymbol{v}_i)^2$ is minimized.  Thus, for a selected trajectory, the relevance of the subdominant modes is manifested through the \emph{excess  weighted subdominant projection} in comparison to an optimal configuration. For example, comparing the configuration predicted by the dominant bifurcation mode,
\(\boldsymbol{\sigma}_{\max}=\operatorname{sign}(\boldsymbol{v}_{\max})\), with an optimal Ising configuration \(\boldsymbol{\sigma}^{\ast}\), gives
\begin{equation}
\begin{split}
\mathcal{H} 
&=
H'(\boldsymbol{\sigma}_{\max})
-
H'(\boldsymbol{\sigma}^{\ast}) \\
&=
\frac{1}{2}
\sum_{i=1}^{N-1}
\gamma_i
\left[
(\boldsymbol{\sigma}_{\max}^{\top}\boldsymbol{v}_i)^2
-
(\boldsymbol{\sigma}^{\ast\top}\boldsymbol{v}_i)^2
\right].
\label{eq:Hising_diff}
\end{split}
\end{equation}
Equation~\eqref{eq:Hising_diff} therefore shows that analog Ising machine performance relative to the optimum is governed by the weighted excess subdominant projection of the dynamically selected configuration relative to the projection of the optimal configuration.\\

\clearpage
\section{Supplementary note 5}
\subsection*{Stochastic mode selection during gap traversal}
\noindent The operational dynamics within the gap are governed by the interplay between noise, the temporal profile of the regularization parameter \(\eta\), and the eigenspectrum of the linearized dynamics. Moreover, the trajectory selected as the system traverses this interval can be viewed as the outcome of a stochastic mode-selection process. To elucidate the interplay, we consider a locally linear ramp,
\[
\eta(t)=\eta_{\min}+\rho_\eta t,
\qquad
\rho_\eta=\frac{\Delta}{\Delta t},
\]
where $\Delta t$ denotes the time taken by the system to traverse the gap $\Delta$ and \(\rho_\eta\) is the ramp rate of the regularization parameter. In close proximity of the trivial-state bifurcation, the noisy linearized dynamics can be expressed as
\begin{equation}
    d(\delta \phi)=J_f(\eta(t))\delta \phi\,dt+A_n\,dB_t ,
\end{equation}
where \(A_n\) is the noise amplitude and \(B_t\) is a standard Wiener process, i.e., Brownian motion. Since the dynamics considered here are gradient flows, \(J_f(\eta)\) is symmetric at the trivial equilibrium and admits an orthonormal eigen-basis. Neglecting eigen-basis rotation terms over the local interval near the crossing, and projecting onto the instantaneous eigenmodes gives
\begin{equation}
    d\Omega_k=\lambda_k(\eta(t))\Omega_k\,dt+A_n\,dB_k(t),
\end{equation}
where \(\Omega_k\) is the modal amplitude of the \(k^{\mathrm{th}}\) eigenmode. For a mode that becomes unstable at \(\eta=\eta_k\), we approximate the corresponding eigenvalue locally as
\begin{equation}
    \lambda_k(\eta)\simeq s_k(\eta-\eta_k),
    \qquad
    s_k=\left.\frac{d\lambda_k}{d\eta}\right|_{\eta=\eta_k}.
\end{equation}
The corresponding crossing time is
\begin{equation}
    t_k=\frac{\eta_k-\eta_{\min}}{\rho_\eta}.
\end{equation}
For modes that cross within the parameter gap, i.e.,
\(
\eta_k<\eta_{\max}
\quad\Longleftrightarrow\quad
t_k<\Delta t,
\)
the modal SDE after crossing becomes
\begin{equation}
    d\Omega_k=s_k\rho_\eta(t-t_k)\Omega_k\,dt+A_n\,dB_k(t),
    \qquad t>t_k .
\end{equation}
For \(t>t_k\), the mean modal amplitude satisfies
\begin{equation}
    \frac{d\langle \Omega_k\rangle}{dt}
    =
    s_k\rho_\eta(t-t_k)\langle \Omega_k\rangle .
\end{equation}
This follows by taking the expectation of the modal SDE and using
\(\langle dB_k(t)\rangle=0\). Therefore,
\begin{equation}
    \langle \Omega_k(t)\rangle
    =
    \Omega_k(t_k)
    \exp\left[
        \frac{s_k\rho_\eta}{2}(t-t_k)^2
    \right],
\end{equation}
and, at the end of the gap traversal,
\begin{equation}
    \langle \Omega_k(\Delta t)\rangle
    =
    \Omega_k(t_k)
    \exp\left[
        \frac{s_k\rho_\eta}{2}(\Delta t-t_k)^2
    \right].
\end{equation}
Thus, the modal mean reflects the deterministic growth of any initial projection along an unstable eigenmode. The variance is obtained by applying It\^{o}'s rule to \(\Omega_k^2\). Writing
\[
q_k(t)=s_k\rho_\eta(t-t_k),
\]
the scalar modal SDE becomes
\[
d\Omega_k=q_k(t)\Omega_k\,dt+A_n\,dB_k(t).
\]
Using
\(
d(\Omega_k^2)=2\Omega_k\,d\Omega_k+(d\Omega_k)^2,
\)
together with
\(
(dB_k)^2=dt,\,\, dt\,dB_k=0,\,\, (dt)^2=0,
\)
gives
\begin{equation}
    \frac{d\langle \Omega_k^2\rangle}{dt}
    =
    2q_k(t)\langle \Omega_k^2\rangle+A_n^2 .
\end{equation}
Since
\(
\mathrm{Var}[\Omega_k]=\langle \Omega_k^2\rangle-\langle \Omega_k\rangle^2\,\,
\)
and
\(
\,\,\frac{d\langle \Omega_k\rangle^2}{dt}
=
2q_k(t)\langle \Omega_k\rangle^2,
\)
the variance satisfies
\begin{equation}
    \frac{d\,\mathrm{Var}[\Omega_k]}{dt}
    =
    2q_k(t)\mathrm{Var}[\Omega_k]+A_n^2 .
\end{equation}
Substituting \(q_k(t)=s_k\rho_\eta(t-t_k)\), we obtain
\begin{equation}
    \frac{d\,\mathrm{Var}[\Omega_k]}{dt}
    =
    2s_k\rho_\eta(t-t_k)\mathrm{Var}[\Omega_k]+A_n^2 .
\end{equation}
Solving this linear ODE from \(t_k\) to \(\Delta t\) gives the general expression
\begin{align}
    \mathrm{Var}[\Omega_k(\Delta t)]
    &=
    \underbrace{
    \mathrm{Var}[\Omega_k(t_k)]
    \exp\left[
        s_k\rho_\eta(\Delta t-t_k)^2
    \right]
    }_{\text{amplified pre-existing fluctuation}}
    \nonumber\\
    &\quad+
    \underbrace{
    A_n^2
    \exp\left[
        s_k\rho_\eta(\Delta t-t_k)^2
    \right]
    \int_{t_k}^{\Delta t}
    \exp\left[
        -s_k\rho_\eta(\tau-t_k)^2
    \right]d\tau
    }_{\text{noise injected during gap traversal}} . \label{eq:var_general}
\end{align}
The first term represents thermal or stochastic fluctuations already present when the mode becomes unstable. The second term represents noise injected after the crossing and amplified during the remaining traversal of the gap.\\[0.8em]
We also consider the role of the local eigengap at the bifurcation,
\[
\gamma_{N-1}=\lambda_N-\lambda_{N-1},
\]
where \(\lambda_N=0\) is the dominant eigenvalue (at bifurcation) and \(\lambda_{N-1}\) is the weakest stable competing eigenvalue. As analyzed in the following section, the r.m.s. fluctuation of the strongest competing stable mode at the bifurcation is
\begin{equation}
    \Omega_{N-1,\mathrm{rms}}=\frac{A_n}{\sqrt{2\gamma_{N-1}}}.
\end{equation}
Thus, for the strongest competing mode,
\begin{equation}
    \mathrm{Var}[\Omega_{N-1}(t_{N-1})]
    \simeq
    \frac{A_n^2}{2\gamma_{N-1}}. \label{eq:var_a2}
\end{equation}
which impacts the pre-crossing fluctuations. Substituting Eq.~\eqref{eq:var_a2} into Eq.~\eqref{eq:var_general} gives,
\begin{align}
    \mathrm{Var}[\Omega_{N-1}(\Delta t)]
    &\simeq
    \frac{A_n^2}{2\gamma_{N-1}}
    \exp\left[
        s_{N-1}\rho_\eta(\Delta t-t_{N-1})^2
    \right]
    \nonumber\\
    &\quad+
    A_n^2
    \exp\left[
        s_{N-1}\rho_\eta(\Delta t-t_{N-1})^2
    \right]
    \int_{t_{N-1}}^{\Delta t}
    \exp\left[
        -s_{N-1}\rho_\eta(\tau-t_{N-1})^2
    \right]d\tau .
\end{align}
This expression separates two effects. The eigengap \(\gamma_{N-1}\) controls the magnitude of the pre-existing fluctuation in the competing mode at the time it becomes unstable through \(\Omega_{N-1,\mathrm{rms}}=A_n/\sqrt{2\gamma_{N-1}}\). The ramp rate \(\rho_\eta=\Delta/\Delta t\) then controls how strongly this fluctuation, together with noise injected during the traversal, is amplified before the Ising-encoded state is stabilized. Thus, a larger eigengap suppresses the initial competing-mode fluctuation, while a faster traversal reduces the time available for that fluctuation to grow.\\[0.8em]
A sufficient condition for robust dominant-mode selection is that the r.m.s. amplitude of each competing mode remain small compared with the readout margin of the dominant mode:
\begin{equation}
    \sqrt{\mathrm{Var}[\Omega_k(\Delta t)]}
    \ll
    |\langle \Omega_N(\Delta t)\rangle|
    \min_i |(v_{\mathrm{max}})_i|,
    \qquad 1\leq k< N.
\end{equation}
Here \(v_{\mathrm{max}}\) denotes the dominant bifurcation eigenvector. This condition states that noise-amplified competing modes must remain small enough that they do not alter the signs of the components selected by the dominant mode. For the strongest competing mode, this condition becomes
\begin{align}
    &\left[
    \frac{A_n^2}{2\gamma_{N-1}}
    \exp\left[
        s_{N-1}\rho_\eta(\Delta t-t_{N-1})^2
    \right]
    +
    A_n^2
    \exp\left[
        s_{N-1}\rho_\eta(\Delta t-t_{N-1})^2
    \right]
    \int_{t_{N-1}}^{\Delta t}
    \exp\left[
        -s_{N-1}\rho_\eta(\tau-t_{N-1})^2
    \right]d\tau
    \right]^{1/2}
    \nonumber\\
    &\hspace{3cm}
    \ll
    |\langle \Omega_N(\Delta t)\rangle|
    \min_i |(v_{\mathrm{max}})_i|.
\end{align}
The above results help elucidate the role of the ramp and the eigenspectrum. A mode that becomes unstable early in the gap has a longer growth time \((\Delta t-t_k)\) and can acquire appreciable amplitude even if it is initially seeded only by noise. A slower ramp increases \(\Delta t\), which can strengthen the growth of the desired dominant mode, but it also increases the time available for competing unstable modes to amplify. Conversely, a faster ramp reduces the residence time in the gap, but can also move the dynamics away from the quasi-static bifurcation picture.\\[0.8em]
Taken together, this analysis delineates when the trajectory remains governed by the dominant bifurcation mode even in the presence of noise and a time-dependent $\eta$. This corresponds to the operational regime in which the structural parameter-gap analysis, eigenvector-alignment condition, and synchronization results remain applicable.

\subsection*{Noise-induced seed fluctuation at the bifurcation}
\noindent Here, we derive the seed fluctuation of the strongest competing stable mode at the bifurcation. Near the trivial-state bifurcation, weakly stable competing modes are especially sensitive to noise, and their excitation can influence the post-bifurcation trajectory. To quantify this effect, we model the noisy dynamics near the trivial equilibrium \(\phi_0\) as
\begin{equation}
    d\phi
    =
    f(\phi,\eta)\,dt
    +
    A_n\,dB_t,
    \qquad
    f(\phi,\eta)=-\nabla_{\phi} E(\phi,\eta;W),
\end{equation}
where \(A_n\) is the scalar noise amplitude and \(B_t\) is a standard Brownian motion. Linearizing around \(\phi_0\), with \(\delta\phi=\phi-\phi_0\), gives
\begin{equation}
    d(\delta\phi)
    =
    J_f(\eta)\delta\phi\,dt
    +
    A_n\,dB_t,
\end{equation}
where
\begin{equation}
    J_f(\eta)
    =
    \left.
    \frac{\partial f}{\partial \phi}
    \right|_{\phi=\phi_0}.
\end{equation}
At the bifurcation point, let the eigenvalues of \(J_f\) be ordered as
\begin{equation}
    \lambda_1
    \le
    \lambda_2
    \le
    \cdots
    \le
    \lambda_{N-1}
    \le
    \lambda_N = \lambda_{\max}=0 .
\end{equation}
The eigenvector \(\boldsymbol{v}_N=\boldsymbol{v}_{\mathrm{max}}\) defines the marginal bifurcating mode, while \(\boldsymbol{v}_{N-1}\) is the weakest stable competing mode. We define the local eigengap as
\begin{equation}
    \gamma_{N-1}
    =
    \lambda_{N}-\lambda_{N-1}
    =
    -\lambda_{N-1}.
\end{equation}
Since the dynamics are a gradient flow, \(J_f\) is symmetric at the equilibrium point and admits an orthonormal eigenbasis. Expanding the perturbation in this basis,
\begin{equation}
    \delta\phi(t)
    =
    \sum_{k=1}^{N} \Omega_k(t)\boldsymbol{v}_k,
    \qquad
    \Omega_k(t)=\boldsymbol{v}_k^{T}\delta\phi(t),
\end{equation}
and approximating the eigenbasis as fixed over a sufficiently small interval near the bifurcation, the projected modal dynamics are
\begin{equation}
    d\Omega_k
    =
    \lambda_k \Omega_k\,dt
    +
    A_n\,dB_k(t),
\end{equation}
where \(B_k(t)=\boldsymbol{v}_k^T B_t\) is the Brownian noise projected onto the \(k^{\mathrm{th}}\) eigenmode. For the strongest competing stable mode, this reduces to
\begin{equation}
    d\Omega_{N-1}
    =
    -\gamma_{N-1}\,\Omega_{N-1}\,dt
    +
    A_n\,dB_{N-1}(t).
\end{equation}
This is an Ornstein--Uhlenbeck process. Its mean evolves as
\begin{equation}
    \mathbb{E}[\Omega_{N-1}(t)]
    =
    \Omega_{N-1}(0)e^{-\gamma_{N-1} t},
\end{equation}
showing that deterministic projections onto the competing mode decay at a rate set by the local eigengap. The stationary variance is
\begin{equation}
    \mathrm{Var}(\Omega_{N-1})
    =
    \frac{A_n^2}{2\gamma_{N-1}}.
\end{equation}
Equivalently, the root-mean-square seed fluctuation in the strongest competing stable mode is
\begin{equation}
    \Omega_{N-1,\mathrm{rms}}
    =
    \frac{A_n}{\sqrt{2\gamma_{N-1}}}.
\end{equation}
Thus, a larger local eigengap suppresses noise-induced excitation of the competing mode, whereas a smaller eigengap increases the seed fluctuation available for subsequent amplification during traversal of the parameter gap.

\subsection{Device-to-device mismatch} 

\noindent Device-to-device mismatch perturbs the Jacobian and therefore shifts the eigenvalue crossings that define the stability thresholds. If
                \(J_f\rightarrow J_f+\delta J_f\), then for a simple eigenvalue of a symmetric
                Jacobian,
                \[
                    \delta\lambda_i\simeq v_i^\top \delta J_f\,v_i .
                \]
                Since a stability threshold \(\eta_i^\ast\) is defined by \(\lambda_i(\eta_i^\ast)=0\),
                a first-order Taylor expansion yields
                \[
                    \delta\eta_i
                    \simeq
                    -\frac{\delta\lambda_i}
                    {\left.\partial\lambda_i/\partial\eta\right|_{\eta_i^\ast}}
                    =
                    -\frac{v_i^\top \delta J_f\,v_i}
                    {\left.\partial\lambda_i/\partial\eta\right|_{\eta_i^\ast}} .
                \]
                Thus mismatch can shift the trivial-state and Ising-state thresholds by
                different amounts, thereby narrowing or broadening the effective parameter gap.

                \subsection{Signal attenuation} 
                \noindent Signal attenuation admits a similar spectral interpretation. Uniform attenuation of all coupling strengths primarily rescales the effective coupling strength and hence the corresponding stability thresholds. When the attenuation is nonuniform, it changes the effective interaction matrix \(W\), thereby modifying the Jacobian spectrum, the width of the parameter gap, and the alignment of the dominant bifurcation mode with low-energy Ising configurations.

\clearpage
\section{Supplementary note 6}
\subsection*{Dynamics of the Hybrid Ising Machine (HyIM)}

\noindent Here, we establish that the HyIM dynamics minimize an energy function
whose minimizers coincide with those of the Ising Hamiltonian.
Consider the phase dynamics
\begin{align}
\frac{d \phi_i}{d t} &=
K\!\sum_{\substack{j=1\\j\neq i}}^{N} W_{ij}
\Bigl[
\alpha \sin(\phi_i+\phi_j)
+(1-\alpha)\sin(\phi_i-\phi_j)
\Bigr]
- K_{s}\sin(2\phi_i),
\label{eq:mixed_dyn}
\end{align}
with \(\alpha \in [0,1]\).
Define the energy function
\begin{align}
E(\boldsymbol{\phi}) &=
\frac{K}{2}\!\sum_{\substack{i,j=1\\j\neq i}}^{N} W_{ij}
\Bigl[
\alpha \cos(\phi_i+\phi_j)
+(1-\alpha)\cos(\phi_i-\phi_j)
\Bigr]
- \frac{K_{s}}{2}\sum_{i=1}^{N}\cos(2\phi_i).
\label{eq:Appendix_mixed_energy}
\end{align}
Computing the gradient of \(E(\boldsymbol{\phi})\) yields
\begin{align}
\frac{\partial E}{\partial \phi_k}
&=
-\frac{K}{2}\!\sum_{\substack{l=1\\l\neq k}}^{N} W_{kl}
\Bigl[
\alpha \sin(\phi_k+\phi_l)
+(1-\alpha)\sin(\phi_k-\phi_l)
\Bigr]
\nonumber \\
&\quad
-\frac{K}{2}\!\sum_{\substack{l=1\\l\neq k}}^{N} W_{lk}
\Bigl[
\alpha \sin(\phi_l+\phi_k)
-(1-\alpha)\sin(\phi_l-\phi_k)
\Bigr]
+ K_s \sin(2\phi_k).
\end{align}
Since \(W_{kl}=W_{lk}\) and
\(\sin(\phi_l\pm\phi_k)=\pm\sin(\phi_k\pm\phi_l)\),
the two sums are identical, giving
\begin{align}
\frac{\partial E}{\partial \phi_k}
=
- K\!\sum_{\substack{l=1\\l\neq k}}^{N} W_{kl}
\Bigl[
\alpha \sin(\phi_k+\phi_l)
+(1-\alpha)\sin(\phi_k-\phi_l)
\Bigr]
+ K_s \sin(2\phi_k).
\end{align}
Comparing with Eq.~\eqref{eq:mixed_dyn}, we obtain the gradient-flow relation
\begin{equation}
\frac{d\phi_k}{dt}
=
-\,\frac{\partial E}{\partial \phi_k}.
\end{equation}
Consequently,
\begin{align}
\frac{dE}{dt}
&=
\sum_{k=1}^{N}
\frac{\partial E}{\partial \phi_k}
\frac{d\phi_k}{dt}
=
-\sum_{k=1}^{N}
\left(\frac{d\phi_k}{dt}\right)^2
\;\le\; 0,
\end{align}
and the system performs gradient descent on
\(E(\boldsymbol{\phi})\). When each phase settles at \(0\) or \(\pi\)
(so that \(\cos(2\phi_i)=1\)),
the energy reduces to
\begin{align}
E(\boldsymbol{\phi})
&=
\frac{K}{2}\!\sum_{\substack{i,j=1\\j\neq i}}^{N} W_{ij}
\Bigl[
\alpha \cos(\phi_i+\phi_j)
+(1-\alpha)\cos(\phi_i-\phi_j)
\Bigr]
-  \frac{K_s}{2}N,
\end{align}
where \(-\frac{K_s}{2} N\) is a constant offset.
By choosing \(K = 1\),
Eq.~\eqref{eq:Appendix_mixed_energy} becomes equivalent to the Ising Hamiltonian
up to an additive constant, and therefore shares the same minimizers.

\clearpage

\section{Supplementary Note 7} 
\subsection*{Properties of parameter gap in the HyIM dynamics}

The HyIM dynamics can be expressed as, 
\begin{align}
    \frac{d \phi_i}{d t} = K\sum_{j=1}^{N}W_{ij}\bigl[\alpha \sin(\phi_{i} + \phi_{j}) +(1-\alpha)\sin(\phi_{i} - \phi_{j})\bigr] - K_{s}\sin(2\phi_{i}).
    \label{eqn: alpha-OIM-DIM}
\end{align}
The Jacobian matrix of this system is given by 
\begin{align}
    J_f(\phi^{\ast}, K_s) = KJ_{\text{HyIM}}(\phi^{\ast},\alpha) - 2K_{s}\operatorname{diag}\big(\cos(2\phi^{\ast}_{1}),\hdots,\cos(2\phi^{\ast}_{N})\big)
\end{align}
where the matrix $J_{\text{HyIM}}(\phi^{\ast},\alpha)$ is defined as, 
\begin{align}
    [J_{\text{HyIM}}(\phi^{\ast}, \alpha)]_{ij} = \begin{cases}
        \alpha K \sum\limits_{k=1}^{N}W_{ik}\cos(\phi^{\ast}_{i} + \phi^{\ast}_{k}) + (1-\alpha)K\sum\limits_{k=1}^{N}W_{ik}\cos(\phi^{\ast}_{i} - \phi^{\ast}_{k}) & i = j\\
        \alpha K W_{ij}\cos(\phi^{\ast}_{i} + \phi^{\ast}_{j}) - (1-\alpha)KW_{ij}\cos(\phi^{\ast}_{i} - \phi^{\ast}_{j}) & i\neq j.
    \end{cases}\label{eqn: D matrix}
\end{align}
It can be observed that 
\begin{align}
    J_{\text{HyIM}}(\phi^{\ast},\alpha) = \alpha J_{\text{DIM}}(\phi^{\ast}, \alpha) + (1-\alpha)J_{\text{OIM}}(\phi^{\ast}, \alpha), \label{eqn: jacobian convex combination}
\end{align}

\noindent implying that the Jacobian is a linear combination of the Jacobians of the OIM and the DIM dynamics. Consequently, the parameter gap can then be defined using the same approach as that used for calculating parameter gap in the OIM and DIM dynamics. Specifically, the critical $K_s$ at which the ground state 
$\boldsymbol{\sigma}^\ast$ becomes stable can be expressed as,
\begin{align}
K^{\{0,\pi\}}_{s,\text{HyIM}}(\alpha)=\min_{\phi \in \text{Ising-ground state}(\mathcal{G})}\frac{K}{2}\lambda_{\max}\big({J_{\text{HyIM}}}(\phi^{\ast},\alpha )\big) = \dfrac{K}{2}\lambda_{\max}\big(J_{\text{HyIM}}^{\{0,\pi\}}(\phi^{\ast},\alpha)\big),
\label{eq:Ks_critical_zero_pi_OIM_DIM}
\end{align} 
while the threshold at which the trivial $(\frac{\pi}{2})\mathbf{1}_{N}$ state loses stability is given by,
\begin{align}
K_{s,\text{HyIM}}^{\{\frac{\pi}{2}\}}(\alpha)=-\frac{K}{2}\lambda_{\max}\big({J_{\text{HyIM}}\big((\frac{\pi}{2}})\mathbf{1}_{N},\alpha\big)\big)=-\frac{K}{2}\lambda_{\max}((1-2\alpha){D} - {W}).
\label{eq:Ks_critical_half_pi_OIM_DIM}
\end{align}
The parameter gap, therefore, can be calculated as,
\begin{align}
\Delta_{\text{HyIM}}(\alpha) &= K^{\{0,\pi\}}_{s,\text{HyIM}}(\alpha) - K_{s,\text{HyIM}}^{\{\frac{\pi}{2}\}}(\alpha) \notag \\
&= \dfrac{K}{2}\,\Big[ \lambda_{\max}\big(J_{\text{HyIM}}^{\{0,\pi\}}(\phi^{\ast},\alpha)\big)
- \big(-\lambda_{\max}((1-2\alpha){D} - {W})\big) \Big].\label{eqn: alpha-gap defined}
\end{align}
Next we analyze key properties of the parameter gap defined above in Eq.~\eqref{eqn: alpha-gap defined}. 
\begin{lemma}
    The parameter gap $\Delta_{\textnormal{HyIM}}(\alpha)$ is non-negative and convex in $\alpha\in[0,1]$.  
\end{lemma}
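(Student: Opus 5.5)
The plan is to treat the two terms of Eq.~\eqref{eqn: alpha-gap defined} separately. I would prove non-negativity by reusing the Weyl-inequality argument of Supplementary Note~2. I would prove convexity from the fact that $\lambda_{\max}$ is a convex function of a symmetric matrix and that both Jacobians depend affinely on $\alpha$. Throughout, fix a ground-state configuration $\boldsymbol{\Phi}^\ast\in\{0,\pi\}^N$ and set $\sigma_i=\cos\Phi_i^\ast$.

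\emph{Non-negativity.} First I would evaluate Eq.~\eqref{eqn: D matrix} at the binary point, where $\cos(\Phi_i^\ast\pm\Phi_j^\ast)=\sigma_i\sigma_j$. This gives
\[
[J^{\Phi^\ast}_{\text{HyIM}}]_{ii}=\sum_k W_{ik}\sigma_i\sigma_k,\qquad [J^{\Phi^\ast}_{\text{HyIM}}]_{ij}=(2\alpha-1)W_{ij}\sigma_i\sigma_j\quad(i\ne j),
\]
up to the factor $K$. Next I would evaluate it at the trivial state, where the same formula gives $(1-2\alpha)D-W$. By Eq.~\eqref{eqn: jacobian convex combination}, the sum
\[
\Delta J(\alpha)=J^{\Phi^\ast}_{\text{HyIM}}+J^{\{\frac{\pi}{2}\}}_{\text{HyIM}}=\alpha\,\Delta J_{\mathrm{U},+}+(1-\alpha)\,\Delta J_{\mathrm{U},-}
\]
is the convex combination of the DIM and OIM difference matrices of Eqs.~\eqref{eq:deltaJ_dim_entries} and \eqref{eq:deltaJ_oim_entries}, with $\Gamma_0=1$. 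Then for every unit vector $u$, $u^\top\Delta J(\alpha)u=\alpha\,u^\top\Delta J_{\mathrm{U},+}u+(1-\alpha)\,u^\top\Delta J_{\mathrm{U},-}u$. However, the two null vectors differ ($\boldsymbol\sigma$ for the sum part, $\mathbf 1$ for the difference part), so the zero-eigenvalue argument of Note~2 does not transfer directly. I would therefore prove $\lambda_{\max}(\Delta J(\alpha))\ge 0$ through the trace instead. One checks that $\operatorname{tr}\Delta J(\alpha)=\sum_{i,j}W_{ij}(2\sigma_i\sigma_j-2\alpha)$, which is $2\cdot(2H(\boldsymbol\sigma)-\alpha\,\mathbf 1^\top W\mathbf 1)$. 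If this turns out to be negative, I would fall back on the Rayleigh quotient with the test vector $u=\boldsymbol\sigma+\mathbf 1$ or $u=\boldsymbol\sigma-\mathbf 1$, supported on one spin class. This step, exhibiting a nonnegative Rayleigh quotient of $\Delta J(\alpha)$ for all $\alpha$ simultaneously, is the one I expect to be the main obstacle. Once it is done, Weyl's inequality $\lambda_{\max}(A)+\lambda_{\max}(B)\ge\lambda_{\max}(A+B)\ge 0$ gives $\Delta_{\text{HyIM}}(\alpha)\ge0$ for this $\boldsymbol\Phi^\ast$, hence for the minimizing one.

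\emph{Convexity.} Both $\alpha\mapsto J^{\Phi^\ast}_{\text{HyIM}}(\alpha)$ and $\alpha\mapsto(1-2\alpha)D-W$ are affine maps into symmetric matrices. Since $\lambda_{\max}(A)=\sup_{\|u\|=1}u^\top Au$ is a pointwise supremum of linear functionals, its composition with an affine map is convex in $\alpha$. Hence $\frac{K}{2}\lambda_{\max}(J^{\Phi^\ast}_{\text{HyIM}}(\alpha))$ is convex. Likewise $-K^{\{\frac{\pi}{2}\}}_{s,\text{HyIM}}(\alpha)=\frac{K}{2}\lambda_{\max}((1-2\alpha)D-W)$ is convex, and the sum of these two is convex.

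The remaining subtlety is the minimum over ground states in Eq.~\eqref{eq:Ks_critical_zero_pi_OIM_DIM}, since a minimum of convex functions need not be convex. I would handle it by working, as the displayed formula for $\Delta_{\text{HyIM}}$ does, with the fixed configuration $\boldsymbol\Phi^\ast$ that attains the minimum. I would also state explicitly that convexity holds branchwise, i.e.\ for each fixed ground state. If a configuration-independent statement is needed, I would try to show that the minimizing ground state can be chosen independently of $\alpha$, for instance via the spin-flip symmetry $\boldsymbol\sigma\to-\boldsymbol\sigma$, which leaves $J^{\Phi^\ast}_{\text{HyIM}}$ invariant.
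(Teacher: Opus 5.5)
The gap is in the non-negativity step, and it comes from a false premise. You say the DIM and OIM difference matrices have different null vectors ($\boldsymbol\sigma$ versus $\mathbf 1$), and this is what sends you to the trace/test-vector detour. But $\boldsymbol\sigma$ also annihilates $\Delta J_{\mathrm U,-}$:
$[\Delta J_{\mathrm U,-}\boldsymbol\sigma]_i=\Gamma_0\sum_j W_{ij}(\sigma_i\sigma_j+1)(\sigma_i-\sigma_j)=0$.
The first factor vanishes when $\sigma_j=-\sigma_i$, and the second when $\sigma_j=\sigma_i$. Hence $\Delta J(\alpha)\boldsymbol\sigma=\alpha\,\Delta J_{\mathrm U,+}\boldsymbol\sigma+(1-\alpha)\,\Delta J_{\mathrm U,-}\boldsymbol\sigma=0$ for every $\alpha\in[0,1]$. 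This gives $\lambda_{\max}(\Delta J(\alpha))\ge 0$, and Weyl's inequality finishes the argument.

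This is exactly the paper's step: the vector $x$ with $x_i=x_j$ on $\mathcal N_i^+$ and $x_i=-x_j$ on $\mathcal N_i^-$ is $\boldsymbol\sigma$ itself. As written, your proposal leaves this step open, and your fallbacks would not close it.

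Take nonnegative weights and $\alpha=1$. Then $\Delta J(1)=\Delta J_{\mathrm U,+}$ is $-2$ times the signless Laplacian of the subgraph of cut edges, so it is negative semidefinite. Its trace is negative whenever some edge is cut. (Incidentally, the correct trace is $\sum_{i,j}W_{ij}(\sigma_i\sigma_j+1-2\alpha)$, not the expression you wrote.) Your test vectors fail too: $u=\boldsymbol\sigma\pm\mathbf 1$ gives $u^\top\Delta J(1)\,u=-8\sum_{\sigma_i=1,\,\sigma_j=-1}W_{ij}<0$. At $\alpha=1$ the only vectors with zero Rayleigh quotient are those that are multiples of $\boldsymbol\sigma$ on each connected component of the cut subgraph. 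So $\boldsymbol\sigma$ is essentially the forced test vector.

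Your convexity argument is the paper's: affine dependence of both Jacobians on $\alpha$ plus convexity of $\lambda_{\max}$. It is in fact more complete, since you treat the trivial-state term $\lambda_{\max}((1-2\alpha)D-W)$ explicitly. You also correctly note that only convexity for a fixed ground state $\boldsymbol\Phi^\ast$ is established; the paper's proof likewise fixes $\phi^\ast$, so this branchwise version is what is actually proved. The spin-flip idea would not remove the caveat. The map $\boldsymbol\sigma\to-\boldsymbol\sigma$ only relates a ground state to its mirror image. Inequivalent degenerate ground states can have $\lambda_{\max}$ curves that cross as $\alpha$ varies, and a pointwise minimum of convex functions need not be convex.
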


\begin{proof}
    We first demonstrate that the parameter gap is non-negative. We proceed in a similar manner to the analysis of this property in OIMs and DIMs. Define the matrix,

   \begin{align}
        \Delta J_{\text{HyIM}}(\phi^{\ast}, \alpha) &= J_{\text{HyIM}}^{\{0,\pi\}}(\phi^{\ast},\alpha) + {J_{\text{HyIM}}((\frac{\pi}{2}})\mathbf{1}_{N},\alpha) \nonumber \\ 
      &= J_{\text{HyIM}}^{\{0,\pi\}}(\phi^{\ast},\alpha) +\Big[ (1-2\alpha)D - W\Big] \nonumber \\ 
      &= \begin{cases}
        \alpha\big(\#\mathcal{N}^{+}_{i}(\phi^{\ast}) - \#\mathcal{N}_{i}^{-}(\phi^{\ast})\big) + (1-\alpha)\big(\#\mathcal{N}^{+}_{i}(\phi^{\ast}) - \#\mathcal{N}_{i}^{-}(\phi^{\ast})\big) + (1-2\alpha)D& i = j \\ 
        \alpha W_{ij} + (1-\alpha) (-W_{ij}) - W_{ij}& i\neq j, j\in\mathcal{N}_{i}^{+} \\ 
        \alpha(-W_{ij}) + (1-\alpha)W_{ij} - W_{ij}& i\neq j, j\in\mathcal{N}_{i}^{-} \\ 
        0 & \text{otherwise}
    \end{cases} \nonumber \\ 
      &= \begin{cases}
        \big(\#\mathcal{N}^{+}_{i}(\phi^{\ast}) - \#\mathcal{N}_{i}^{-}(\phi^{\ast})\big)  + (1-2\alpha) \big(\#\mathcal{N}^{+}_{i}(\phi^{\ast}) + \#\mathcal{N}_{i}^{-}(\phi^{\ast})\big)& i = j \\ 
        2\alpha W_{ij} - 2W_{ij}& i\neq j, j\in\mathcal{N}_{i}^{+} \\ 
         - 2\alpha W_{ij}& i\neq j, j\in\mathcal{N}_{i}^{-} \\ 
        0 & \text{otherwise}
    \end{cases} \nonumber \\ 
      &= \begin{cases}
        2\#\mathcal{N}^{+}_{i}(\phi^{\ast})-2\alpha \big(\#\mathcal{N}^{+}_{i}(\phi^{\ast}) + \#\mathcal{N}_{i}^{-}(\phi^{\ast})\big)& i = j \\ 
        2\alpha W_{ij} - 2W_{ij}& i\neq j, j\in\mathcal{N}_{i}^{+} \\ 
         - 2\alpha W_{ij}& i\neq j, j\in\mathcal{N}_{i}^{-} \\ 
        0 & \text{otherwise}
    \end{cases} \label{eqn: jacobian difference OIM-DIM}
    \end{align}

By selecting a vector $\boldsymbol{x}$ whose components satisfy $x_i = x_j$ for all $j \in N_i^{+}(\phi^{\ast})$ and $x_i = -x_j$ for all $j \in N_i^{-}(\phi^{\ast})$, it follows directly that
$\Delta J_{\text{HyIM}}(\phi^{\ast}, \alpha)\,\boldsymbol{x} = 0.$, which implies that $\Delta J_{\text{HyIM}}(\phi^{\ast}, \alpha)$ has a zero eigenvalue.
    Using Weyl's inequality and Eq.~\eqref{eqn: jacobian difference OIM-DIM}, we obtain 
    \begin{align*}\lambda_{\max}\big(J_{\text{HyIM}}^{\{0,\pi\}}(\phi^{\ast},\alpha)\big)
+ \lambda_{\max}((1-2\alpha){D} - {W})\big) \geq \lambda_{\max}\big(J_{\text{HyIM`}}^{\{0,\pi\}}(\phi^{\ast},\alpha) + ((1-2\alpha){D} - {W})\big) \geq 0.
    \end{align*} 
This establishes that the parameter gap $\Delta_{\text{HyIM}}(\alpha)$ is non-negative.
\\

\noindent Next, we show that $\Delta_{\text{HyIM}}(\alpha)$ is convex in $\alpha\in[0,1]$. To establish this, we prove that,

\begin{align}
    G(\beta x + (1-\beta)y) \leq \beta G(x) + (1-\beta)G(y), \forall \;\;x,y,\beta\in [0,1]
\end{align}
Using Eq.~\eqref{eqn: jacobian convex combination}, 
\begin{align*}
    J_{\text{HyIM}}(\phi^{\ast}, \beta x+ (1-\beta)y) = \beta J_{\text{HyIM}}(\phi^{\ast}, x) + (1-\beta)J_{\text{HyIM}}(\phi^{\ast}, y).
\end{align*}

\noindent  Since the largest eigenvalue \(\lambda_{\max}(M)\) is a convex function of the
matrix \(M\), for any \(\beta \in [0,1]\) we have
\begin{equation}
\lambda_{\max}\!\left(
J_{\text{HyIM}}\big(\phi^{\ast},\, \beta x + (1-\beta)y\big)
\right)
\le
\beta\,\lambda_{\max}\!\left(
J_{\text{HyIM}}(\phi^{\ast},x)
\right)
+
(1-\beta)\,\lambda_{\max}\!\left(
J_{\text{HyIM}}(\phi^{\ast},y)
\right).
\end{equation}
It follows that \(G(\alpha)\) is convex for \(\alpha \in [0,1]\),
which concludes the proof.
\end{proof}

\subsection{Threshold $\alpha_c$ to achieve \boldmath$K_{s}^{\{\frac{\pi}{2}\}}\geq 0$} 

\noindent We establish the critical value of \(\alpha\) that ensures stability of the trivial
state in the absence of regularization, i.e., for \(K_s = 0\).
Ensuring \(K_s \ge 0\) is significant because, under this condition, the dynamics
evolve along the dominant eigenmode at the point where the trivial state loses
stability. We first show that the critical threshold
\(K_{s,\text{HyIM}}^{\{\frac{\pi}{2}\}}(\alpha)\) is nondecreasing in \(\alpha\).
Recall that, for the HyIM dynamics, the Jacobian evaluated at the trivial
state yields
\begin{equation}
K_{s,\text{HyIM}}^{\{\frac{\pi}{2}\}}(\alpha)
=
-\frac{K}{2}\,
\lambda_{\max}\!\big((1-2\alpha)D - W\big).
\end{equation}
Thus, monotonicity of
\(K_{s,\text{HyIM}}^{\{\frac{\pi}{2}\}}(\alpha)\)
follows from the monotonicity of
\(\lambda_{\max}((1-2\alpha)D - W)\). Consider any \(0 \le \alpha_1 \le \alpha_2 \le 1\).
Then,
\begin{align}
\lambda_{\max}\!\big((1-2\alpha_1)D - W\big)
- \lambda_{\max}\!\big((1-2\alpha_2)D - W\big)
&=
\lambda_{\max}\!\big((1-2\alpha_1)D - W\big)
+ \lambda_{\min}\!\big(-(1-2\alpha_2)D + W\big)
\nonumber \\
&\ge
\lambda_{\min}\!\big(2(\alpha_2-\alpha_1)D\big)
\nonumber \\
&=
2(\alpha_2-\alpha_1)\min_{1\le i\le N} d_i
\;\ge\; 0,
\label{eq:inequality_alpha}
\end{align}
where, the inequality follows from Weyl’s eigenvalue inequality.
Equation~\eqref{eq:inequality_alpha} shows that
\(\lambda_{\max}((1-2\alpha)D - W)\) is nonincreasing in \(\alpha\), and hence
\(K_{s,\text{HyIM}}^{\{\frac{\pi}{2}\}}(\alpha)\) is nondecreasing in \(\alpha\). To define the critical value \(\alpha_c \in [0,1]\), note that
\begin{equation}
K_{s,\text{HyIM}}^{\{\frac{\pi}{2}\}}(0)
= K_{s,\mathrm{OIM}}^{\{\frac{\pi}{2}\}} < 0,
\qquad
K_{s,\text{HyIM}}^{\{\frac{\pi}{2}\}}(1)
= K_{s,\mathrm{DIM}}^{\{\frac{\pi}{2}\}} > 0.
\end{equation}
Since eigenvalues depend continuously on \(\alpha\), there exists a unique
\(\alpha_c \in [0,1]\) such that
\begin{equation}
K_{s,\text{HyIM}}^{\{\frac{\pi}{2}\}}(\alpha_c) = 0.
\end{equation}
Finally, the optimal mixing parameter that minimizes the parameter gap is obtained
by solving
\begin{equation}
\alpha^{\ast}
=
\underset{\alpha \in [\alpha_c,\,1]}{\operatorname*{arg\,min}}
\;
\Delta_{\text{HyIM}}(\alpha).
\label{eqn:final_optimization_problem}
\end{equation}

\newpage

\clearpage

\section{Supplementary note 8} 
\subsection*{Parameter-Gap and Certification Characteristics of HyIM for Planted Frustrated Instances}
\noindent Here, we analyze the evolution of the minimum achievable parameter gap in the HyIM \((\Delta_{\mathrm{HyIM}}^{\min})\) and its dependence on $\alpha$ and the resulting certification using planted-instances. We use planted instances with real signed symmetric weights since computing \(\Delta_{\mathrm{HyIM}}(\alpha)\) requires evaluating \(\eta_{\max}(\alpha)\)-- the upper boundary of the parameter gap, which depends on the stability of Ising ground-state configurations and is therefore generally intractable for arbitrary graphs. In the planted instances, a ground-state configuration is known by construction, enabling direct evaluation of the corresponding parameter gap. Nevertheless, since not all degenerate ground-state configurations are necessarily known, the computed \(\eta_{\max}\), and hence \(\Delta_{\mathrm{HyIM}}^{\min}\), should be interpreted as an upper bound on the true gap because the exact \(\eta_{\max}(\alpha)\) involves minimization over all ground-state configurations.
\begin{figure}[!h]
    \centering
    \includegraphics[width=.89\linewidth]{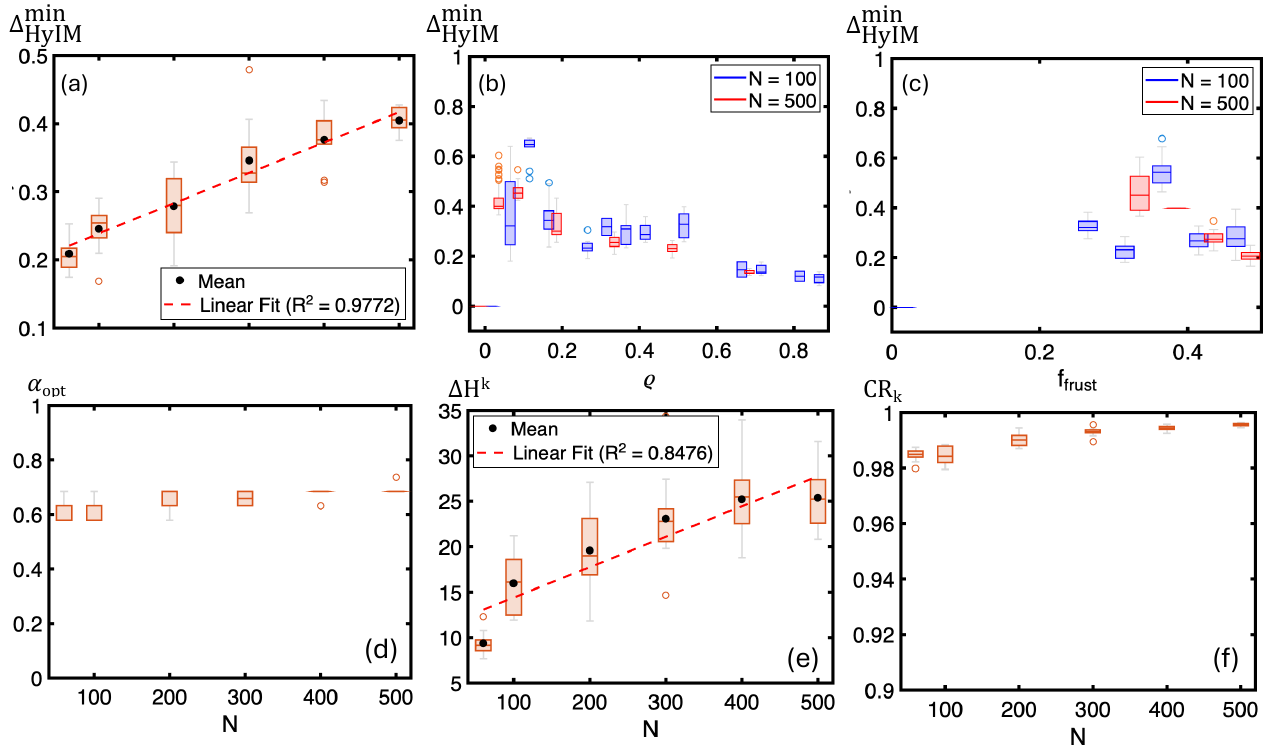}
   \caption{\justifying \smaller
(a) Statistics of the minimum parameter gap, \(\Delta_{\mathrm{HyIM}}^{\min}\), as a function of graph size \(N\). Black markers denote the mean across 10 instances, and the red dashed line shows a linear fit to the mean values, with the corresponding coefficient of determination \((R^2)\). 
(b) Dependence of \(\Delta_{\mathrm{HyIM}}^{\min}\) on graph density \(\varrho\) for \(N=100\) and \(N=500\). 
(c) Dependence of \(\Delta_{\mathrm{HyIM}}^{\min}\) on the frustration metric \(f_{\mathrm{frust}}\) for \(N=100\) and \(N=500\). 
(d) Graph-size dependence of the optimal interpolation parameter \(\alpha_{\mathrm{opt}}\), corresponding to the minimum parameter gap.
(e) Distribution of the certified energy deviation, \(\Delta H^{k}=H^{k}-H_{0}\), as a function of \(N\). Black markers denote the mean across instances, and the red dashed line shows a linear fit to the mean values. (f) Distribution of the certifiable ratio ($CR_K$) as a function of graph size $N$. 
}
    \label{fig:planted_instances}
\end{figure}
While a detailed pseudocode has been presented in Algorithm~\ref{algo:planted_instances} below, the construction of the planted-solution instances can be summarized as follows. We use connected tile-planted instances in which the edge set is decomposed into frustrated motifs with known local optima. The planted spin configuration minimizes every motif simultaneously, thereby certifying the global optimum by construction. The instances considered here use real and symmetric weights. Fig.~\ref{fig:planted_instances}(a) shows the minimum HyIM gap ($\Delta_{\mathrm{HyIM}}^{\min}$) for $\alpha \in [\alpha_c,1]$ as a function of graph size, computed over 10 planted instances for each \(N\in\{60,100,200,300,400,500\}\). The instances used in the graph-size scaling study were generated with a tile size of 20. The observed trend indicates that the minimum achievable gap increases approximately linearly with graph size for these planted instances.\\[0.8em]
To quantify the impact of graph connectivity, we evaluate $\Delta_{\mathrm{HyIM}}^{\min}$ as a function of the edge density
\(
    \varrho=\frac{\#\mathcal{E}(\mathcal{G})}{N(N-1)/2} .
\)
Fig.~\ref{fig:planted_instances}(b) shows \(\Delta_{\mathrm{HyIM}}^{\min}\) as a function of \(\varrho\) for planted instances with \(N=100\) and \(N=500\). Fig.~\ref{fig:planted_instances}(c) shows the dependence of \(\Delta_{\mathrm{HyIM}}^{\min}\) on the frustration in the graph for planted instances of the same size. For weighted signed graphs, we quantify frustration as the weighted fraction of interaction strength that remains unsatisfied by the planted optimum,
{\small \[
    f_{\mathrm{frust}}
    =
    \frac{
    \sum_{i<j}
    |W_{ij}|\,
    \mathbb{I}\!\left[W_{ij}\sigma_i^{\ast}\sigma_j^{\ast}>0\right]
    }{
    \sum_{i<j}|W_{ij}|
    } .
\]}
$f_{\mathrm{frust}} \in [0,0.5]$ with $f_{\mathrm{frust}}=0$ indicating a bipartite graph. In both the cases, the tile size and the probability of forming an edge between two tiles are varied to tune $\varrho$ and $f_{\mathrm{frust}}$. 
It can be observed that the dependence of \(\Delta_{\mathrm{HyIM}}^{\min}\) on edge density is non-monotonic over the sampled range: the minimum attainable gap increases slightly at low densities, but decreases as the graph becomes denser. No clear monotonic trend is observed with the weighted frustration level \(f_{\mathrm{frust}}\) for the instances considered here. This indicates that the optimal operating regime depends on graph structure and that intermediate-\(\alpha\) behavior is problem dependent rather than universal across all instances. Fig.~\ref{fig:planted_instances} shows the optimal parameter $\alpha_{\mathrm{opt}}$ resulting in minimum parameter gap (panel (d)), the corresponding statistics and their scaling for the certified energy deviation, $\Delta H^{k}=H^{k}-H_{0}$ (panel (e)), and the certifiable ratio
    \(
        CR_K =
        \frac{C_{\mathrm{cert}}}{C_{\mathrm{ref}}}
    \) 
    (panel (f)), where \(C_{\mathrm{ref}}\) denotes the certified optimal cut or the best-known cut, depending on the instance class. For the planted instances, \(C_{\mathrm{ref}}\) denotes the certified optimal cut.

{ \smaller 
\noindent\rule{\textwidth}{0.5pt}
\vspace{0.3em}
\refstepcounter{algorithm}
\label{algo:planted_instances}
\noindent\textbf{Algorithm S\arabic{algorithm}. Generating Planted Frustrated Ising Instances}\\[-0.7em]
\noindent\rule{\textwidth}{0.5pt}
\begin{algorithmic}[1]
\Require Number of nodes $N$; tile size $q_t$ with $q_t \mid N$; intra-tile AFM weight $w_{\mathrm{tile}}>0$; forced connector weight $\lambda_{\mathrm{forced}}>0$; extra positive-edge probability/weight $p_{\mathrm{ep}},\lambda_{\mathrm{ep}}>0$; extra signed-edge probability/weight $p_{\mathrm{es}},\lambda_{\mathrm{es}}>0$; tolerance $\epsilon$.
\Ensure Interaction matrix $W$; signed adjacency matrix $G=-W$; planted spin configuration $\boldsymbol{\sigma}^*$; planted energy $H^*$.
\State Set $M \gets N/q_t$.
\State Partition the $N$ nodes into $M$ disjoint tiles $T_1,T_2,\ldots,T_M$, each containing $q_t$ nodes.
\State Initialize $W \gets 0_{N\times N}$, $\boldsymbol{\sigma}^* \gets 0_N$, and $H^* \gets 0$.
\For{$m=1,2,\ldots,M$}
    \If{$q_t$ is even}
        \State Assign $q_t/2$ spins in $T_m$ to $+1$ and $q_t/2$ spins to $-1$.
    \Else
        \State Assign $(q_t+1)/2$ spins in $T_m$ to $+1$ and $(q_t-1)/2$ spins to $-1$.
    \EndIf
    \State Randomly permute the spin labels within $T_m$, then apply a random gauge flip $g\in\{-1,+1\}$ to all spins in $T_m$.
    \State Store the resulting spins in $\boldsymbol{\sigma}^*$.
    \For{each unordered pair $(i,j)$ with $i,j\in T_m$, $i<j$}
        \State Set $W_{ij} \gets w_{\mathrm{tile}}$ and $W_{ji} \gets w_{\mathrm{tile}}$.
    \EndFor
    \State Set $H^* \gets H^* - w_{\mathrm{tile}}q_t/2$ if $q_t$ even, else $H^* \gets H^* - w_{\mathrm{tile}}(q_t-1)/2$.
\EndFor
\For{$m=1,2,\ldots,M-1$}
    \State Choose $i\in T_m$, $j\in T_{m+1}$ with $\sigma_i^*\sigma_j^*=-1$.
    \State Set $W_{ij}=W_{ji}\gets \lambda_{\mathrm{forced}}$; set $H^* \gets H^* - \lambda_{\mathrm{forced}}$.
\EndFor
\For{each unordered pair $(i,j)$, $i<j$, with $W_{ij}=0$ and $\sigma_i^*\sigma_j^*=-1$}
    \State With probability $p_{\mathrm{ep}}$: set $W_{ij}=W_{ji}\gets \lambda_{\mathrm{ep}}$; set $H^* \gets H^* - \lambda_{\mathrm{ep}}$.
\EndFor
\For{each unordered pair $(i,j)$, $i<j$, with $W_{ij}=0$}
    \State With probability $p_{\mathrm{es}}$: set $W_{ij}=W_{ji}\gets -\lambda_{\mathrm{es}}\sigma_i^*\sigma_j^*$; set $H^* \gets H^* - \lambda_{\mathrm{es}}$.
\EndFor
\State Symmetrize $W \gets (W+W^{\top})/2$ and set $W_{ii}\gets 0$ for all $i$.
\State Set $G\gets -W$.
\State Compute $H_{\mathrm{check}} = \frac{1}{2}(\boldsymbol{\sigma}^*)^{\top}W\boldsymbol{\sigma}^*$.
\State Accept the instance only if $|H_{\mathrm{check}}-H^*|<\epsilon$, $W$ is connected, all weighted degrees $d_i\ge0$, and $D+W\succeq0$ and the HyIM gap exhibits an interior minimum with $\alpha_{\mathrm{opt}}>\alpha_c$ (else retry with a new random seed).
\State \Return $W$, $G$, $\boldsymbol{\sigma}^*$, and $H^*$.
\Statex \hrulefill 
\end{algorithmic}}

\subsection*{Other Graph Families}
\noindent Besides the planted instances, we also evaluate other graph families. We consider two categories of graphs:\\

\noindent \textbf{(A) Instances with certified energy optima.} This category includes cubic regular graph instances, and \(4N\)-type M\"obius ladder graphs. For all instances in this category, the ground-state energy was independently certified using Gurobi~\cite{gurobi}. Since all degenerate ground-state configurations were not enumerated by Gurobi, the parameter gap was evaluated using the certified ground-state configuration returned by the solver. As the true gap is defined by a minimum over all ground-state branches, the reported values should be interpreted as upper bounds on the true minimum parameter gap.   

\begin{figure}[!h]
    \centering
    \includegraphics[width=0.5\linewidth]{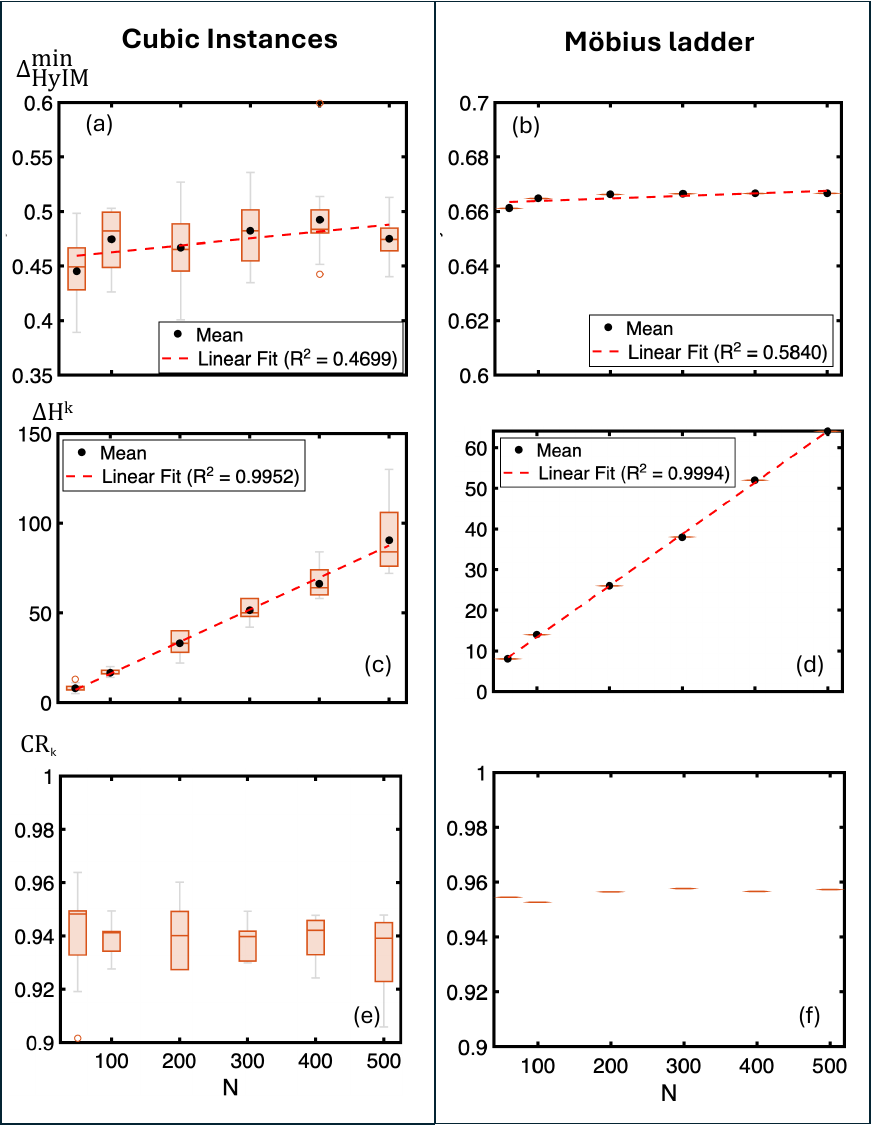}
    \caption{\justifying \smaller Scaling of the minimum HyIM gap, $\Delta_{\mathrm{HyIM}}^{\min}$, the certified energy gap $\Delta H^{k}$, and Certifiable Ratio ($CR_k$) with graph size for (a) cubic regular graph instances, and (b) M\"obius ladder graph instances. Black markers denote the mean across instances, and the red dashed line shows a linear fit to the mean values. }
    \label{fig:Scaling_behavior}
\end{figure}

\noindent \textbf{(B) Instances without certified optima.} This category includes Erd\H{o}s--R\'enyi (ER) graphs with edge probabilities \(p=\{0.05,0.2,0.4\}\) and random 4-regular graphs, where Gurobi could not certify optimality within a reasonable time budget. We therefore used a common best-known reference energy defined as the lowest Ising energy found across OIM, DIM, CIM, SBM, HyIM, and the population-annealing Monte Carlo approach presented in Ref.~\cite{khan2026leveragingpopulationdynamicssteer}. Since the true ground-state energy and the full set of ground-state configurations are not certifiably known for this category, we use the best common best-known energy for the analysis. \\[0.8em]
For each graph class, we evaluated sizes
    \(N=60,100,200,300,400,500\), using 10 independent instances at each size except for the M\"obius ladder, where the graph is fixed for a given \(N\).\\[0.8em] 
    \begin{figure}
    \centering
    \includegraphics[width=1\linewidth]{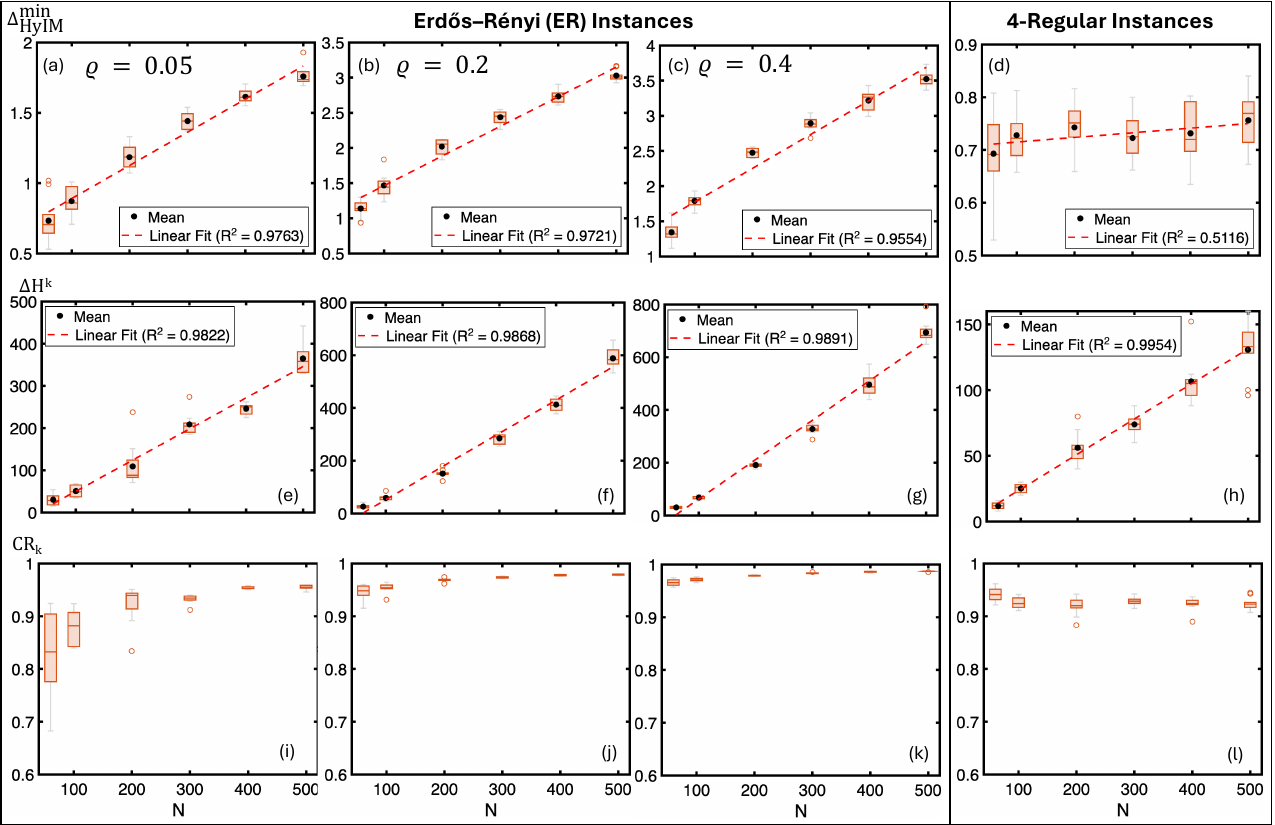}
    \caption{\justifying \smaller Scaling of the minimum HyIM gap, $\Delta_{\mathrm{HyIM}}^{\min}$, the certified energy gap, $\Delta H^{k}$, and the Certifiable Ratio ($CR_k$) with graph size for Erd\H{o}s--R\'enyi (ER) graph instances with edge probabilities $p=0.05$, $p=0.20$, and $p=0.40$, and 4-regular graph instances. Black markers denote the mean across instances, and the red dashed line shows a linear fit to the mean values. }
    \label{fig:Best_cut_from_PAMC}
\end{figure}    
\noindent   Figures~\ref{fig:Scaling_behavior} and~\ref{fig:Best_cut_from_PAMC} report
    the scaling of the minimum achievable HyIM parameter gap
    \(\Delta_{\mathrm{HyIM}}^{\min}\), the corresponding certifiable energy deviation
    \(\Delta H^k\), and the certifiable ratio
    \(
        CR_k =
        \frac{C_{\mathrm{cert}}}{C_{\mathrm{ref}}},
    \) where \(C_{\mathrm{ref}}\) denotes either the certified optimal cut or the best-known cut, depending on the instance class. The observed scaling is graph-dependent. The planted and ER instances show a clear increase of \(\Delta_{\mathrm{HyIM}}^{\min}\) with graph size, whereas the trend is much weaker for the M\"obius ladder family. Correspondingly, the absolute certified energy deviation does not scale uniformly across graph classes. \\[0.8em]
    Nevertheless, the normalized certifiable ratio remains high across the graph classes studied. In particular, for the ER and planted instances it improves with increasing graph size, while for the cubic and M\"obius ladder instances it remains approximately constant over the sampled range. 

\subsection*{Comparison}
\noindent We also compare  the HyIM with the other models, namely DIM, OIM, CIM, and SBM, for all graph classes considered in the above analysis. Table~\ref{tab:N_comparison} summarizes the optimization performance on the instances with certified optima: signed planted instances, cubic regular graphs, and \(4N\)-type M\"obius ladder graphs. For each graph size, the table reports the certified ground-state energy \(H_0\), the lowest Ising energy obtained by each architecture, and the optimal HyIM mixing parameter \(\alpha_{\mathrm{opt}}\); the result is reported as average over 10 instances. Across these instances, HyIM is competitive with the best endpoint model and attains the lowest energy among the tested architectures in a majority of the cases considered.\\[0.8em]

\begin{table}[h]
\centering
\setlength{\tabcolsep}{12pt}
\caption{\justifying Comparison of the optimization performance of DIM, OIM, CIM, SBM, and HyIM for planted instances, cubic regular graph instances, and Möbius ladder graph instances of varying sizes. For each graph size, the table reports the ground-state energy, $H_0$, the lowest Ising energy obtained by each architecture, and the optimal HyIM mixing parameter, $\alpha_{\mathrm{opt}}$; the result is reported as average over 10 instances. Boldface entries denote the ground-state energy, while entries marked with $\star$ indicate the lowest Ising energy achieved among the five Ising machine architectures (DIM, OIM, CIM, SBM, and HyIM).}
\label{tab:N_comparison}

\begin{tabular*}{\textwidth}{@{\extracolsep{\fill}} c c | c c c c | c c}
\hline
\textbf{Size} & $\mathbf{H_{0}}$ & \textbf{DIM} & \textbf{OIM} & \textbf{CIM} & \textbf{SBM} & $\boldsymbol{\alpha_{\mathrm{opt}}}$ & \textbf{HyIM} \\
\hline
\multicolumn{8}{c}{\textbf{Planted Graph Instances}}\\
\hline
60  & \textbf{-38.01}   & -37.20   & -35.61   & -36.15   & -36.72   & 0.62 & $-37.77^{\star}$   \\
100 & \textbf{-77.90}   & -71.15   & -57.89   & -69.84   & -69.82   & 0.61 & $-74.37^{\star}$   \\
200 & \textbf{-213.19}  & -212.79  & -106.21  & -199.90  & -198.97  & 0.65 & $-212.99^{\star}$   \\
300 & \textbf{-418.68}  & $-418.68^{\star}$   & -156.35 & $-418.68^{\star}$   & $-418.68^{\star}$   & 0.66 & $-418.68^{\star}$   \\
400 & \textbf{-677.54}  & $-677.54^{\star}$   & -219.77 & $-677.54^{\star}$   & $-677.54^{\star}$   & 0.67 & $-677.54^{\star}$   \\
500 & \textbf{-1014.10} & $-1014.10^{\star}$   & -286.10 & $-1014.10^{\star}$   & $-1014.10^{\star}$   & 0.69 & $-1014.10^{\star}$   \\
\hline
\multicolumn{8}{c}{\textbf{Cubic Instances}}\\
\hline
60  & \textbf{-72.00}   & -71.80   & -65.00   & -71.00   & -70.80   & 0.98 & $-72.00^{\star}$   \\
100 & \textbf{-123.40}  & $-123.00^{\star}$    & -104.00  & -119.00  & -119.60  & 0.98 & -122.60 \\
200 & \textbf{-250.40}  & -246.80  & -207.60  & -236.40  & -237.80  & 0.98 & $-247.20^{\star}$   \\
300 & \textbf{-377.20}  & -371.80  & -306.40  & -350.20  & -355.60  & 0.98 & $-372.20^{\star}$   \\
400 & \textbf{-504.20}  & -492.20  & -403.40  & -468.60  & -472.20  & 0.98 & $-492.60^{\star}$   \\
500 & \textbf{-629.80}  & $-616.40^{\star}$    & -503.00  & -584.60  & -584.00  & 0.98 & -612.40 \\
\hline
\multicolumn{8}{c}{\textbf{Möbius ladder}}\\
\hline
60  & \textbf{-86.00}  & $-86.00^{\star}$    & -78.00  & $-86.00^{\star}$    & $-86.00^{\star}$    & $\approx 1$ & $-86.00^{\star}$   \\
100 & \textbf{-146.00} & $-146.00^{\star}$  & -138.00 & -138.00 & $-146.00^{\star}$  & $\approx 1$ & $-146.00^{\star}$   \\
200 & \textbf{-296.00} & $-296.00^{\star}$  & -264.00 & -280.00 & -272.00 & $\approx 1$ & $-296.00^{\star}$  \\
300 & \textbf{-446.00} & $-438.00^{\star}$    & -380.00 & -414.00 & -422.00 & $\approx 1$ & $-438.00^{\star}$   \\
400 & \textbf{-596.00} & $-588.00^{\star}$   & -508.00 & -556.00 & -556.00 & $\approx 1$ &  $-588.00^{\star}$  \\
500 & \textbf{-746.00} & $-730.00^{\star}$   & -630.00 & -698.00 & -690.00 & $\approx 1$ & $-730.00^{\star}$   \\

\hline

\end{tabular*}
\end{table}

\begin{table}[!t]
\centering
\setlength{\tabcolsep}{12pt}
\caption{\justifying Comparison of the optimization performance of DIM, OIM, CIM, SBM, and HyIM for Erd\H{o}s--R\'enyi (ER) graph instances with edge densities $\rho=0.05$, $\rho=0.20$, and $\rho=0.40$, together with 4-regular graph instances of varying sizes. For each graph size, the table reports best-known reference energy \(H_{\mathrm{BK}}\), the lowest Ising energy achieved by each architecture, and the optimal HyIM mixing parameter, $\alpha_{\mathrm{opt}}$; the result is reported as average over 10 instances. Boldface entries denote the ground-state energy, while entries marked with $\star$ indicate the lowest Ising energy achieved among the five Ising machine architectures (DIM, OIM, CIM, SBM, and HyIM).}
\label{tab:PAMC_comparison}

\begin{tabular*}{\textwidth}{@{\extracolsep{\fill}} c c | c c c c | c c}
\hline
\textbf{Size} & $\mathbf{H_{\mathrm{BK}}}$ & \textbf{DIM} & \textbf{OIM} & \textbf{CIM} & \textbf{SBM} & $\boldsymbol{\alpha_{\mathrm{opt}}}$ & \textbf{HyIM} \\
\hline
\multicolumn{8}{c}{\textbf{Erd\H{o}s--R\'enyi (ER) graph instances with edge density $\rho=0.05$}}\\
\hline
60  & \textbf{-74.90}    & -74.10    & -64.70    & -72.30    & -72.90    & 0.96 & $-74.90^{\star}$ \\
100 & \textbf{-159.20}   & -156.40   & -124.20   & -155.40   & -154.40   & 0.90 & $-158.40^{\star}$ \\
200 & \textbf{-453.80}   & -435.40   & -313.40   & -441.20   & -438.40   & 0.82 & $-449.20^{\star}$ \\
300 & \textbf{-845.50}   & -796.30   & -536.90   & -824.90   & -811.50   & 0.79 & $-832.30^{\star}$ \\
400 & \textbf{-1305.10}  & -1222.30  & -784.10   & -1268.30  & -1264.30  & 0.74 & $-1288.50^{\star}$ \\
500 & \textbf{-1827.50}  & -1700.70  & -1059.50  & -1784.10  & -1755.10  & 0.74 & $-1802.70^{\star}$ \\
\hline
\multicolumn{8}{c}{\textbf{Erd\H{o}s--R\'enyi (ER) graph instances with edge density $\rho=0.2$}}\\
\hline
60  & \textbf{-134.30}   & -129.90   & -99.50    & -132.10   & -132.50   & 0.78 & $-132.70^{\star}$ \\
100 & \textbf{-298.70}   & -280.10   & -204.10   & -293.50   & -292.50   & 0.74 & $-294.30^{\star}$ \\
200 & \textbf{-853.00}   & -798.20   & -498.80   & $-842.20^{\star}$   & -834.80   & 0.68 & $-842.20^{\star}$ \\
300 & \textbf{-1580.80}  & -1461.40  & -793.20   & -1554.00  & -1547.00  & 0.67 & $-1555.40^{\star}$ \\
400 & \textbf{-2432.50}  & -2237.30  & -1106.30  & -2394.10  & -2371.70  & 0.66 & $-2394.90^{\star}$ \\
500 & \textbf{-3397.90}  & -3117.90  & -1436.70  & -3336.10 & -3310.90  & 0.64 & $-3351.30^{\star}$ \\
\hline
\multicolumn{8}{c}{\textbf{Erd\H{o}s--R\'enyi (ER) graph instances with edge density $\rho=0.4$}}\\
\hline
60  & \textbf{-169.60}   & -159.80   & -125.00   & -167.60   & -167.20   & 0.69 & $-168.00^{\star}$ \\
100 & \textbf{-373.50}   & -347.70   & -230.90   & $-368.90^{\star}$   & $-368.90^{\star}$   & 0.66 & -368.10 \\
200 & \textbf{-1064.40}  & -986.00   & -507.40   & $-1051.80^{\star}$  & -1042.80  & 0.64 & -1049.60 \\
300 & \textbf{-1961.00}  & -1826.60  & -799.20   & -1927.60  & -1927.00  & 0.63 & $-1928.20^{\star}$ \\
400 & \textbf{-3017.70}  & -2787.50  & -972.30   & $-2964.50^{\star}$ & -2934.30  & 0.63 & -2959.90 \\
500 & \textbf{-4208.60}  & -3819.00  & -1268.20  & $-4139.40^{\star}$  & -4095.00  & 0.63 & -4126.20 \\
\hline
\multicolumn{8}{c}{\textbf{4-Regular Instances}}\\
\hline
60  & \textbf{-84.40}   & $-84.40^{\star}$     & -75.60   & $-84.40^{\star}$   & $-84.40^{\star}$   & 0.95 & $-84.40^{\star}$ \\
100 & \textbf{-138.80}  & $-138.00^{\star}$    & -117.60  & -136.00  & -136.00  & 0.95 & $-138.00^{\star}$ \\
200 & \textbf{-286.80}  & $-285.20^{\star}$    & -240.40  & -274.80  & -273.60  & 0.95 & $-285.20^{\star}$ \\
300 & \textbf{-433.20}  & $-429.20^{\star}$    & -346.00  & -407.20  & -407.60  & 0.95 & -428.80 \\
400 & \textbf{-580.40}  & -573.20  & -457.20  & -540.80  & -543.60  & 0.95 & $-573.60^{\star}$\\
500 & \textbf{-726.00}  & $-720.40^{\star}$   & -571.60  & -679.60  & -680.00  & 0.95 & -719.60 \\
\hline
\end{tabular*}
\end{table}
\noindent Table~\ref{tab:PAMC_comparison} presents the corresponding results for the instances without certified optima, including Erd\H{o}s--R\'enyi graphs with edge probabilities \(\rho=0.05\), \(\rho=0.20\), and \(\rho=0.40\), as well as random 4-regular graphs. For these cases, the table reports the common best-known reference energy \(H_{\mathrm{BK}}\), the lowest Ising energy obtained by each architecture, and \(\alpha_{\mathrm{opt}}\). Since the true optimum is not certifiably known for these instances, the comparison should be interpreted relative to \(H_{\mathrm{BK}}\). Within this reference framework, HyIM achieves the lowest energy in a majority of the ER and 4-regular cases.\\[0.8em]

\clearpage

\section{Supplementary note 9} \subsection*{Relationship Between the Parameter Gap and Synchronization at Bifurcation}

\noindent In this section, we establish a quantitative relationship between the parameter gap
and the degree of synchronization at the point of bifurcation.
For the coherent Ising machine (CIM), Wang \emph{et al.}~\cite{2023Bifurcation}
showed that when the degree of bifurcation exceeds a threshold, the binarized dominant
eigenvector coincides with the optimal Ising solution.
Here, we derive an analogous result for the HyIM dynamics.
Specifically, we show that sufficiently strong synchronization at bifurcation ensures
that the binarized dominant eigenvector attains an energy strictly below that of the
first excited Ising state.
We further show that the synchronization threshold increases monotonically with the
HyIM parameter gap, motivating hybrid dynamics to reduce the parameter gap.

\subsubsection{\textbf{Setup and Alignment Metric}}

\noindent Consider the Ising Hamiltonian
\begin{equation}
H(\boldsymbol{\sigma})
=
-\frac{1}{2}\boldsymbol{\sigma}^{\top}G\boldsymbol{\sigma},
\qquad
\boldsymbol{\sigma}\in\{-1,1\}^{N}.
\label{eq:Alignment_hamiltonian}
\end{equation}
Let \(H_0\) and \(H_1\) denote the ground-state and first excited-state energies, and
define the Ising energy gap
\(\Delta H = H_1 - H_0\). Following~\cite{2023Bifurcation}, we quantify synchronization at bifurcation using
\begin{equation}
S(\boldsymbol{v}_{\max})
=
\frac{1}{N}
(\boldsymbol{v}_{\max}^{\top}\boldsymbol{\sigma}_{\max})^2,
\qquad
\boldsymbol{\sigma}_{\max} = \operatorname{sign}(\boldsymbol{v}_{\max}),
\label{eq:s2_def}
\end{equation}
where \(\boldsymbol{v}_{\max}\) is the dominant eigenvector of the
Jacobian evaluated at the trivial solution.

\begin{theorem}
Let the first nontrivial stable state at bifurcation be proportional to the dominant eigenvector 
$\boldsymbol{v}_{\max}$ of the Jacobian associated with the $\mathrm{HyIM}$ dynamics. 
Define the $\mathrm{HyIM}$ parameter gap as
\begin{align}
\Delta_{\mathrm{HyIM}}(\alpha)
&= \frac{K}{2}\Big[
\lambda_{\max}\!\big(J_{\mathrm{HyIM}}^{\{0,\pi\}}(\phi^{\ast},\alpha)\big)
+ 
\lambda_{\max}\!\big((1-2\alpha)D - W\big)
\Big].
\label{eqn: alpha-gap defined20}
\end{align}
Without loss of generality, by setting $K = 1$ and  
$J_{\mathrm{HyIM}}^{\{0,\pi\}}(\phi^{\ast},\alpha)
= J(\phi^{\ast},\alpha)$, 
we define the parameter gap as
\begin{equation}
\Delta_{\mathrm{HyIM}}(\alpha)
=
\frac{1}{2}\Big[
\lambda_{\max}\!\big(J(\phi^{\ast},\alpha)\big)
+
\lambda_{\max}\!\big((1-2\alpha)D - W\big)
\Big].
\end{equation}
If
\begin{equation}
S(\boldsymbol{v}_{\max})
\geq
S_{\mathrm{crit}}(\boldsymbol{v}_{\max})
=
1-\frac{\Delta H}{N\,\Delta_{\mathrm{HyIM}}(\alpha)},
\label{eq:dos_condition}
\end{equation}
then there exists an interval $\alpha \in [0,\alpha_m]$ such that the binarized configuration $
\boldsymbol{\sigma}_{\max}
=
\operatorname{sign}(\boldsymbol{v}_{\max})
$
satisfies
\[
H(\boldsymbol{\sigma}_{\max}) \leq H_1.
\]
\end{theorem}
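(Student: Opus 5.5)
The plan is to follow the spectral decomposition of Supplementary Note~4, but with the HyIM trivial-state Jacobian $M(\alpha)=(1-2\alpha)D-W$ in place of $J^{\ast}_{\mathrm{ts}}$. With the antiferromagnetic convention $W=-G$ we have $H(\boldsymbol{\sigma})=\tfrac12\boldsymbol{\sigma}^{\top}W\boldsymbol{\sigma}$. Substituting $W=(1-2\alpha)D-M(\alpha)$ and using $\boldsymbol{\sigma}^{\top}D\boldsymbol{\sigma}=\sum_i d_i=:2\,\#\mathcal{E}$ (weighted), which holds for every $\boldsymbol{\sigma}\in\{\pm1\}^N$, gives
\begin{equation*}
H(\boldsymbol{\sigma})=(1-2\alpha)\,\#\mathcal{E}-\tfrac12\boldsymbol{\sigma}^{\top}M(\alpha)\boldsymbol{\sigma}.
\end{equation*}
Expanding in the orthonormal eigenbasis $\{\boldsymbol{v}_i\}$ of $M(\alpha)$ with $\sum_i(\boldsymbol{\sigma}^{\top}\boldsymbol{v}_i)^2=N$, exactly as in Eq.~\eqref{eq:Hising}, yields
\begin{equation*}
H(\boldsymbol{\sigma})=c(\alpha)-\tfrac N2\lambda_{\max}+\tfrac12\sum_{i<N}\gamma_i(\boldsymbol{\sigma}^{\top}\boldsymbol{v}_i)^2,\qquad c(\alpha)=(1-2\alpha)\,\#\mathcal{E}.
\end{equation*}

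Two elementary bounds then follow. Since $\gamma_i\ge0$, the ground state satisfies $H_0\ge c(\alpha)-\tfrac N2\lambda_{\max}$. For the binarized dominant mode, $(\boldsymbol{\sigma}_{\max}^{\top}\boldsymbol{v}_{\max})^2=NS(\boldsymbol{v}_{\max})$, so the subdominant weight totals $N(1-S)$. Bounding every $\gamma_i$ by the spectral spread $\gamma_1=\lambda_{\max}(M)-\lambda_{\min}(M)$ gives
\begin{equation*}
H(\boldsymbol{\sigma}_{\max})\le c(\alpha)-\tfrac N2\lambda_{\max}+\tfrac N2(1-S)\gamma_1 .
\end{equation*}
Subtracting the two bounds,
\begin{equation*}
H(\boldsymbol{\sigma}_{\max})-H_0\le \tfrac N2(1-S)\gamma_1(\alpha).
\end{equation*}
The hypothesis $S\ge 1-\Delta H/(N\Delta_{\mathrm{HyIM}})$ is equivalent to $N(1-S)\Delta_{\mathrm{HyIM}}(\alpha)\le\Delta H$. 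Hence it suffices to have the spectral comparison
\begin{equation*}
\tfrac12\gamma_1(\alpha)\le \Delta_{\mathrm{HyIM}}(\alpha),
\end{equation*}
for then $H(\boldsymbol{\sigma}_{\max})\le H_0+\Delta H=H_1$. If equality occurs at a value strictly between levels, the discreteness of the Ising spectrum still places $\boldsymbol{\sigma}_{\max}$ at or below $H_1$.

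It remains to produce the interval. Writing $\Delta_{\mathrm{HyIM}}=\tfrac12[\lambda_{\max}(J)+\lambda_{\max}(M)]$ and $\tfrac12\gamma_1=\tfrac12[\lambda_{\max}(M)-\lambda_{\min}(M)]$, the comparison reduces to
\begin{equation*}
-\lambda_{\min}(M(\alpha))\le\lambda_{\max}(J(\phi^{\ast},\alpha)).
\end{equation*}
I would define $\alpha_m$ as the supremum of the $\alpha$ for which this holds. The content of the theorem is then that this set contains an initial segment $[0,\alpha_m]$, which requires two checks.
\begin{itemize}
\item \textbf{Check at $\alpha=0$.} Apply the gauge $S_\sigma=\operatorname{diag}(\boldsymbol{\sigma}^{\ast})$ to $J$. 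At $\alpha=0$ this turns $J_{\mathrm{OIM}}^{\{0,\pi\}}$ into a signed Laplacian-type matrix sharing the off-diagonal structure of $-M(0)$, with diagonal corrections of sign controlled by $\#\mathcal{N}^{\pm}_i$. Then use Weyl's inequality together with the Rayleigh quotient of $J$ evaluated at the eigenvector $\boldsymbol{v}_{\min}(M)$.
\item \textbf{Continuity and shape.} Both sides are continuous in $\alpha$: $\lambda_{\max}(J)$ is convex (Supplementary Note~7), and $-\lambda_{\min}(M)$ is monotone in $\alpha$ by the same Weyl argument as Eq.~\eqref{eq:inequality_alpha}, since $-M$ is increasing in $\alpha$ in the Loewner sense. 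This shows the set of admissible $\alpha$ contains an interval starting at $0$.
\end{itemize}

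I expect the main obstacle to be the comparison $-\lambda_{\min}(M)\le\lambda_{\max}(J)$ at $\alpha=0$. There the trivial-state Jacobian is $D-W$ and the Ising-state Jacobian is a gauge-transformed Laplacian, so it is not clear in general that the inequality holds with a margin rather than only up to the diagonal correction terms. If it fails there, I would instead define $\alpha_m$ directly as the supremum of the admissible set, as the main text does, and state nonemptiness as an assumption checked numerically.
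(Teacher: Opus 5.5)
Your main chain is the paper's argument, written more transparently. It uses the same expansion of $H$ in the eigenbasis of $M(\alpha)=(1-2\alpha)D-W$ and the same lower bound on $H_0$. Your reduction $\tfrac12\gamma_1\le\Delta_{\mathrm{HyIM}}\iff-\lambda_{\min}(M(\alpha))\le\lambda_{\max}(J(\phi^\ast,\alpha))$ is exactly the paper's sufficient condition $\widetilde\lambda_{\min}(\alpha)+\lambda_{\max}(J^{\boldsymbol{\Phi}}_{\mathrm{HyIM}})\ge0$; the paper's add-and-subtract of $\lambda_{\max}(J^{\boldsymbol{\Phi}}_{\mathrm{HyIM}})I$ amounts to bounding every $\gamma_i$ by $2\Delta_{\mathrm{HyIM}}$, as you do.

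The gap is the step you leave open. You call the comparison at $\alpha=0$ the main obstacle and offer to replace it by a numerically checked assumption, so the proposal never shows the admissible set is nonempty. The check takes one line and needs no gauge transform. For the nonnegative (antiferromagnetic) weights assumed here, $M(0)=D-W$ is the graph Laplacian, so $\lambda_{\min}(M(0))=0$ with eigenvector $\mathbf{1}$. Every row of $J(\phi^\ast,0)=J^{\{0,\pi\}}_{\mathrm{OIM}}$ sums to zero: the diagonal is $\sum_k W_{ik}\sigma_i\sigma_k$ and the off-diagonal is $-W_{ij}\sigma_i\sigma_j$, reflecting the OIM's invariance under a global phase shift. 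Hence $\lambda_{\max}(J(\phi^\ast,0))\ge\mathbf{1}^{\top}J(\phi^\ast,0)\mathbf{1}/N=0=-\lambda_{\min}(M(0))$. This is your own Rayleigh-quotient idea evaluated at $\boldsymbol{v}_{\min}(M(0))\propto\mathbf{1}$, and it is what the paper relies on ($D-W\succeq0$ together with $\lambda_{\max}(J^{\boldsymbol{\Phi}}_{\mathrm{HyIM}})\ge0$).

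Your continuity-and-shape step also overclaims. The inequality at $\alpha=0$ can hold with equality, and then continuity yields nothing. Since $-\lambda_{\min}(M(\alpha))$ increases with $\alpha$ while $\lambda_{\max}(J(\phi^\ast,\alpha))$ is merely convex, the condition can fail for every $\alpha>0$. For a connected $d$-regular bipartite graph, $J(\phi^\ast,\alpha)=-D+(1-2\alpha)W$. For $\alpha\le\tfrac12$ this gives $\lambda_{\max}(J)=-2\alpha d$ against $-\lambda_{\min}(M)=2\alpha d$, and the condition also fails on $(\tfrac12,1]$, so the admissible set is $\{0\}$.

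Moreover, $\lambda_{\max}(J(\phi^\ast,\alpha))+\lambda_{\min}(M(\alpha))$ is convex plus concave, so the admissible set need not be connected. Its supremum is therefore the wrong $\alpha_m$; take instead the right endpoint of the component containing $0$. With these fixes your argument proves the statement exactly as strongly as the paper's proof does (the paper glosses over the same point with its concavity remark): an interval $[0,\alpha_m]$ exists, possibly with $\alpha_m=0$.
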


\begin{proof}
    We show that that the upper bound to $H(\boldsymbol{\sigma}_{max})$ is lesser than a lower bound to $H_{1}$. To do this, we add and subtract the matrix $(1-2\alpha)D$ from Eq. \ref{eq:Alignment_hamiltonian} and rewriting $J_{\text{HyIM}}(\frac{\pi}{2},\alpha) = (1-2\alpha)D -W$, we obtain,
\begin{align}
H(\boldsymbol{\sigma}) &= -\frac{1}{2} \boldsymbol{\sigma}^{\top} J_{\text{HyIM}}(\frac{\pi}{2},\alpha) \boldsymbol{\sigma} + (1-2\alpha)\# \mathcal{E}(\mathcal{G}),
\end{align}
and define

\begin{align}
    H'(\boldsymbol{\sigma}) = H(\boldsymbol{\sigma}) - (1-2\alpha)\#\mathcal{E}(\mathcal{G}),
    \label{eq:off_set_Ising}
\end{align}
$H'$ can be considered as an $H$ with an offset. As the matrix $J_{\text{HyIM}}(\frac{\pi}{2},\alpha)$ is symmetric, its eigenvalue-eigenvector pair $\{(\widetilde\lambda_{i}(\alpha), \boldsymbol{v}_{i})\}_{i=1}^{N}$ with the following ordering $\widetilde\lambda_{\min}(\alpha) = \widetilde\lambda_{1}(\alpha)\leq \widetilde\lambda_{2}(\alpha)\leq \hdots\leq \widetilde\lambda_{N}(\alpha) = \widetilde\lambda_{\max}(\alpha)$.  Thus, $J_{\text{HyIM}}(\frac{\pi}{2},\alpha)$ admits the spectral decomposition $J_{\text{HyIM}}(\frac{\pi}{2},\alpha) = \sum_{i=1}^N \widetilde\lambda_i(\alpha) \boldsymbol{v}_i \boldsymbol{v}_i^{\top}$. Substituting this decomposition, we obtain,

\begin{align}
H'(\boldsymbol{\sigma})  &= -\dfrac{1}{2}  \sum_{i=1}^N \widetilde\lambda_i(\alpha) (\boldsymbol{\sigma}^{\top}\cdot \boldsymbol{v}_i)^2  \label{eq:part1}\\
&= -\dfrac{1}{2}  \left(\widetilde\lambda_{\max}(\alpha) (\boldsymbol{\sigma}^{\top}\cdot \boldsymbol{v}_{\max})^2 + \sum_{i=1}^{N-1} \widetilde\lambda_i(\alpha) (\boldsymbol{\sigma}^{\top}\cdot \boldsymbol{v}_i)^2\right) \notag \\ &= -\dfrac{1}{2} \left( \widetilde\lambda_{\max}(\alpha) \left( N - \sum_{i=1}^{N-1} (\boldsymbol{\sigma}^{\top} \boldsymbol{v}_i)^2 \right) 
    + \sum_{i=1}^{N-1} \widetilde\lambda_i(\alpha) (\boldsymbol{\sigma}^{\top} \boldsymbol{v}_i)^2 \right) \notag \\
    &= -\dfrac{1}{2} N \widetilde\lambda_{\max}(\alpha) + \dfrac{1}{2} \sum_{i=1}^{N-1} \gamma_i (\boldsymbol{\sigma}^{\top} \boldsymbol{v}_i)^2, \label{eq:Hising1}
\end{align}
where $\gamma_i = \widetilde\lambda_{\max}(\alpha)-\widetilde\lambda_i(\alpha)$ and $\gamma_i \geq \gamma_{i+1} \geq 0$, and from Eq.~\eqref{eq:off_set_Ising}, we can get
\begin{align}
    H_{0} \geq -\dfrac{N}{2}\widetilde\lambda_{\max}(\alpha) + (1-2\alpha)\#\mathcal{E}(\mathcal{G}). \label{eqn: inequality on the ground state}
\end{align}
To derive an upper bound on $H'(\boldsymbol{\sigma})$, we further analyze Eq.~\eqref{eq:part1} by adding and subtracting the term $\tfrac{N}{2}\,\lambda_{\max}\!\big(J_{\text{HyIM}}(\phi^{\ast}, \alpha)\big)$. For notational simplicity, we denote $
J_{\text{HyIM}}(\phi^{\ast}, \alpha) \equiv J^{\boldsymbol{\Phi}}_{\text{HyIM}} .
$

\begin{align}
H'(\boldsymbol{\sigma}) &= -\dfrac{1}{2} \boldsymbol{\sigma}^{\top} \Biggr( \lambda_{\max}\big(J^{\boldsymbol{\Phi}}_{\text{HyIM}}\big)I + \sum_{i=1}^N \widetilde\lambda_i(\alpha) \boldsymbol{v}_i \boldsymbol{v}_i^{\top} \Biggr) \boldsymbol{\sigma}  + \lambda_{\max}\big(J^{\boldsymbol{\Phi}}_{\text{HyIM}}\big)\dfrac{N}{2} \notag \\
&=   - \dfrac{1}{2} \lambda_{\max}\big(J^{\boldsymbol{\Phi}}_{\text{HyIM}} \big)\biggr(\sum_{i=1}^{N}(\boldsymbol{v}_{i}^{\top}\boldsymbol{\sigma})^{2}\biggr) -\dfrac{1}{2} \sum_{i=1}^{N}\widetilde\lambda_{i}(\alpha) (\boldsymbol{v}_{i}^{\top} \boldsymbol{\sigma})^{2} +  \lambda_{\max}\big(J^{\boldsymbol{\Phi}}_{\text{HyIM}}\big)\dfrac{N}{2} \notag \\
&= -\dfrac{1}{2} \sum_{i=1}^{N}\Big(\widetilde\lambda_{i}(\alpha) + \lambda_{\max}\big(J^{\boldsymbol{\Phi}}_{\text{HyIM}}\big)\Big)(\boldsymbol{v}_{i}^{\top} \boldsymbol{\sigma})^{2}  + \lambda_{\max}\big(J^{\boldsymbol{\Phi}}_{\text{HyIM}}\big)\dfrac{N}{2} \label{eqn:H'simplificationn} \\
&\leq -\dfrac{1}{2}\Big(\widetilde\lambda_{\max}(\alpha) + \lambda_{\max}\big(J^{\boldsymbol{\Phi}}_{\text{HyIM}}\big)\Big)(\boldsymbol{v}_{\max}^{\top}\boldsymbol{\sigma})^2 + \lambda_{\max}\big(J^{\boldsymbol{\Phi}}_{\text{HyIM}}\big)\dfrac{N}{2} \label{eqn:H'simplification2}
\end{align}
The inequality in Eq.~\eqref{eqn:H'simplification2} follows from Eq.~\eqref{eqn:H'simplificationn} by restricting $\alpha$ to values that satisfy the following inequality. 
\begin{equation}
\begin{aligned}
\sum_{i=1}^{N}
\Big(\widetilde{\lambda}_i(\alpha)
+ \lambda_{\max}\big(J^{\boldsymbol{\Phi}}_{\text{HyIM}}\big)\Big)
(\boldsymbol{v}_i^{\top}\boldsymbol{\sigma})^2
&\geq
\Big(\widetilde{\lambda}_{\max}(\alpha)
+ \lambda_{\max}\big(J^{\boldsymbol{\Phi}}_{\text{HyIM}}\big)\Big)
(\boldsymbol{v}_{\max}^{\top}\boldsymbol{\sigma})^2, 
\\
\sum_{i=1}^{N-1}
\Big(\widetilde{\lambda}_i(\alpha)
+ \lambda_{\max}\big(J^{\boldsymbol{\Phi}}_{\text{HyIM}}\big)\Big)
(\boldsymbol{v}_i^{\top}\boldsymbol{\sigma})^2
&\ge 0
\label{eq:suff}
\end{aligned}
\end{equation}

\noindent We now establish the existence of a critical value $\alpha \in [0,\alpha_m]$ such that Eq.~\eqref{eq:suff} holds. Since $(\boldsymbol{v}_i^{\top}\boldsymbol{\sigma})^2 \ge 0$ for all $i$ and $\lambda_{\max}\big(J^{\boldsymbol{\Phi}}_{\text{HyIM}}\big) \geq 0$, a sufficient condition for the
inequality to hold is
\begin{equation}
\widetilde{\lambda}_{\min}(\alpha)
+ \lambda_{\max}\big(J^{\boldsymbol{\Phi}}_{\text{HyIM}}\big)
\ge 0
\;\;\Longleftrightarrow\;\;
\widetilde{\lambda}_{\min}(\alpha) \ge -\lambda_{\max}\big(J^{\boldsymbol{\Phi}}_{\text{HyIM}})
\label{eq:suf_2}
\end{equation}
Similarly, we can show that $\widetilde\lambda_{\min}(\alpha)$ is decreasing in $\alpha$.
Consider $0 <\alpha_1 < \alpha_2<1$. Then,
\begin{align}
\widetilde\lambda_{\min}\big((1-2\alpha_1)D-W\big)
-
\widetilde\lambda_{\min}\big((1-2\alpha_2)D-W\big) &=
\widetilde\lambda_{\min}\big((1-2\alpha_1)D-W\big)
+
\widetilde\lambda_{\max}\big(-(1-2\alpha_2)D+W\big) \nonumber\\
& \geq
\widetilde\lambda_{\min}\big((1-2\alpha_1)D-W - (1-2\alpha_2)D+W\big)
\nonumber\\
&\geq
\widetilde\lambda_{\min}\big(2(\alpha_2-\alpha_1)D\big)
=
2(\alpha_2-\alpha_1)\min_{1\le i\le N} d_i
\;\ge\;0,
\end{align}
This shows that $\widetilde\lambda_{\min}(\alpha)$ is monotone decreasing in $\alpha$. We further note that when $\alpha = 0$, the HyIM model reduces to the pure OIM model. In this case, the minimum eigenvalue satisfies $\widetilde{\lambda}_{\min}(0) = 0$, (since $D - W \succeq 0).
$ Moreover, $\widetilde\lambda_{\min}(\alpha)$ is concave in $\alpha$, meaning that there exists
a range of  $\alpha \in [0,\alpha_m]$ such that the sufficient condition in Eq. \ref{eq:suf_2}
holds, which implies that there exists a range of $\alpha$ for which the 
inequality in Eq.~\eqref{eqn:H'simplification2} is satisfied.  Therefore, substituting Eq. \ref{eqn:H'simplification2}  to Eq. \ref{eq:off_set_Ising} we get,
\begin{align}
    H(\boldsymbol{\sigma}_{\max}) &= H'(\boldsymbol{\sigma}_{\max}) +(1-2\alpha) \#\mathcal{E}(\mathcal{G}) \notag \\ 
    &\leq -\dfrac{1}{2}\Big(\widetilde\lambda_{\max} + \widetilde\lambda_{\max}\big(J^{\boldsymbol{\Phi}}_{\text{HyIM}}\Big)(\boldsymbol{v}_{\max}^{\top}\boldsymbol{\sigma}_{\max})^{2} + \widetilde\lambda_{\max}\big(J^{\boldsymbol{\Phi}}_{\text{HyIM}})\dfrac{N}{2} +(1-2\alpha) \#\mathcal{E}(G) \notag \\ 
    & \leq -\dfrac{N}{2}\Big(\widetilde\lambda_{\max} + \widetilde\lambda_{\max}\big(J^{\boldsymbol{\Phi}}_{\text{HyIM}}\big)\Big) + \Delta H + \widetilde\lambda_{\max}(J^{\boldsymbol{\Phi}}_{\text{HyIM}})\dfrac{N}{2} + (1-2\alpha) \#\mathcal{E}(\mathcal{G})  \notag \\ 
    & \leq H_{1} - H_{0} - \dfrac{N}{2}\widetilde\lambda_{\max} + (1-2\alpha)\#\mathcal{E}(\mathcal{G}).\label{eq:refref}
\end{align}
Eq.~\ref{eq:refref} is obtained by substituting 
$(\boldsymbol{v}_{\text{max}}^{\top}\boldsymbol{\sigma}_{\max})^2$ using the 
relationships given in Eq.~\ref{eq:s2_def} and Eq.~\ref{eq:dos_condition}.Now we observe the following  inequality using $H_{0}\geq - \frac{N}{2}\widetilde\lambda_{\max} +  (1-2\alpha)\#\mathcal{E}(\mathcal{G})$ from \eqref{eqn: inequality on the ground state}, therefore 
\[
H(\boldsymbol{\sigma}_{\max}) \leq H_{1},
\]
which completes the proof.  
\end{proof}

\subsubsection{\textbf{Interplay Between \(\alpha_c\) and \(\alpha_m\)}}

\noindent Two critical parameters arise in the analysis:
$\alpha_c$, defined as the value at which the trivial state loses stability at $K_s = 0$,
and $\alpha_m$, defined as the maximal value of $\alpha$ for which, given the alignment condition
\begin{equation}
S(\boldsymbol{v}_{\max})
\geq S_{\mathrm{crit}}(\boldsymbol{v}_{\max})=
1-\frac{\Delta H}{N\,\Delta_{\mathrm{HyIM}}(\alpha)},
\label{eq:dos_condition_3}
\end{equation}
the dominant eigenmode satisfies
\[
H(\boldsymbol{\sigma}_{\max}) \leq H_1.
\]
Two distinct cases emerge for critical values of $\alpha_m$ and $\alpha_c$.

\begin{itemize}
\item \textbf{Case 1:} \(\alpha_m > \alpha_c\).  

In this case, there exists an interval 
\(\alpha \in [\alpha_c, \alpha_m]\) 
such that
\[
H(\boldsymbol{\sigma}_{\max}) \leq H_1.
\]
Thus, once the trivial state loses stability at $\alpha_c$, 
the dynamics evolve along the dominant eigenmode toward the ground state. 
Ground-state recovery is therefore guaranteed.

\item \textbf{Case 2:} \(\alpha_m < \alpha_c\).  

In this case, the dominant-eigenmode criterion does not guarantee ground-state recovery at the bifurcation point. 
However, the admissible range of $\alpha_m$ may extend beyond its initial threshold via modifying Eq. \ref{eq:dos_condition_3} as
\begin{align}
    S(\boldsymbol{v}_{\max})
    \geq S_{\mathrm{crit},k}(\boldsymbol{v}_{\max})=
    1 - \dfrac{\Delta H^{k}}{N\,\Delta(\mathcal{G}, \alpha)},
    \label{eqn:doscondition2}
\end{align}
where $H^{k}$ denotes the $k^{\text{th}}$ excited state and 
$\Delta H^{k} = H^{k} - H_{0}$. The extension of $\alpha_m$ can be characterized by the modified Eq. \ref{eq:refref} as
\begin{align}
    H(\boldsymbol{\sigma}_{\max})
    \leq
    H^k - H_{0}
    - \dfrac{N}{2}\widetilde{\lambda}_{\max}
    + (1-2\alpha)\#\mathcal{E}(\mathcal{G}).
    \label{eq:refref2}
\end{align}

As $\alpha$ increases, both $\widetilde{\lambda}_{\max}$ and the term 
$(1-2\alpha)\#\mathcal{E}(\mathcal{G})$ decrease monotonically. 
Consequently, the inequality can be satisfied successively for higher excited states for $k$ such that $\alpha_m^{k} > \alpha_c$, 
if $\boldsymbol{v}_{\max}$ satisfies Eq.~\eqref{eqn:doscondition2}, 
then it must correspond to a spin configuration 
$\boldsymbol{\sigma}_{\max}$ satisfying
\begin{equation}
H(\boldsymbol{\sigma}_{\max}) \leq H^{k}. \label{eq:H_sigma_less_H_k}
\end{equation}

Formally, we define $\alpha_m^{k}$ as follows:

\[
\mathcal{A}_k
=
\left\{
\alpha\in[0,1]:
S(\boldsymbol{v}_{\max}(\alpha))
\ge
1-\frac{\Delta H^k}{N\Delta_{\mathrm{HyIM}}(\alpha)}
\right\}.
\]
then, $\alpha_m^k=\sup \mathcal{A}_k $.
\end{itemize}

We now examine the relationship between $\alpha_c$ and $\alpha_m$ for bipartite graphs.

\begin{lemma}[Critical parameters ($\alpha_c,\alpha_m$) for bipartite graphs]
\label{lemma:alpha_c_alpha_m_bipartite}
Let $\mathcal{G} = (\mathcal{V}, \mathcal{E})$ be a connected bipartite graph with degree matrix $D$ and weight matrix $W$. 
Consider the $\mathrm{HyIM}$ model evaluated at $\phi = \frac{\pi}{2}$. 
\\

Then the following hold:

\begin{enumerate}
    \item[(i)] The critical parameter $\alpha_c$, defined as the $K_s$ at which the trivial state looses stability,
    \[
    \lambda_{\max}\!\big((1-2\alpha)D - W\big) = 0,
    \]
    satisfies
    \[
    \alpha_c = 1.
    \]

    \item[(ii)] The maximal admissible parameter $\alpha_m$, determined by the condition
    \[
    \sum_{i=1}^{N-1}
    \Big(\widetilde{\lambda}_i(\alpha)
    + \lambda_{\max}\big(J^{\boldsymbol{\Phi}}_{\mathrm{HyIM}}\big)\Big)
    (\boldsymbol{v}_i^{\top}\boldsymbol{\sigma})^2
    \ge 0,
    \]
    satisfies $\alpha_m = 1$ and consequently, $\alpha_c= \alpha_m=1$. 
    \end{enumerate}
\end{lemma}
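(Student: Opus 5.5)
Both parts rest on the structure already exploited in Lemma~\ref{lemma: sufficient_conditions_at_trivial_equilibrium}: for a connected bipartite graph with bipartition $V_1\cup V_2$, the partition vector $\boldsymbol{\sigma}$ ($+1$ on $V_1$, $-1$ on $V_2$) satisfies $(D+W)\boldsymbol{\sigma}=0$, since $W\boldsymbol{\sigma}$ at node $i$ equals $-\sigma_i d_i$. Equivalently, with $S=\operatorname{diag}(\boldsymbol{\sigma})$ we have $SWS=-W$, so $D-W=S(D+W)S$ and $D+W$ is unitarily similar to the Laplacian $D-W$. Hence $D+W\succeq 0$, $D-W\succeq 0$, and both have a one-dimensional kernel by connectivity, spanned by $\boldsymbol{\sigma}$ and $\mathbf 1$ respectively.

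For part (i), I would combine these facts with the monotonicity already established in Supplementary Note~7. There, Eq.~\eqref{eq:inequality_alpha} shows via Weyl that
\[
\lambda_{\max}\big((1-2\alpha)D-W\big)
\]
is nonincreasing in $\alpha$, and in fact strictly decreasing with slope at most $-2\min_i d_i<0$, because every degree is positive in a connected graph. At $\alpha=1$ the matrix is $-D-W=-(D+W)$, which is negative semidefinite with $\lambda_{\max}=0$ attained at $\boldsymbol{\sigma}$. Strict decrease then gives $\lambda_{\max}>0$ for every $\alpha<1$, so the unique zero is $\alpha_c=1$. Equivalently, $K^{\{\pi/2\}}_{s,\mathrm{HyIM}}(\alpha)<0$ for $\alpha<1$ and equals $0$ at $\alpha=1$.

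For part (ii), the plan is to show that the sufficient condition \eqref{eq:suf_2} holds on all of $[0,1]$, so that $\alpha_m=1$.
\begin{itemize}
\item First, bound $\widetilde\lambda_{\min}(\alpha)$ from below. Since $\widetilde\lambda_{\min}$ is concave in $\alpha$ (it is the minimum of linear functions $\boldsymbol{x}^\top((1-2\alpha)D-W)\boldsymbol{x}$), it is bounded below by the linear interpolation of its endpoint values. The endpoint values are $\widetilde\lambda_{\min}(0)=0$ and $\widetilde\lambda_{\min}(1)=\lambda_{\min}(-(D+W))=-\lambda_{\max}(D+W)=-\lambda_{\max}(D-W)$.
\item Second, evaluate the ground-state Jacobian. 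Take the ground state $\boldsymbol{\Phi}$ given by the bipartition, so that $\sigma_i\sigma_j=-1$ on every edge. From Eq.~\eqref{eqn: D matrix}, both the DIM and OIM parts give $J^{\boldsymbol{\Phi}}_{\mathrm{HyIM}}=-D+W$, independently of $\alpha$, because $\cos(\Phi_i+\Phi_j)=\cos(\Phi_i-\Phi_j)=\sigma_i\sigma_j$. Therefore $\lambda_{\max}(J^{\boldsymbol{\Phi}})=\lambda_{\max}(W-D)=\lambda_{\max}(D+W)$, the last equality by the similarity $S(W-D)S=-(D+W)$ reflected, i.e.\ the spectrum of $W-D$ is that of $-(D-W)$.
\end{itemize}

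Here I expect a sign subtlety. By the similarity, $W-D$ is negative semidefinite with $\lambda_{\max}=0$, so the naive bound from \eqref{eq:suf_2} gives only $\widetilde\lambda_{\min}(\alpha)\ge 0$, which fails for $\alpha>0$. That is the main obstacle. I would therefore not use the crude bound \eqref{eq:suf_2}, but verify the displayed condition in the lemma directly for $\boldsymbol{\sigma}=\boldsymbol{\sigma}_{\max}$. Since $(1-2\alpha)D-W=(1-\alpha)(D-W)-\alpha(D+W)$ and $\boldsymbol{\sigma}$ annihilates $D+W$, we get $((1-2\alpha)D-W)\boldsymbol{\sigma}=(1-\alpha)(D-W)\boldsymbol{\sigma}=2(1-\alpha)D\boldsymbol{\sigma}$. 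At $\alpha=1$ this vanishes, so $\boldsymbol{\sigma}$ is the top eigenvector $\boldsymbol{v}_{\max}$ and all projections $(\boldsymbol{v}_i^\top\boldsymbol{\sigma})^2$ with $i<N$ are zero; the sum is then trivially $\ge 0$ with equality. I would argue that $\alpha_m$, defined as the supremum of admissible $\alpha$, is attained at this endpoint, and that it cannot exceed $1$ by the domain $[0,1]$. Hence $\alpha_m=1=\alpha_c$.

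Making precise that the sup is attained at $1$, rather than only showing that $1$ is admissible, is the delicate point. I would resolve it by noting that $\alpha_m\le 1$ by definition and that $1$ itself is admissible.
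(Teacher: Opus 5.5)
Your argument is correct. Part (ii) is essentially the paper's proof, while part (i) takes a genuinely different route.

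\textbf{Part (ii).} At $\alpha=1$ the matrix $(1-2\alpha)D-W$ becomes $-(D+W)$. Its top eigenspace is spanned by the bipartition vector, so every projection $(\boldsymbol{v}_i^{\top}\boldsymbol{\sigma})^2$ with $i<N$ vanishes. The displayed sum is therefore $0$, so $\alpha=1$ is admissible, and $\alpha_m$, being a supremum over $[0,1]$, equals $1$. Two small improvements over the paper: you use connectivity to make explicit that this top eigenspace is one-dimensional, which the paper leaves implicit. You also check only the displayed sum condition, which avoids the paper's extra check of $S\ge S_{\mathrm{crit}}$ at $\alpha=1$. That check is delicate because both thresholds vanish there for bipartite graphs, so $\Delta_{\mathrm{HyIM}}(1)=0$ and $S_{\mathrm{crit}}$ has a zero denominator.

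\textbf{Part (i).} The paper computes $\bigl((1-2\alpha)D-W\bigr)\boldsymbol{\sigma}^{\ast}=2(1-\alpha)D\boldsymbol{\sigma}^{\ast}$ and declares $\boldsymbol{\sigma}^{\ast}$ an eigenvector with top eigenvalue $2(1-\alpha)d$. This is literally valid only for $d$-regular bipartite graphs: otherwise $D\boldsymbol{\sigma}^{\ast}$ is not a multiple of $\boldsymbol{\sigma}^{\ast}$ for $\alpha<1$. Its maximality for $\alpha<1$ is also asserted rather than shown. In exchange, the paper's computation gives a closed-form trivial-state threshold in the regular case.

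Your route combines two facts:
\begin{itemize}
\item the strict Weyl monotonicity of $\lambda_{\max}\bigl((1-2\alpha)D-W\bigr)$ from Supplementary Note~7, with slope at most $-2\min_i d_i<0$;
\item $\lambda_{\max}\bigl(-(D+W)\bigr)=0$, via the similarity $S(D+W)S=D-W$.
\end{itemize}
This covers every connected bipartite graph and also gives uniqueness of the zero crossing. The paper's step could alternatively be repaired with a Rayleigh quotient, since $\boldsymbol{\sigma}^{\ast\top}\bigl((1-2\alpha)D-W\bigr)\boldsymbol{\sigma}^{\ast}=2(1-\alpha)\sum_i d_i>0$ for $\alpha<1$.

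\textbf{A correction to your abandoned detour.} These errors do not affect your final proof, but you should delete or fix those lines.
\begin{itemize}
\item $J^{\boldsymbol{\Phi}}_{\mathrm{HyIM}}$ is not independent of $\alpha$. The off-diagonal Jacobian entries of the phase-sum and phase-difference terms have opposite signs. At the bipartite ground state this gives $J^{\boldsymbol{\Phi}}_{\mathrm{HyIM}}=\alpha(-D-W)+(1-\alpha)(-D+W)=-D+(1-2\alpha)W$.
\item The equality $\lambda_{\max}(W-D)=\lambda_{\max}(D+W)$ is false: the left side is $0$ and the right side is positive.
\end{itemize}
With the correct formula, $D\pm W\succeq 0$ gives $\lambda_{\max}(J^{\boldsymbol{\Phi}}_{\mathrm{HyIM}})\le 0$, strictly negative for $\alpha\in(0,1)$. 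Since $\widetilde\lambda_{\min}(\alpha)<0$ for all $\alpha>0$, the crude condition $\widetilde\lambda_{\min}(\alpha)\ge-\lambda_{\max}(J^{\boldsymbol{\Phi}}_{\mathrm{HyIM}})$ fails for every $\alpha>0$. So your conclusion that this route cannot deliver $\alpha_m=1$ still stands.
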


\begin{proof}
\mbox{}\\
\begin{enumerate}
    \item[(a)] \textbf{ Derivation of $\alpha_c$ for a bipartite graph.} The critical value $\alpha_c$ is determined by the condition $K_s = 0$, 
which corresponds to the point where the trivial state loses stability 
and the dominant eigenmode emerges. From the definition,
\begin{equation}
K_{s,\text{HyIM}}^{\{\frac{\pi}{2}\}}(\alpha)
=
-\frac{K}{2}\,
\lambda_{\max}\!\big((1-2\alpha)D - W\big).
\end{equation}
Hence, $\alpha_c$ satisfies
\[
\lambda_{\max}\!\big((1-2\alpha)D - W\big) = 0.
\]
For a bipartite graph, the eigenvector associated with the largest eigenvalue 
is aligned with the ground-state configuration (see Lemma~\ref{lemma: sufficient_conditions_at_trivial_equilibrium} 
in Supplementary Note~3). Due to the block structure of a bipartite graph,
this eigenvector can be written as
\[
\boldsymbol{v}_{\max} \propto \boldsymbol{\sigma}^{\ast}
=
\begin{pmatrix}
\mathbf{1}_m \\
-\mathbf{1}_n
\end{pmatrix}.
\]
We evaluate the action of $(1-2\alpha)D - W$ on this vector:
\begin{align}
\big((1-2\alpha)D - W\big)
\begin{pmatrix}
\mathbf{1}_m \\
-\mathbf{1}_n
\end{pmatrix}
&=
\begin{pmatrix}
(1-2\alpha)D_1 & -W \\
- W^{\top} & (1-2\alpha)D_2
\end{pmatrix}
\begin{pmatrix}
\mathbf{1}_m \\
-\mathbf{1}_n
\end{pmatrix} \notag \\
&=
(1-2\alpha)
\begin{pmatrix}
D_1 \mathbf{1}_m \\
- D_2 \mathbf{1}_n
\end{pmatrix}
+
\begin{pmatrix}
W \mathbf{1}_n \\
- W^\top \mathbf{1}_m
\end{pmatrix}.
\end{align}
Here, $D_1$ and $D_2$ denote the diagonal degree matrices corresponding to the two partitions of the bipartite graph. Since the graph is bipartite, the degree relations satisfy
\[
D_1 \mathbf{1}_m = W \mathbf{1}_n,
\qquad
D_2 \mathbf{1}_n = W^\top \mathbf{1}_m.
\]
Substituting these identities yields
\begin{align}
\big((1-2\alpha)D - W\big)
\boldsymbol{\sigma}^{\ast}
&=
(1-2\alpha)
\begin{pmatrix}
D_1 \mathbf{1}_m \\
- D_2 \mathbf{1}_n
\end{pmatrix}
+
\begin{pmatrix}
D_1 \mathbf{1}_m \\
- D_2 \mathbf{1}_n
\end{pmatrix} \notag \\
&=
2(1-\alpha)
\begin{pmatrix}
D_1 \mathbf{1}_m \\
- D_2 \mathbf{1}_n
\end{pmatrix} \notag \\
&=
2(1-\alpha) D\,\boldsymbol{\sigma}^{\ast}.
\end{align}
Thus, $\boldsymbol{\sigma}^{\ast}$ is an eigenvector with eigenvalue
\[
\lambda_{\max} = 2(1-\alpha)d,
\]
where $d$ denotes the degree (or, more generally, the action of $D$ on the vector). The largest eigenvalue vanishes when $2(1-\alpha_c) = 0$ which implies $\alpha_c =1$.

\item[(b)] \textbf{ Derivation of $\alpha_m$ for a bipartite graph.}\\

The critical value $\alpha_m$ is determined by the sufficient condition
given in Eq.~\eqref{eq:suff}, namely
\begin{equation}
\sum_{i=1}^{N-1}
\Big(\widetilde{\lambda}_i(\alpha)
+ \lambda_{\max}\big(J^{\boldsymbol{\Phi}}_{\text{HyIM}}\big)\Big)
(\boldsymbol{v}_i^{\top}\boldsymbol{\sigma})^2
\ge 0.
\label{eq:suff2}
\end{equation}
From Lemma~\ref{lemma: sufficient_conditions_at_trivial_equilibrium}
(Supplementary Note~3), for bipartite graphs the case $\alpha=1$ (i.e., HyIM reduces to pure DIM)we obtain
\[
(\boldsymbol{v}_i^{\top}\boldsymbol{\sigma})^2 = 0,
\qquad i = 1,2,\ldots,N-1.
\]
We now verify whether the synchronization condition   
\begin{equation}
S(\boldsymbol{v}_{\max})
=\frac{1}{N}
(\boldsymbol{v}_{\max}^{\top}\boldsymbol{\sigma}_{\max})^2
\geq
1-\frac{\Delta H}{N\,\Delta_{\mathrm{HyIM}}(\alpha)}
\label{eq:dos_condition_bipartite}
\end{equation}
holds for bipartite graphs at $\alpha=1$. This is equivalent to showing that
\begin{equation}
\frac{1}{N}
(\boldsymbol{v}_{\max}^{\top}\boldsymbol{\sigma}_{\max})^2
-
1
+
\frac{\Delta H}{N\,\Delta_{\mathrm{HyIM}}(\alpha)}
\geq 0.
\end{equation}
For bipartite graphs, the dominant eigenvector $\boldsymbol{v}_{\max}$ coincides with 
$\boldsymbol{\sigma}_{\max}$ at $\alpha=1$, implying
\[
\frac{1}{N}
(\boldsymbol{v}_{\max}^{\top}\boldsymbol{\sigma}_{\max})^2 = 1.
\]
Hence, the condition reduces to
\begin{equation}
\frac{\Delta H}{N\,\Delta_{\mathrm{HyIM}}(\alpha)} \geq 0.
\end{equation}
Since both $\Delta H$ and $\Delta_{\mathrm{HyIM}}(\alpha)$ are non-negative,
the inequality satisfies. Therefore, the synchronization condition
holds for $\alpha =1 $, implying
\[
\alpha_m = 1.
\]

\end{enumerate}

From parts \textbf{(a)} and \textbf{(b)}, since $\alpha_c = 1$ and $\alpha_m = 1$, we conclude that $\alpha_c = \alpha_m = 1$ which completes the proof.
\end{proof}
\noindent We now illustrate the synchronization measures 
$S(\boldsymbol{v}_{\max})$ and 
$S_{\mathrm{crit}}(\boldsymbol{v}_{\max})$ 
for M\"obius ladder graphs for the first excited state 
as a function of $\alpha \in [0,1]$, 
for system sizes $6 \leq N \leq 16$ in Fig. \ref{fig:mobius}.

\begin{figure}
    \centering
    \includegraphics[ scale = 0.55]{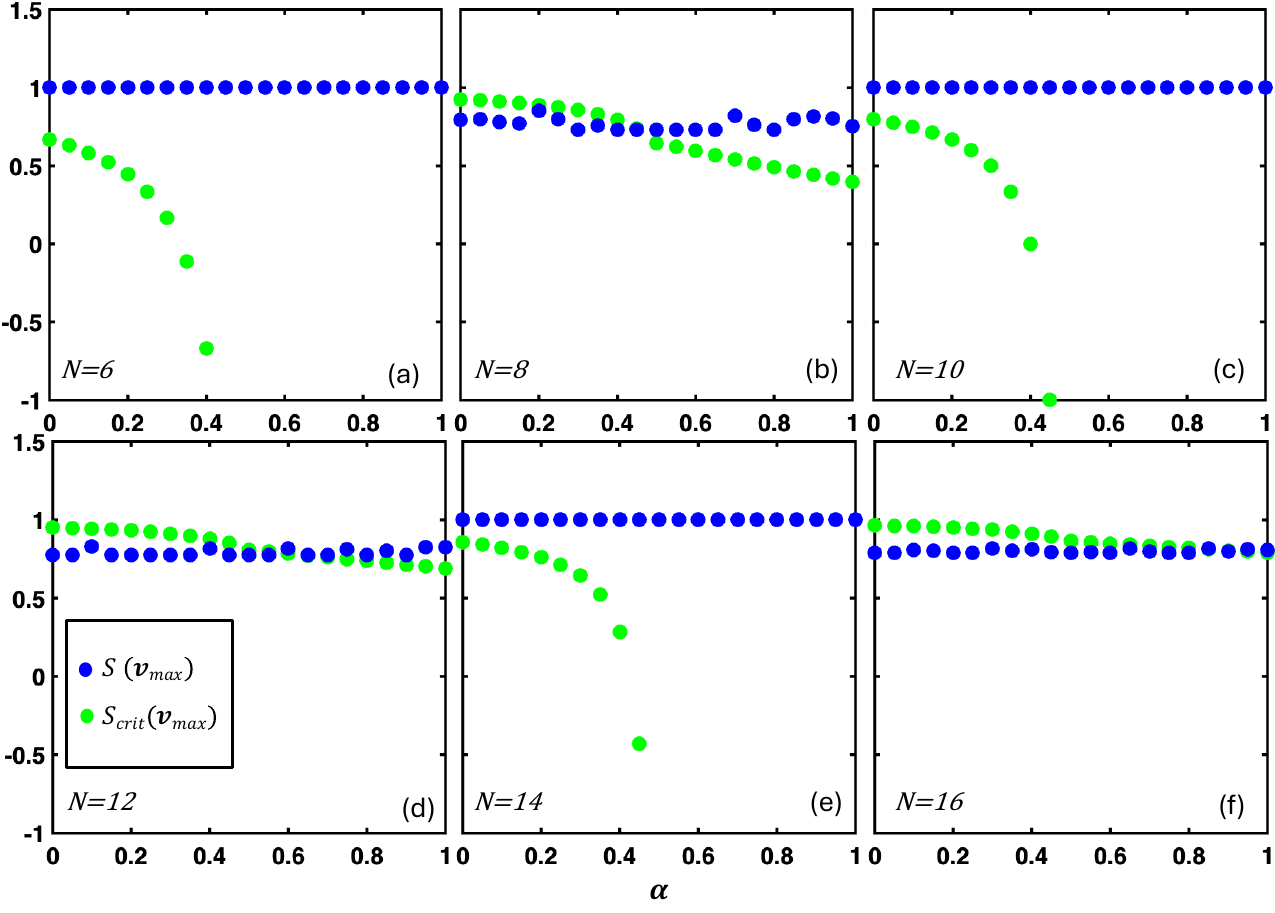}
    \caption{ \justifying \smaller Evolution of $S(\boldsymbol{v_{\mathrm{max}}})$ and $S_{crit}(\boldsymbol{v_{\mathrm{max}}})$ with $\alpha$ for M\"obius ladder graphs of different sizes 
    (a) $N=6$, (b) $N=8$, (c) $N=10$, (d) $N=12$, (e) $N=14$, and (f) $N=16$.
    }

    \label{fig:mobius}
\end{figure}
\vspace{3cm}

\noindent Figure~\ref{fig:random_graph_alignment} compares the synchronization of the dominant bifurcation mode, \(S_{\max}\equiv S(\boldsymbol{v}_{\max})\), with the critical thresholds \(S_{\mathrm{crit},k}\) associated with successive excited-state levels for the graph considered in Fig.~1 of the main manuscript. As \(\alpha\) is varied, \(S_{\max}\) exceeds several of these thresholds over an intermediate range of \(\alpha\), indicating that the bifurcation-selected configuration can be certified below the corresponding energy levels. 

\begin{figure}
    \centering
    \includegraphics[ scale = 0.6]{ 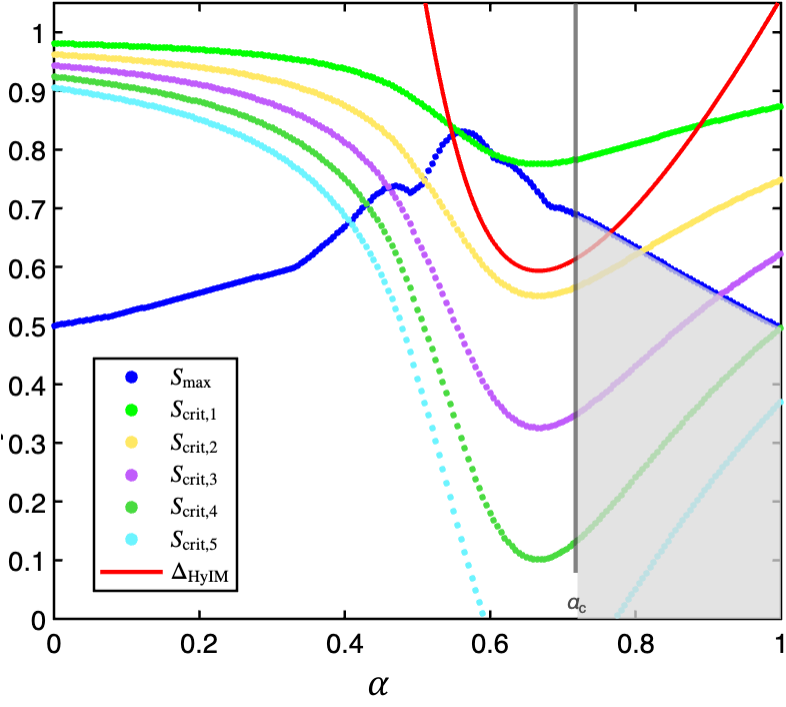}
    \caption{\justifying
     \smaller Evolution of the synchronization metric \(S_{\max}\equiv S(\boldsymbol{v}_{\max})\), the critical synchronization thresholds \(S_{\mathrm{crit},k}\), and the HyIM parameter gap \(\Delta_{\mathrm{HyIM}}\) as functions of the interpolation parameter \(\alpha\). The vertical gray line marks \(\alpha_c\), and the shaded region indicates the admissible HyIM regime \(\alpha\geq\alpha_c\), where the dominant bifurcation mode is selectively destabilized. The graph used in Fig.~1 of the main manuscript is used for the analysis.}

    \label{fig:random_graph_alignment}
\end{figure}

\clearpage
\section{Supplementary note 10}
\subsection{Simulation Parameters and Experimental Settings}
 
This section summarizes the numerical parameters and experimental settings used in all simulations presented in the main text. For each model, we specify the system parameters, initialization conditions, and time-dependent control parameters employed in the simulations. Unless otherwise stated, the parameters listed here are held fixed across all realizations to ensure consistency and reproducibility of the reported results. Table \ref{tb:suppl_table2} summarizes the simulation parameters and control protocols used to generate the results shown in Fig.~1 of the main text. For each model (DIM, OIM, CIM, and SBM).

\begin{table}[h]
\centering
\setlength{\tabcolsep}{12pt}        
\renewcommand{\arraystretch}{1.15}  
\caption{Simulation parameters used in Figure 1 of the main text.}
\label{tab:model_results}
\begin{tabular}{c c c}
\hline
\textbf{Model}  & 
\textbf{Parameters} & 
\textbf{Control Parameter ($\eta(t)$)} \\ 
\hline

DIM &  
\begin{tabular}[c]{@{}c@{}}
$K = 1$ \\[2pt]
$t_{\text{stop}} = 40$ \\[2pt]
$A_n = 10^{-3}$
\end{tabular}
& $K_s(t) = \dfrac{4t}{t_{\text{stop}}}$ \\ 
\hline

OIM &  
\begin{tabular}[c]{@{}c@{}}
$K = 1$ \\[2pt]
$t_{\text{stop}} = 40$ \\[2pt]
$A_n = 10^{-3}$
\end{tabular}
& $K_s(t) = \dfrac{4t}{t_{\text{stop}}}$ \\ 
\hline 

CIM & 
\begin{tabular}[c]{@{}c@{}}
$\xi = 0.05$ \\[2pt]
$t_{\text{stop}} = 40$ \\[2pt]
$A_n = 10^{-3}$
\end{tabular}
& $p(t) = \dfrac{4t}{t_{\text{stop}}}$ \\ 
\hline

SBM & 
\begin{tabular}[c]{@{}c@{}}
$K_e = 1.0$ \\[2pt]
$\xi_0 = 0.05$ \\[2pt]
$\Delta = 1$ \\[2pt]
$t_{\text{stop}} = 40$ \\[2pt]
$A_n = 10^{-3}$
\end{tabular}
& $p(t) = \dfrac{4t}{t_{\text{stop}}}$ \\ 
\hline

\end{tabular}
\label{tb:suppl_table2}
\end{table}
Table \ref{tb:sim_HyIM} summarizes the simulation parameters and control protocols used to generate the results shown in Fig.~2(b) of the main text.

\begin{table}[h]
\centering
\setlength{\tabcolsep}{12pt}        
\renewcommand{\arraystretch}{1.15}  
\caption{Simulation parameters used in Figure 2(b) of the main text.}
\label{tab:model_results}
\begin{tabular}{c c c}
\hline
\textbf{Model}  & 
\textbf{Parameters} & 
\textbf{Control Parameter ($\eta(t)$)} \\ 
\hline

HyIM & 
\begin{tabular}[c]{@{}c@{}}
$K = 1$ \\[2pt]
$t_{\text{stop}} = 4$ \\[2pt]
$A_n = 10^{-4}$ \\[2pt]
$K_s(t_0) = \max\!\left\{0,\; \lambda_{\max}\!\big(J_{\mathrm{HyIM}}(\tfrac{\pi}{2},\alpha)\big) - 3 \right\}.$
\end{tabular}
& $K_s(t) = K_s(t_0)+\dfrac{15t}{t_{\text{stop}}}$ \\ 
\hline

\end{tabular}
\label{tb:sim_HyIM}
\end{table}

\noindent Figure~\ref{fig:G1-G5} shows the distribution of the  $\delta$ as a function of the parameter $\alpha$ for individual graphs G1--G5 \cite{Gset}, complementing the collective distribution shown in Fig.~2(b) of the main text. For each graph, box plots summarize the results from 10 independent simulation trials at each value of $\alpha$. 

\begin{figure}[!h]
    \centering
    \includegraphics[ scale = 0.75]{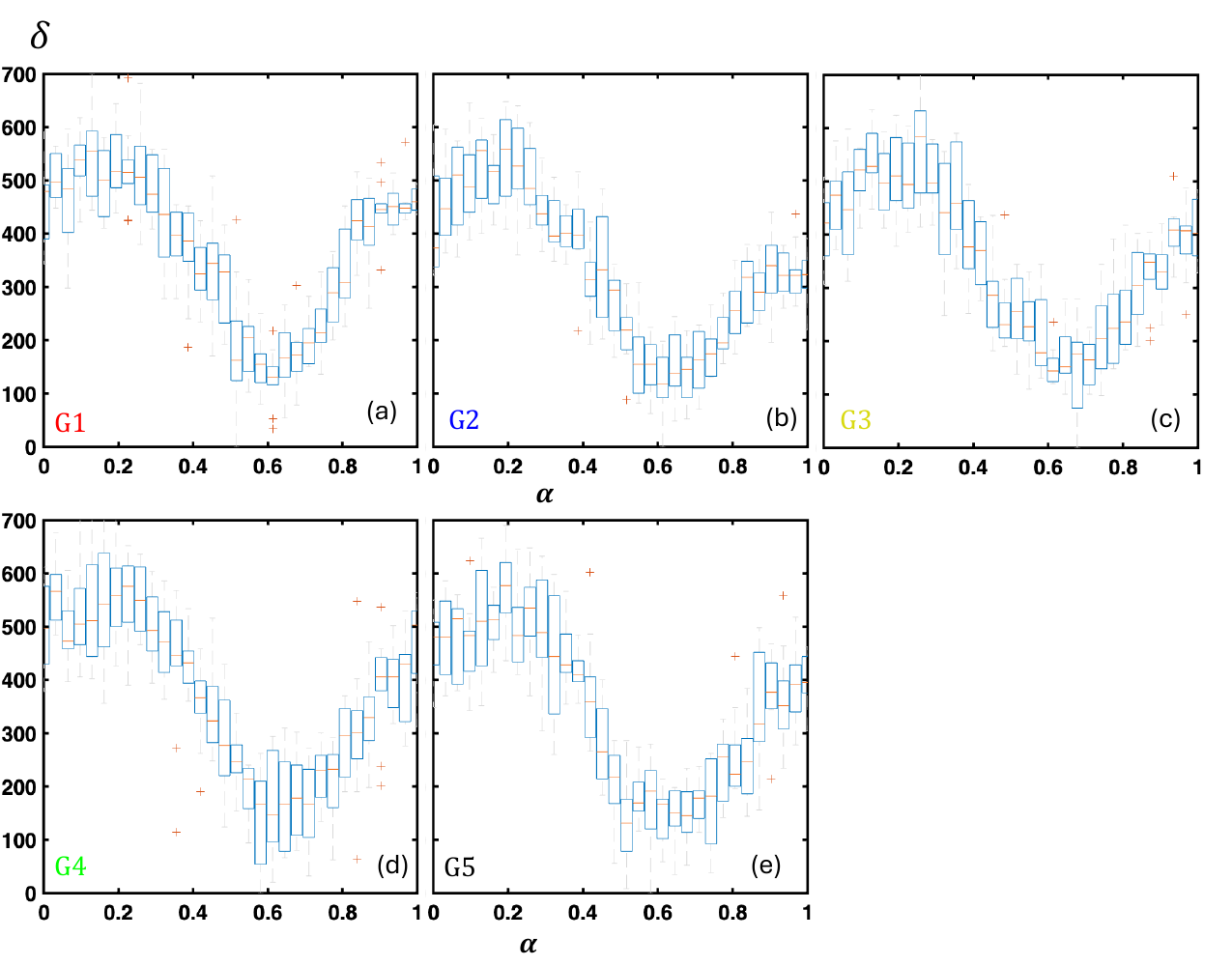}
    \caption{\justifying \smaller Individual box plots corresponding to the concatenated box-plot results presented in the main text. The box plots show the nonlinear simulation energy deviation,
\(\delta_r(\alpha)=H_r(\alpha)-H_{\min}(\alpha)\), where
\(H_r(\alpha)\) is the Ising energy obtained in the \(r^{\mathrm{th}}\) trial and
\(H_{\min}(\alpha)=\min_r H_r(\alpha)\) is the minimum energy across all trials for a given \(\alpha\). Panels (a)–(e) correspond to graphs G1–G5, respectively, each simulated over 10 independent trials.  
}
\label{fig:G1-G5}
\end{figure}
\label{appendix:A}

\clearpage

\section{Supplementary note 11}
\subsection*{Non-Negative Parameter Gap in Regularized Phase-Encoded Higher-Order Non-Quadratic Ising Gradient-Flows}
\noindent We consider a generalized higher-order Ising Hamiltonian of the form
\begin{equation}
    H_{\mathrm{HO}}(\boldsymbol{\sigma})
    =
    - \sum_{i<j} G_{ij}^{(2)} \sigma_i \sigma_j
    - \sum_{i<j<k} G_{ijk}^{(3)} \sigma_i \sigma_j \sigma_k
    - \sum_{i<j<k<l} G_{ijkl}^{(4)} \sigma_i \sigma_j \sigma_k \sigma_l
    - \cdots ,
\end{equation}
where \(G^{(k)}_{i_1\cdots i_k}\in\mathbb{R}\) denotes the coefficient of a
\(q\)-body interaction among the spins
\(\{\sigma_{i_1},\ldots,\sigma_{i_q}\}\). We consider these coefficients to be invariant under permutations of their indices, so that each unordered \(q\)-tuple represents a single interaction. Equivalently, we denote each \(q\)-body interaction by a hyperedge \(e=\{i_1,\ldots,i_q\}\), and let \(\mathcal{E}_q\) be the set of such \(q\)-body hyperedges. The Hamiltonian can then be written as
\begin{equation}
    H_{\mathrm{HO}}(\boldsymbol{\sigma})
    =
    -\sum_{q\ge 2}\sum_{e\in\mathcal{E}_q}
    G^{(q)}_e
    \prod_{i\in e}\sigma_i ,
\end{equation}
where \(\sigma_i\in\{\pm1\}\). To remain consistent with the convention used in the manuscript, we define \(W^{(q)}_e=-G^{(q)}_e\). We now consider two phase-encoded gradient-flow representations for higher-order Ising interactions.\\

\textbf{A. Balanced Phase Interactions.}
We first consider even-order interactions and then show how native odd-order interactions can be incorporated through a reference-spin embedding.\\

\textbf{Even-Order Interactions.}
As described in our prior work~\cite{Bashar2023}, higher-order Ising interactions can be represented using phase-encoded gradient flows. Consider an even-order hyperedge \(e=(i_1,\ldots,i_q)\), with \(q\in 2\mathbb{N}\), and define a signed incidence vector \(\epsilon_e\in\{0,\pm1\}^{N}\). The corresponding phase argument is
\begin{equation}
    \psi_e(\phi)=\epsilon_e^{\top}\phi .
\end{equation}
For the even-order interactions considered here, \(\epsilon_e\) is chosen to be balanced, namely
\begin{equation}
    \epsilon_e^{\top}\mathbf{1}=0.
\end{equation}
Equivalently, the phase argument contains equal numbers of positive and negative phase variables. For example, the fourth-order spin product \(\sigma_i \sigma_j \sigma_k \sigma_l\) is represented on the binary phase manifold by
\begin{equation}
    \sigma_i \sigma_j \sigma_k \sigma_l
    =
    \cos(\phi_i-\phi_j+\phi_k-\phi_l),
    \qquad \phi_i\in\{0,\pi\}.
\end{equation}
More generally, with the readout \(s_i=\cos\phi_i\), the higher-order cosine term exactly reproduces the corresponding spin product on the binary phase manifold, while away from this manifold it defines a continuous relaxation of the discrete interaction. For a fixed interaction order \(q\), we consider the higher-order phase energy
\begin{equation}
    E_{\mathrm{HO}}^{(q)}(\boldsymbol{\phi},K_s)
    =
    K\sum_{e\in\mathcal{E}_q}
    W_e^{(q)}\cos(\boldsymbol{\epsilon}_e^{\top}\boldsymbol{\phi})
    -
    \frac{K_s}{2}\sum_i\cos(2\phi_i),
\end{equation}
where \(\boldsymbol{\epsilon}_e\in\{0,\pm1\}^N\) specifies which phases enter the interaction and with what signs, so that \(\boldsymbol{\epsilon}_e^{\top}\boldsymbol{\phi}\) is the corresponding phase argument.
The associated gradient flow is
\begin{equation}
    \dot{\boldsymbol{\phi}}
    =
    -\nabla_{\boldsymbol{\phi}}E_{\mathrm{HO}}^{(q)}(\boldsymbol{\phi},K_s),
\end{equation}
such that
\begin{equation}
    \frac{dE_{\mathrm{HO}}^{(q)}}{dt}
    =
    -\left\|\nabla_{\boldsymbol{\phi}}E_{\mathrm{HO}}^{(q)}\right\|^2
    \le 0 .
\end{equation}
The second-harmonic term energetically favors the binary phase manifold
\(\phi_i\in\{0,\pi\}\), on which \(E_{\mathrm{HO}}^{(q)}\) reduces to the corresponding \(q^{\mathrm{th}}\) order Ising interaction up to an overall scale and an additive constant. The balanced condition also ensures that the trivial state
\(\boldsymbol{\phi}_{\rm triv}^{\ast}=(\pi/2)\mathbf{1}\)
is an equilibrium point of the dynamics since \(\epsilon_e^\top\phi_0=0\) for every hyperedge. To analyze the stability thresholds, define the interaction Jacobian matrix
\begin{equation}
    J_{\mathrm{HO}}^{(q)}(\boldsymbol{\phi})
    =
    \sum_{e\in\mathcal{E}_q}
    W_e^{(q)}
    \cos(\boldsymbol{\epsilon}_e^{\top}\boldsymbol{\phi})
    \boldsymbol{\epsilon}_e\boldsymbol{\epsilon}_e^{\top}.
\end{equation}
With this convention, the full Jacobian of the gradient flow is
\begin{equation}
    J_f^{(q)}(\boldsymbol{\phi},K_s)
    =
    KJ_{\mathrm{HO}}^{(q)}(\boldsymbol{\phi})
    -
    2K_s\operatorname{diag}(\cos 2\phi_i).
\end{equation}
At the trivial state \(\boldsymbol{\phi}_{\rm triv}^{\ast}=(\pi/2)\mathbf{1}\), the interaction Jacobian is
\begin{equation}
    J_{\mathrm{HO}}^{(q),\pi/2}
    =
    \sum_{e\in\mathcal{E}_q}
    W_e^{(q)}
    \boldsymbol{\epsilon}_e\boldsymbol{\epsilon}_e^{\top},
\end{equation}
while at a ground-state Ising phase configuration
\(\boldsymbol{\Phi}^{\ast}\in\{0,\pi\}^N\),
\begin{equation}
    J_{\mathrm{HO}}^{(q),\Phi^\ast}
    =
    \sum_{e\in\mathcal{E}_q}
    W_e^{(q)}
    \cos(\boldsymbol{\epsilon}_e^{\top}\boldsymbol{\Phi}^{\ast})
    \boldsymbol{\epsilon}_e\boldsymbol{\epsilon}_e^{\top}.
\end{equation}
The corresponding stability thresholds for this \(q\)-body interaction are therefore
\begin{equation}  
    \eta_{\mathrm{min}}^{(q)}
    =
    K_s^{(q),\pi/2}
    =
    -\frac{K}{2}
    \lambda_{\max}\!\left(J_{\mathrm{HO}}^{(q),\pi/2}\right),
    \qquad
    \eta_{\mathrm{max}}^{(q)}
    =
    \min_{\boldsymbol{\Phi}^{\ast}\in\mathcal{G}_{\phi}}
    \frac{K}{2}
    \lambda_{\max}\!\left(J_{\mathrm{HO}}^{(q),\Phi^\ast}\right),
\end{equation}
where \(\mathcal{G}_{\phi}\) denotes the set of ground-state binary phase configurations. The corresponding higher-order parameter gap is
\begin{equation}
    \Delta_{\mathrm{HO}}^{(q)}
    =
    \eta_{\mathrm{max}}^{(q)}
    -
    \eta_{\mathrm{min}}^{(q)} .
\end{equation}
To show that \(\Delta_{\mathrm{HO}}^{(q)}\ge0\), consider any ground-state configuration
\(\boldsymbol{\Phi}^{\ast}\in\mathcal{G}_{\phi}\) and define the corresponding configuration-wise quantity
\begin{equation}
    \Delta_{\mathrm{HO,bal}}^{(q)}(\boldsymbol{\Phi}^{\ast})
    =
    \frac{K}{2}
    \left[
    \lambda_{\max}\!\left(J_{\mathrm{HO}}^{(q),\Phi^\ast}\right)
    +
    \lambda_{\max}\!\left(J_{\mathrm{HO}}^{(q),\pi/2}\right)
    \right].
\end{equation}
It is sufficient to show that
\(\Delta_{\mathrm{HO}}^{(q)}(\boldsymbol{\Phi}^{\ast})\ge0\) for every
\(\boldsymbol{\Phi}^{\ast}\in\mathcal{G}_{\phi}\), since
\[
    \Delta_{\mathrm{HO}}^{(q)}
    =
    \min_{\boldsymbol{\Phi}^{\ast}\in\mathcal{G}_{\phi}}
    \Delta_{\mathrm{HO}}^{(q)}(\boldsymbol{\Phi}^{\ast}) .
\]
We define
\begin{equation}
    \Delta J_{\mathrm{HO}}^{(q)}
    =
    J_{\mathrm{HO}}^{(q),\Phi^\ast}
    +
    J_{\mathrm{HO}}^{(q),\pi/2}.
\end{equation}
Using the definitions of the interaction Jacobians,
\begin{equation}
    \Delta J_{\mathrm{HO}}^{(q)}
    =
    \sum_{e\in\mathcal{E}_q}
    W_e^{(q)}
    \left[
        1+\cos\!\left(\boldsymbol{\epsilon}_e^{\top}\boldsymbol{\Phi}^{\ast}\right)
    \right]
    \boldsymbol{\epsilon}_e\boldsymbol{\epsilon}_e^{\top}.
\end{equation}
Since each implemented hyperedge is balanced,
\(\boldsymbol{\epsilon}_e^{\top}\mathbf{1}=0\), we have
\begin{equation}
    \Delta J_{\mathrm{HO}}^{(q)}\mathbf{1}
    =
    \sum_{e\in\mathcal{E}_q}
    W_e^{(q)}
    \left[
        1+\cos\!\left(\boldsymbol{\epsilon}_e^{\top}\boldsymbol{\Phi}^{\ast}\right)
    \right]
    \boldsymbol{\epsilon}_e
    \left(\boldsymbol{\epsilon}_e^{\top}\mathbf{1}\right)
    =
    0 .
\end{equation}
Hence \(\Delta J_{\mathrm{HO}}^{(q)}\) has a zero eigenvalue. Since
\(\Delta J_{\mathrm{HO}}^{(q)}\) is symmetric,
\begin{equation}
    \lambda_{\max}\!\left(\Delta J_{\mathrm{HO}}^{(q)}\right)\ge 0 .
\end{equation}
By Weyl's inequality,
\begin{equation}
    \lambda_{\max}\!\left(J_{\mathrm{HO}}^{(q),\Phi^\ast}\right)
    +
    \lambda_{\max}\!\left(J_{\mathrm{HO}}^{(q),\pi/2}\right)
    \ge
    \lambda_{\max}\!\left(\Delta J_{\mathrm{HO}}^{(q)}\right)
    \ge 0 .
\end{equation}
Multiplying by \(K/2>0\), we obtain
\begin{equation}
    \Delta_{\mathrm{HO}}^{(q)}(\boldsymbol{\Phi}^{\ast})
    =
    \frac{K}{2}
    \left[
    \lambda_{\max}\!\left(J_{\mathrm{HO}}^{(q),\Phi^\ast}\right)
    +
    \lambda_{\max}\!\left(J_{\mathrm{HO}}^{(q),\pi/2}\right)
    \right]
    \ge 0 .
\end{equation}
Since this holds for every ground-state configuration
\(\boldsymbol{\Phi}^{\ast}\in\mathcal{G}_{\phi}\), taking the minimum over
ground-state configurations gives
\(\Delta_{\mathrm{HO}}^{(q)}\ge0.\)
Importantly, the above analysis accommodates both positive and negative hyperedge weights.\\

\textbf{Odd-Order Interactions.}
Native odd-order interactions can be incorporated by introducing a pinned reference spin
\(\sigma_0\). Consider an odd-order hyperedge \(e\) with \(q\in 2\mathbb{N}+1\). We define the augmented spin product by including the reference spin, 
\begin{equation}
    \sigma_0\prod_{i\in e}\sigma_i .
\end{equation}
Since \(q\) is odd, this is an even-order product over the augmented spin set \(\{\sigma_0\}\cup\{\sigma_i:i\in e\}\). If the reference spin is pinned to \(\sigma_0=+1\), the augmented product reproduces the original odd-order interaction; if \(\sigma_0=-1\), it reproduces the same interaction with the opposite sign. For example, a fifth-order product can be represented as
\begin{equation}
    \sigma_i \sigma_j \sigma_k \sigma_l \sigma_m
    \rightarrow
    \sigma_0 \sigma_i \sigma_j \sigma_k \sigma_l \sigma_m
    =
    \cos(\phi_0-\phi_i+\phi_j-\phi_k+\phi_l-\phi_m).
\end{equation}
The signs in the augmented phase argument can be chosen so that the argument is balanced, i.e., it contains equal numbers of positive and negative phase variables. Therefore, the same Jacobian-based zero-mode argument used for even-order balanced interactions applies in the augmented system. Consequently, the non-negative parameter-gap result extends to odd-order interactions after reference-spin embedding.\\

\textbf{B. Additive Phase Interactions.}
We also consider the higher-order analogue of the DIM-like phase-sum interaction. For example, on the binary phase manifold \(\phi_i\in\{0,\pi\}\),
\begin{equation}
    \sigma_i \sigma_j \sigma_k \sigma_l
    =
    \cos(\phi_i+\phi_j+\phi_k+\phi_l),
    \qquad \sigma_i=\cos\phi_i .
\end{equation}
This additive representation can be considered as the natural higher-order extension of the pairwise DIM interaction \(\cos(\phi_i+\phi_j)\). To reduce this representation to the balanced framework above, we introduce a reference phase \(\phi_0\). For a \(q\)-body hyperedge \(e\), we replace
\begin{equation}
    \cos\!\left(\sum_{i\in e}\phi_i\right)
    \quad\longrightarrow\quad
    \cos\!\left(\sum_{i\in e}\phi_i-q\phi_0\right).
\end{equation}
Let \(b_e^{(q)}\in\{0,1\}^{N}\) denote the incidence vector of the native hyperedge and define its augmented version
\[
    \bar b_e^{(q)}=(0,b_e^{(q)})\in\{0,1\}^{N+1}.
\]
With \(\tilde\phi=(\phi_0,\phi_1,\ldots,\phi_N)^\top\) and \(e_0=(1,0,\ldots,0)^\top\), define
\begin{equation}
    \tilde b_e^{(q)}
    =
    \bar b_e^{(q)}-q e_0 .
\end{equation}
Then
\begin{equation}
    \left(\tilde b_e^{(q)}\right)^\top\tilde\phi
    =
    \sum_{i\in e}\phi_i-q\phi_0,
    \qquad
    \left(\tilde b_e^{(q)}\right)^\top\mathbf{1}
    =
    q-q=0 .
\end{equation}
Thus, the reference-phase form is a balanced phase interaction in the augmented phase space. On the binary phase manifold,
\begin{equation}
    \cos\!\left(\sum_{i\in e}\Phi_i-q\Phi_0\right)
    =
    \sigma_0^q\prod_{i\in e}\sigma_i .
\end{equation}
Therefore, pinning \(\sigma_0=+1\) recovers the original additive spin product for both even and odd \(q\). The reference phase thus both balances the additive phase argument and accommodates arbitrary interaction order. Since the reference-phase embedding maps the additive interaction into the balanced class, the Jacobian zero-mode and Weyl-inequality argument established above applies directly, with \(\epsilon_e\) replaced by \(\tilde b_e^{(q)}\). Consequently, the reference-phase parameter gap for additive higher-order phase-sum interactions satisfies
\(
    \Delta_{\mathrm{HO},+}^{(q)}\ge 0 .
\)\\

\end{document}